\documentclass[10pt]{article} 
\usepackage[accepted]{tmlr}

\usepackage{amsmath,amsfonts,bm}

\def\eqref#1{equation~\ref{#1}}

\def\1{\bm{1}}

\DeclareMathAlphabet{\mathsfit}{\encodingdefault}{\sfdefault}{m}{sl}
\SetMathAlphabet{\mathsfit}{bold}{\encodingdefault}{\sfdefault}{bx}{n}

\newcommand{\E}{\mathbb{E}}

\newcommand{\dimn}{d} 

\newcommand{\tod}{\xrightarrow{\mathcal D}} 
\newcommand{\toas}{\xrightarrow{a.s.}}
\newcommand{\toLone}{\xrightarrow{L^1}}

\usepackage{hyperref}
\usepackage{cleveref}
\usepackage{url}
\usepackage{booktabs}       
\usepackage{multirow}       
\usepackage{amsfonts}       
\usepackage{nicefrac}       
\usepackage{microtype}      
\usepackage{xcolor}         
\usepackage{graphicx}
\usepackage{amsmath}
\usepackage{amssymb}
\usepackage{mathtools}
\usepackage{amsthm}
\usepackage{subcaption}

\usepackage{algorithm}
\usepackage{algpseudocode}

\usepackage{glossaries} 
\usepackage{enumitem}   

\setacronymstyle{long-short} 
\glsdisablehyper 

\newacronym{iid}{i.i.d.}{independent and identically distributed} 
\newacronym{mcmc}{MCMC}{Markov Chain Monte Carlo}

\newacronym{nnip}{NNIP}{Neural Network Interatomic Potential}
\newacronym{ace}{ACE}{Atomic Cluster Expansion}
\newacronym{mpnn}{MPNN}{Message Passing Neural Network}
\newacronym[plural={GPs}, longplural={Gaussian Processes}]{gp}{GP}{Gaussian Process}
\newacronym{nn}{NN}{Neural Network}
\newacronym{sgd}{SGD}{Stochastic Gradient Descent}
\newacronym{grace}{grACE}{graph Atomic Cluster Expansion}

\newacronym{swa}{SWA}{Stochastic Weight Averaging}
\newacronym{neb}{NEB}{Nudged Elastic Band}

\newacronym{hmc}{HMC}{Hamiltonian Monte Carlo}
\newacronym{sgld}{SGLD}{Stochastic Gradient Langevin Dynamics}
\newacronym{mp}{MP}{Marchenko–Pastur}

\newacronym{ns}{NS}{Nested Sampling}
\newacronym{gmc}{GMC}{Galilean Monte Carlo}
\newacronym{gmc-2019}{GMC-2019}{Galilean Monte Carlo (2019)}
\newacronym{rss}{RSS}{Reflective Slice Sampling}
\newacronym{ss}{SS}{Slice Sampling}
\newacronym{thmh}{THMH}{Molecular Dynamics Nested Sampling}
\newacronym{hss}{HSS}{Hamiltonian Slice Sampling}
\newacronym{irss}{IRSS}{Internal Reflective Slice Sampling}

\newtheorem{theorem}{Theorem}
\newtheorem{lemma}{Lemma}
\newtheorem{corollary}{Corollary}

\theoremstyle{remark}
\newtheorem{remark}{Remark}

\title{Nested Slice Sampling: Vectorized Nested Sampling for GPU-Accelerated Inference}

\author{\name David Yallup\thanks{Equal contribution.}\,\,\thanks{Corresponding author.} \email dy297@cam.ac.uk \\
\addr Kavli Institute for Cosmology Cambridge and Institute of Astronomy, University of Cambridge \\
\AND
\name Namu Kroupa\footnotemark[1] \email nk544@cam.ac.uk \\
\addr Cavendish Laboratory and Department of Engineering, University of Cambridge \\
\AND
\name Will Handley \email wh260@cam.ac.uk \\
\addr Kavli Institute for Cosmology Cambridge and Institute of Astronomy, University of Cambridge \\ 
}

\def\month{05}  
\def\year{2026} 
\def\openreview{\url{https://openreview.net/forum?id=5mF2eRl3gt}} 

\begin{document}

\maketitle

\begin{abstract}

    Model comparison and calibrated uncertainty quantification often require integrating over parameters, but scalable inference can be challenging for complex, multimodal targets. Nested Sampling is a robust alternative to standard MCMC, yet its typically sequential structure and hard constraints make efficient accelerator implementations difficult. This paper introduces Nested Slice Sampling (NSS), a GPU-friendly, vectorized formulation of Nested Sampling that uses Hit-and-Run Slice Sampling for constrained updates. A tuning analysis yields a simple near-optimal rule for setting the slice width, improving high-dimensional behavior and making per-step compute more predictable for parallel execution. Experiments on challenging synthetic targets, high dimensional Bayesian inference, and Gaussian process hyperparameter marginalization show that NSS maintains accurate evidence estimates and high-quality posterior samples, and is particularly robust on difficult multimodal problems where current state-of-the-art methods such as tempered SMC baselines can struggle. An open-source implementation is released to facilitate adoption and reproducibility.

    \end{abstract}
\section{Introduction}
Sampling from unnormalized probability distributions is a foundational task in machine learning
and Bayesian computation. We consider targets of the form
\begin{equation}\label{eq:target}
    P_\beta(x) = \frac{\exp(-\beta E(x))\,\Pi(x)}{Z(\beta)}\,,\qquad \beta \in [0,1],
\end{equation}
where \(x\) are parameters of interest, \(E\) is a scalar energy (e.g.\ negative log-likelihood
up to an additive constant), and \(\Pi\) is a reference density (often a prior). The normalizing
constant \(Z(\beta)\) is the partition function, given by
\begin{equation}\label{eq:integral_Z}
    Z(\beta) = \int \exp(-\beta E(x))\,\Pi(x)\,dx\,.
\end{equation}
The parameter \(\beta\) can be interpreted as an inverse-temperature/tempering parameter: at
\(\beta=0\) the target reduces to \(\Pi\), and at \(\beta=1\) it corresponds to the distribution
of interest. When \(\exp(-E(x))\) is identified as a likelihood and \(\Pi\) as a prior,
\(Z(1)\) is the marginal likelihood (evidence) in Bayesian inference~\citep{pml1Book}. Estimating
this quantity enables Bayesian model comparison~\citep{10.5555/971143}. While MCMC
methods~\citep{10.1214/ss/1177011137} are widely used for posterior inference, most standard MCMC
algorithms do not directly estimate \(Z\), motivating specialised approaches.

Nested Sampling was introduced as a generic meta-algorithm for estimating the
marginal likelihood~\citep{skilling2004nested}. At a high level, it is related to particle Monte Carlo
methods such as Sequential Monte Carlo~\citep{doucet2001sequential}, in that it evolves a
population of particles through intermediate distributions bridging between \(\Pi\) and the
target. NS saw rapid adoption in the physical sciences, first in
statistical physics~\citep{NIPS2005_9dc37271} and
cosmology~\citep{mukherjee2006nested}, and subsequently in gravitational-wave
astronomy~\citep{veitch2008bayesian}, particle
physics~\citep{trotta2008impact}, and materials
science~\citep{partay2010efficient}, where it is valued for robustness on complex and
multimodal distributions~\citep{Ashton:2022grj}. However, a single
canonical implementation strategy has not emerged, leading to a variety of implementations with
differing internal mechanics. Various components have been explored in the intervening years,
including gradient-guided variants~\citep{Feroz_2013,Cai_2022,lemos2023improvinggradientguidednestedsampling}
and implementations in autodiff-compatible frameworks~\citep{albert2020jaxnshighperformancenestedsampling,AnauMontel:2023stj}.
Among these, PolyChord~\citep{Handley:2015vkr} demonstrated that slice sampling within NS
scales to moderately high dimensions, establishing the approach we build on.
However, PolyChord and other mature NS codes (e.g.\
MultiNest~\citep{Feroz:2007kg}, UltraNest~\citep{buchner2021ultranestrobustgeneral})
carry features designed for CPU architectures---clustering heuristics, priors implemented as
unit-hypercube transforms, dynamic live-point populations, and MPI-based
parallelism---that are poorly suited to accelerator hardware.

In this work, we address this gap by providing a clear, efficient, and modern implementation of
Nested Sampling suitable for both physical science and ML practitioners. While many of the
individual ingredients (nested sampling, slice sampling, constrained MCMC) are known, existing
formulations and implementations do not automatically translate into an efficient \emph{accelerator}
algorithm: classical NS is often viewed as sequential, and constrained inner kernels typically have data-dependent
control flow that maps poorly to SIMD hardware. Our contributions are therefore targeted at making
NS \emph{practically GPU-viable} end-to-end, by exposing parallelism in the outer loop and stabilizing
the inner-kernel compute so that batched execution is effective. Our key contributions are as follows:
\begin{enumerate}[label=(\roman*)]
    \item We develop a massively parallel implementation
    of Nested Sampling that expresses the
    full update (energy evaluation, thresholding, resampling, and mutation) via vectorized execution
    targeting modern GPU hardware (\cref{sec:algorithm}).
    \item We show that using Hit-and-Run
    Slice Sampling as the constrained inner kernel, while discarding clustering heuristics, yields a
    robust and performant algorithm, and we provide a principled tuning analysis for the slice width.
    Crucially, operating near the derived optimum makes per-step costs concentrate, mitigating
    SIMD/warp divergence and enabling effective batching on accelerators (\cref{sec:nss}).
    \item We provide an empirical comparison against strong adaptive tempered SMC baselines using
    optimized implementations of both approaches (\cref{sec:experiments}). As well as an implementation that directly aligns in details and form with established SMC patterns. We demonstrate performance on challenging synthetic problems, high dimensional Bayesian inference and ML inference tasks.
\end{enumerate}
Our implementation, Nested Slice Sampling (NSS) and the underlying nested sampling framework, is designed to be composable within modern probabilistic programming ecosystems and is available as open-source software.

\section{Background and Theoretical Framework}\label{sec:theory}
Nested Sampling (NS) estimates the normalizing constant
\(
Z(\beta)=\int \exp(-\beta E(x))\,\Pi(x)\,dx
\)
and produces weighted samples for posterior expectations.
A useful way to view NS is as the combination of (i) an \emph{outer} scheme that turns
evidence estimation into a one-dimensional quadrature with automatically chosen levels,
and (ii) an \emph{inner} \emph{constrained sampling} routine that approximately draws from
\(
\Pi(x)\,\mathbf{1}_{\{E(x)<E_{\min}\}}
\).
This section reviews the background required for the outer NS construction (\cref{sec:ns}), then (in \cref{sec:constrained})
discusses existing approaches to constrained sampling and motivates slice-based updates as our default inner kernel.
Connections to particle methods and bridging distributions are summarized in \cref{sec:smc}.

\subsection{Nested Sampling}\label{sec:ns}
We follow the formulation of \citet{skilling}. \Cref{fig:ns} illustrates the core idea: NS iteratively restricts the reference density
to lower and lower energy (higher likelihood) regions, producing a sequence of nested constraint sets and associated quadrature weights.

\begin{figure}[ht]
    \vskip 0.2in
    \begin{center}
    \centerline{\includegraphics[width=0.5\columnwidth]{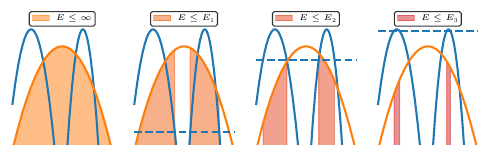}\includegraphics[width=0.5\columnwidth]{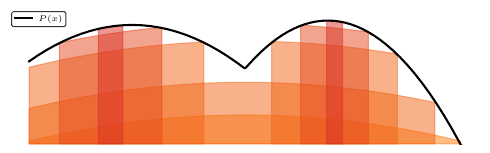}}
    \caption{Nested Sampling shrinks a reference density by successive energy constraints (left),
    yielding a quadrature approximation of the normalizing constant (right).}
    \label{fig:ns}
    \end{center}
    \vskip -0.2in
\end{figure}

Let \(\Pi(x)\) denote a reference density (typically a prior) and \(E(x)\) an energy
(e.g.\ minus log-likelihood).
Define the \emph{prior volume}
\begin{equation}
    X(E) \;:=\; \int \Pi(x)\,\mathbf{1}_{\{E(x)<E\}}\,dx \in [0,1],
\end{equation}
and let \(E(X)\) be its inverse.
Then the normalizing constant can be written as the one-dimensional integral
\begin{equation}
    Z(\beta) \;=\; \int_0^1 \exp\!\big(-\beta\,E(X)\big)\,dX,
\end{equation}
and NS approximates this integral by a quadrature over a decreasing sequence of energy
levels \(E_1 > E_2 > \cdots > E_n\) (equivalently, shrinking volumes \(1=X_0>X_1>\cdots>X_n\)):
\begin{equation}\label{eq:marginal}
    \widehat Z(\beta)
    \;=\;
    \sum_{i=1}^n \exp(-\beta E_i)\,\Delta X_i,
    \qquad
    \Delta X_i := X_{i-1}-X_i.
\end{equation}%
\footnote{Other quadrature rules (e.g.\ trapezoidal or Simpson) can be substituted with varying degrees of accuracy; in practice the evidence error is typically dominated by the stochastic estimation of the $X_i$ rather than the choice of quadrature.}
Once the pairs \(\{(E_i,\Delta X_i)\}\) are available, the same quadrature can be evaluated
for different \(\beta\) by reweighting. Operationally, NS maintains \(m\) \emph{live} particles intended to approximate draws from
the constrained reference distribution
\begin{equation}\label{eq:constrained_target}
    \Pi_{E_i}(x) \;\propto\; \Pi(x)\,\mathbf{1}_{\{E(x)<E_i\}}.
\end{equation}
At iteration \(i\), the algorithm identifies the \(k\) live particles with largest energies,
records them as \emph{dead} points at level \(E_i\), and replaces them with \(k\) new particles
approximately drawn from \(\Pi_{E_i}\).
Under the idealized assumption that live points are i.i.d.\ from \(\Pi_{E_{i-1}}\), the
shrinkage factor \(t_i := X_i/X_{i-1}\) is an order statistic, giving the standard approximation
\begin{equation}\label{eq:dlnX}
    \E[\log t_i] \approx -k/m
    \qquad\Longrightarrow\qquad
    \Delta \log X_i \approx -k/m .
\end{equation}
It is possible to derive errors on this approximation; details are given in \cref{app:ns_volume}.
The resulting estimator and its error properties have been studied
extensively~\citep{Chopin_2010,Keeton_2011,Fowlie_2023}, including extensions to dynamic live-set
sizes~\citep{Higson_2018} and likelihood plateaus~\citep{Fowlie_2021}.
When \(k>1\) points are removed per iteration, rather than treating the batch as a single
contraction step, we \emph{unroll} it into \(k\) sequential single-death events with effective
live counts \(m, m{-}1, \ldots, m{-}k{+}1\)~\citep{Fowlie_2021}, so each removal contributes
its own order-statistic volume compression (\cref{app:nlive}).
In practice, NS is terminated when an upper bound on the remaining contribution from the live
set is below a user-specified tolerance.

The outer NS construction above is agnostic to how constrained draws from \(\Pi_{E_i}\) are
generated; pseudocode for this outer kernel is given in \cref{alg:ns}. The main practical bottleneck is therefore \emph{constrained sampling} as the contour
shrinks. We focus on this in \cref{sec:constrained}.

\subsection{Constrained sampling}\label{sec:constrained}
Nested Sampling (NS) reduces evidence estimation to a sequence of constrained sampling problems.
At a given likelihood/energy threshold \(E_{\min}\), the outer NS update requires approximate
samples from the reference density \(\Pi\) restricted to a hard constraint
\citep{Buchner:2021kpm}, as expressed in \eqref{eq:constrained_target}.
As NS progresses, the feasible region typically becomes small, anisotropic, and may be
disconnected. In moderate-to-high dimensions the boundary of \(\{E(x)<E_{\min}\}\) becomes a
dominant geometric feature, so the choice of constrained sampler largely determines both the
efficiency and the practical correctness of NS.

Two broad families of constrained updates are commonly used in NS.
First, \emph{region-based rejection sampling} constructs a proposal distribution (often informed
by the current live set) and draws candidates until the constraint is satisfied.
This is effective in low-to-moderate dimensions and underlies several popular implementations
(e.g.\ ellipsoidal bounds and related schemes) \citep{Feroz:2007kg,buchner2021ultranestrobustgeneral}.
More expressive proposals can be learned from the live points, including approaches based on
normalizing flows or other ML density models \citep{Lange:2023ydq,Williams:2021qyt,Torrado:2023cbj}.
However, rejection-based strategies typically degrade as dimension grows unless the proposal
accurately matches the constrained geometry.

Second, \emph{constrained MCMC} applies a Markov transition kernel that leaves
\(\Pi_{E_{\min}}\) invariant.
This avoids explicit acceptance by a global proposal envelope, but introduces a different set
of requirements: the kernel must mix within a compact, evolving, and potentially sharp-bounded
region. Simple random-walk proposals can be dominated by boundary rejections, while naive
reflection-based constrained dynamics can be sensitive to discretization choices and, in some
implementations, exhibit dimension-dependent artifacts~\citep{kroupa2025resonances}. In \cref{app:constrained-path-samplers}
(\cref{fig:evidence-different-samplers}) we empirically compare several constrained kernels on a
controlled family of problems; slice-based constrained updates remain accurate across dimensions
in this stress test, while several reflection-based alternatives deviate as dimension increases \citep{kroupa2025resonances}.

Motivated by these considerations, we use \emph{Hit-and-Run Slice Sampling} (HRSS) as the default
inner kernel for constrained propagation.
HRSS is a variant of slice sampling \citep{nealslicesampling} specialised to high-dimensional
updates: it chooses a random direction (hit-and-run) \citep{smith1984efficient}
and then performs an exact one-dimensional slice update along the resulting chord of the
constrained set.
The algorithm with recursive bound shrinkage was formalised by \citet{kiatsupaibul2011analysis},
and scales well with dimensionality \citep{collins2013accessibility}.
This construction is well-suited to NS because it treats the hard constraint in
\eqref{eq:constrained_target} directly (by assigning zero density outside the feasible set),
and inherits favourable dimension-dependent behaviour relative to naive constrained random walks
in a range of settings \citep{Rudolf_2018,power2024weakpoincareinequalitycomparisons}.
We use a simple Gaussian family of direction proposals as a fast, robust baseline
\citep{Handley:2015vkr,Buchner_2023}, and note that richer direction/proposal mechanisms are
compatible with the same framework \citep{Moss:2019fyi}.
Both nested sampling and constrained (slice) updates exhibit adaptive, data-dependent control flow---e.g., variable iteration counts and on-the-fly stopping rules---so they are often viewed as ill-suited to GPU vectorization. In \cref{sec:algorithm} we address this and show that, with appropriate tuning and a batched nested sampling kernel, the per-step cost concentrates sufficiently to be well suited to vectorized execution, which we take forward to empirical validation in \cref{sec:experiments}.

\subsection{Particle Monte Carlo and bridging distributions}\label{sec:smc}
Nested Sampling (NS) belongs to a broader family of \emph{particle Monte Carlo} methods that
propagate a population of particles through a sequence of intermediate (``bridging'')
distributions between an easy-to-sample reference \(\Pi\) and a target of interest, while
accumulating an estimate of the normalizing constant \citep{delmoral,db01284a-0891-3c43-8c20-462b421d9d12}.
The most prominent formalism in this family is Sequential Monte Carlo (SMC), which has become a
standard tool for model comparison \citep{zhou2015automaticmodelcomparisonadaptive} and admits
highly parallel implementations \citep{murray2015parallelresamplingparticlefilter}.
A central design choice in SMC is the \emph{path}: the practitioner specifies (or adaptively
constructs) the bridging distributions \(\{\pi_t\}\), for example via tempering
\(\pi_\beta(x)\propto \Pi(x)\exp(-\beta E(x))\) or other problem-specific interpolations, with
step sizes often chosen to control weight degeneracy \citep{fearnhead2010adaptivesequentialmontecarlo}.

NS shares structural similarities with rare-event SMC
methods~\citep{cerou2012sequential}, and can formally be viewed within the SMC
framework~\citep{salomone2024unbiasedconsistentnestedsampling}, but differs in important respects.
Rather than selecting an externally parameterised annealing coordinate, NS induces a
constraint-based path
\(\pi^{\mathrm{NS}}_t(x)\propto \Pi(x)\mathbf{1}_{\{E(x)<E_t\}}\),
where the levels \(\{E_t\}\) are set by order statistics of the live set.
Crucially, NS estimates the normalizing constant via a one-dimensional quadrature over
probabilistically estimated prior volumes (\cref{sec:ns}), rather than via importance weights as
in standard SMC. While the formal SMC unification is mathematically valid, it represents a
singular case within the broader SMC framework: the probabilistic volume estimation that
underpins NS has no natural analogue in standard SMC and is responsible for many of its
distinctive properties, including the ability to reweight samples across
temperatures and the characteristic geometric compression schedule.
Consequently, NS follows a characteristic sequence of \emph{nested likelihood regions} whose
geometry is determined by the model and data (illustrated in \cref{fig:ns}), yielding a bridging
family that is essentially fixed by the NS update rule rather than chosen by the user.
This distinction is practically important: the NS path turns inference into repeated
\emph{constrained sampling} problems, which places different demands on the inner MCMC kernel
than unconstrained tempering paths.
In our experiments we use adaptive tempered SMC as a baseline; pseudocode for this outer kernel is given in \cref{alg:smc} and detailed configuration in \cref{app:adaptive_smc}.

\section{Algorithm and implementation}\label{sec:algorithm}
Having established the theoretical background, we now present a generic vectorized Nested Sampling framework and a specific performant instantiation, Nested Slice Sampling (NSS). We implement both in the
\texttt{blackjax} library~\citep{cabezas2024blackjax} within the \texttt{jax} ecosystem~\citep{jax2018github}, closely following the SMC abstractions of~\citet{chopin2020introduction}.
A practical motivation for this choice is that the dominant cost in NS is repeated evaluation of
the energy \(E(x)\) (and hence the likelihood) over many particles. By expressing both the outer
NS update and the inner constrained update as composable \texttt{jax} functions, we can batch energy
evaluations across particles and compile the full update to CPU/GPU/TPU backends via JIT.

Our implementation mirrors the decomposition in \cref{sec:theory}: an NS \emph{outer kernel} that
maintains the live set and performs evidence bookkeeping, and a pluggable \emph{constrained
update} kernel used to (approximately) sample from the constrained reference distribution
\(\Pi(x)\mathbf{1}_{\{E(x)<E_{\min}\}}\) at the current threshold. \Cref{sec:parallel_ns} describes the parallel outer kernel and its replacement interface; \cref{sec:nss} then specifies the constrained update and tuning choices that form NSS. Compared to typical NS implementations that parallelize primarily across CPU cores (e.g.\ via
MPI), this enables a complementary form of massive parallelism based on batched likelihood
evaluation and fused kernel execution.

\subsection{Parallel Nested Sampling}\label{sec:parallel_ns}
We implement a batched NS outer loop that removes and replaces \(k\) live points per iteration~\citep{henderson2014parallelized}.
Let \(m\) denote the number of live particles and \(k\in\{1,\ldots,m-1\}\) the batch size. Each
outer iteration performs:
\begin{enumerate}[leftmargin=*,label=(\roman*)]
    \item \textbf{Reweight (delete):} evaluate energies \(\{E(x_i)\}_{i=1}^m\), identify the \(k\) worst live
    points, and set the batch threshold \(E_{\mathrm{batch}}\) to the \(k\)-th worst energy.
    The deleted points are recorded as \emph{dead} samples.
    \item \textbf{Resample (duplicate):} select \(k\) parent indices from the surviving
    \(m-k\) live points (with replacement) and duplicate those states. This corresponds to
    multinomial resampling with binary weights.
    \(w_i \propto \mathbf{1}\{E(x_i) < E_{\mathrm{batch}}\}\), i.e.\ uniform over survivors.
    \item \textbf{Mutate (constrained update):} apply a user-specified constrained update operator
    targeting \(\Pi(x)\mathbf{1}_{\{E(x)<E_{\mathrm{batch}}\}}\) to each duplicated parent to
    obtain \(k\) new live points satisfying the same constraint.
    \item \textbf{Replace:} insert the \(k\) new points into the live set.
\end{enumerate}
The ratio \(k/m\) controls the expected compression per iteration (hence the number of outer
iterations), while \(k\) also sets the exposed parallelism. In contrast to many SMC schemes,
this decouples the degree of parallel mutation from the live-set resolution \(m\) -- an advantageous trade off for high memory footprint likelihoods. Resampling is a user defined step and simple extensions such as applying updates to all \(m\) particles are also supported.

\paragraph{Evidence bookkeeping.}
The outer kernel stores dead-point energies (and any auxiliary statistics required for posterior
reconstruction). For batched deletion, the expected volume contraction depends on \((m,k)\) via
order statistics (\cref{sec:ns}). In experiments we propagate the corresponding ``geometric''
uncertainty post hoc by shrinkage simulation (\cref{app:ess}, using the within-batch unrolling in
\cref{app:nlive}). 

\paragraph{Vectorised execution.}
All operations in the outer update---energy evaluation, thresholding, indexing, and resampling---
are implemented as array programs and can be JIT-compiled. In particular, energies for the full
live set are evaluated in a single batched call to \(E(x)\), and the \(k\) constrained updates
are applied in parallel via the replacement interface. We discuss the trade-offs of adaptive versus fixed-schedule execution in \cref{app:term}.

\paragraph{Replacement strategies and adaptation.}
The outer kernel is agnostic to how replacements are generated. Our default strategy,
\texttt{from\_mcmc}, applies a constrained MCMC kernel (HRSS in NSS, \cref{sec:nss}) starting from
each duplicated parent. The replacement kernels are synchronised at the likelihood level: all $k$ chains evaluate the likelihood in lockstep as a single batched operation. This is the first performant implementation of this form of parallelism in the nested sampling literature. To illustrate the flexibility of the interface we also implement a constrained Gaussian random-walk baseline using standard \texttt{blackjax} components.
We additionally support an \emph{adaptive} mode in which inner-kernel parameters (e.g.\ a
conditioning metric estimated from the live set) are updated once per outer iteration, in the
same spirit as SMC inner-kernel tuning utilities~\citep{chopin2020introduction}.

\subsection{Nested Slice Sampling (NSS)}\label{sec:nss}
Nested Slice Sampling instantiates the replacement interface of \cref{sec:parallel_ns} using
Hit-and-Run Slice Sampling (HRSS) (see \cref{app:har_slice} for practical implementation
details) as the constrained update kernel. Following the batch deletion step, we resample
(duplicate) \(k\) surviving particles as parents. Each replacement starts from a duplicated
survivor and applies a short HRSS chain to obtain a new (decorrelated) particle satisfying the
constraint. Key design choices are detailed below.

\paragraph{Hit-and-Run Slice Sampling (HRSS).}
In NSS we generate new live points by applying short HRSS chains targeting the constrained
density \(\tilde\pi(x) \propto \Pi(x)\mathbf{1}_{\{E(x)<E_{\min}\}}\).
Given a current state \(x\), HRSS samples a random direction (velocity) \(v\) and performs a
one-dimensional slice update along the line \(\{x + tv: t\in\mathbb{R}\}\) using stepping-out and
shrinkage \citep{nealslicesampling}.
Operationally, points outside the hard constraint (or outside the support of \(\Pi\)) are
treated as having \(\log \tilde\pi(x)=-\infty\), so stepping-out automatically discovers a
finite chord within the feasible set whenever one exists.
We run \(p\) HRSS steps per replacement; we use \(p=d\) as a default, with an ablation study in \cref{app:mcmc_steps}.

\paragraph{Direction metric.}
At each HRSS step, the velocity is randomized in a Mahalanobis geometry by sampling
\(\tilde{d}\sim\mathcal N(0,M^{-1})\) and setting \(v=\tilde{d}/\|\tilde{d}\|\), where \(M\) is a positive definite conditioning
matrix. In practice \(M\) is updated once per outer iteration from the empirical covariance of
the current live set (with standard regularisation), providing inexpensive approximate whitening
as the NS contour evolves. For very high dimensions it is likely to be beneficial to restrict this to be diagonal, or otherwise modify this to exploit any structure in the model. This plays a role analogous to mass-matrix adaptation in HMC-based
kernels \citep{buchholz2020adaptivetuninghamiltonianmonte}.

\paragraph{GPU scaling and practical performance.}
To the best of our knowledge, NSS provides the only fully vectorized implementation of Nested Sampling, in which the entire outer kernel---including batched constrained mutation---maps to parallel execution on GPU hardware. \Cref{fig:scaling_credit} demonstrates this on the GermanCredit logistic regression problem ($d{=}25$, $m{=}1000$), running on NVIDIA A100 hardware, with more detailed experimental results following in~\cref{sec:gym}.

We conduct two scaling experiments here: fixing the population size $m$ while varying the batch size $k$, and fixing the ratio of batch to population size $k/m$ while varying the population size $m$. In the first experiment, we observe nearly linear runtime speedup with $k$, with evidence uncertainty stable for $k/m \leq 0.5$ and degrading sharply beyond as shown in the rightmost panel. In the second experiment, we observe nearly flat runtime scaling as the live population is increased to 4000 points. In both cases we show the ideal \emph{embarrassingly parallel} scaling and the worst-case scenario; we find that although perfect parallelism is not achieved, the implementation still provides significant performance benefits. Extended ablations including additional batch sizes, population sizes, and Gaussian process results are provided in \cref{app:scaling}. Comparisons with existing NS codes are given in \cref{app:existing}. Based on these results we recommend setting $m$ as large as the available hardware allows (since additional particles are nearly free for fixed compression rate $k/m$), with $k/m \approx 0.1$ as a conservative default and $k/m \leq 0.5$ as a practical upper bound before evidence accuracy degrades.

\begin{figure}[ht]
    \vskip 0.2in
    \begin{center}
    \centerline{\includegraphics[width=0.66\textwidth]{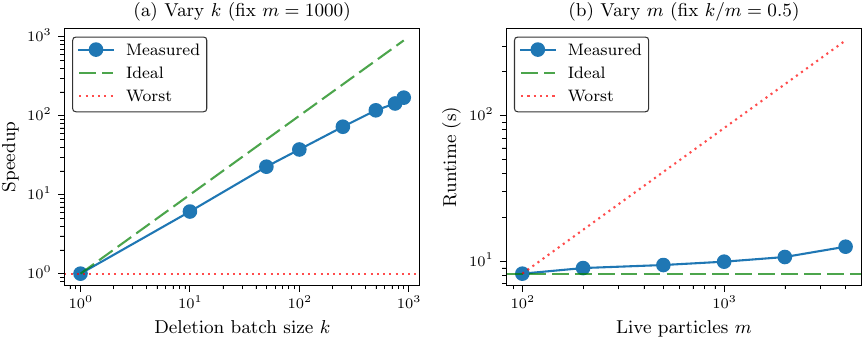}\hfill\includegraphics[width=0.33\textwidth]{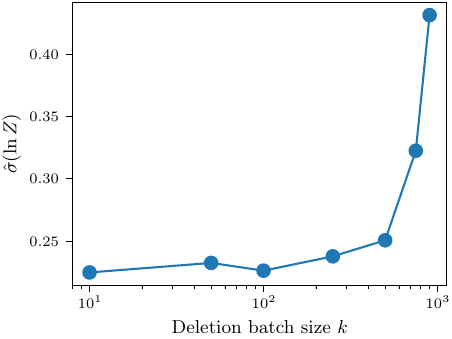}}
    \caption{GPU scaling on GermanCredit logistic regression ($d{=}25$, $m{=}1000$). Left: runtime and $\ln Z$ accuracy vs.\ deletion batch size $k$, showing nearly linear runtime speedup with $k$. Right: evidence uncertainty $\hat\sigma(\ln Z)$ vs.\ $k$, confirming that accuracy is stable for $k/m \leq 0.5$ and degrades sharply beyond.}
    \label{fig:scaling_credit}
    \end{center}
    \vskip -0.2in
\end{figure}

\subsection{Optimal slice width tuning}
Alongside the implementation, we analyse the theoretical properties of the constrained updates employed in NSS and provide optimal tuning recommendations.

\paragraph{Computational cost of slice sampling.}
\Cref{app:optimal-scaling-hrss} derives the computational cost for the stepping-out and shrinkage
procedures and the resulting optimal tuning. Conditional on the (random) chord
length \(\ell\) induced by the current point, direction, and constrained region, the expected
number of likelihood evaluations for one HRSS update is
\begin{equation}\label{eqn:slice-sampling-cost-formula}
    \E[N_{\mathrm{evals}}\mid \ell]=\frac{\ell}{w}+1+2\phi\!\left(\frac{w}{\ell}\right),
\end{equation}
for an explicit function \(\phi\) (\cref{lemma:hit-and-run-cost}). Importantly, $\E[N_{\mathrm{evals}}\mid \ell]$ grows as $\frac{\ell}{w}$ as $w\to 0$ and logarithmically as $\ln\left(1+\frac{w}{\ell}\right)$ when $\frac{w}{\ell}\to\infty$.
This matches the intuition.
For small $\frac{w}{\ell}$, the chord length is always underestimated, and we require $\frac{\ell}{w}$ stepping-out iterations to cover the chord.
For large $\frac{w}{\ell}$, we expect exponential contraction of the shrinkage step as it is effectively a randomised bisection. 
This then implies that the number of likelihood evaluations scales logarithmically with the slice width $w$. 
Minimising \cref{eqn:slice-sampling-cost-formula} yields a unique
optimum. For a fixed (non-stochastic)~$\ell$, this gives \(w_*\approx 1.36\,\ell\) (\cref{thm:optimal-width-fixed-slice}).
Notably, this value is larger than $\ell$, which stems from the fact that the logarithmic growth of the cost favours overestimation of the chord length.

\paragraph{Ellipsoidal level sets.}
For ellipsoidal level sets \(E_{\dimn}=\{x:x^\top A_{\dimn}x\le 1\}\) in dimension \(\dimn\),
and assuming uniform initialisation with such level sets with an isotropic direction law of slice sampling,
the chord length is a random variable which concentrates in high dimensions (\cref{lemma:chord-lengths}).
From this, one obtains the explicit high-dimensional scaling
\begin{equation}
    w_*\ =\ 4\kappa_\infty\sqrt{\frac{2}{\pi \mu_{\dimn}\dimn}}\,[1+o(1)],\qquad
    \mu_{\dimn}:=\frac{1}{\dimn}\mathrm{Tr}(A_{\dimn}),
\end{equation}
with a universal constant \(\kappa_\infty\approx 1.3035\) (\cref{lemma:optimal-slice-width}).
In particular, the optimal step size decreases like \(d^{-1/2}\), and the leading dependence on
anisotropy enters primarily through \(\mu_{\dimn}\) rather than the full spectrum. 

Another way to state the observation of the $d^{-1/2}$-scaling is that 
the prefactor $\propto \mu_{\dimn}^{-1/2}$ is determined by the mean eigenvalue $\mu_{\dimn}$ of the matrix $A_{\dimn}$.
The anisotropy 
of the ellipsoid, i.e. the spread of the eigenvalues, only enters through negligible higher-order corrections.
Correspondingly, we may say that the anisotropy is ``washed out'' in high dimensions. The ellipsoid with semi-axes $(s_1,\dots,s_{\dimn})$, such that $A_{\dimn}=\mathrm{diag}(\frac{1}{s_1^2},\dots,\frac{1}{s_{\dimn}^2})$, behaves like a ball with effective radius $R_\mathrm{eff}$ given by the harmonic mean,
\begin{equation}
    \frac{1}{R_\mathrm{eff}^2}:=\frac{1}{\dimn}\sum_{i=1}^{\dimn}\frac{1}{s_i^2}.
\end{equation}
Note that this is a global statement, since the expectations were taken over the entire ellipsoid, and should not be taken to apply, for example, to a Markov chain exploring an ellipsoid locally.

As a further consequence, the optimal slice width is maximised in the isotropic case.
This motivates the
whitening strategy used in practice: By sampling directions in a Mahalanobis metric based on the
live-particle covariance, we aim to keep the constrained region approximately isotropic in the
proposal geometry, so that a dimension-aware width rule remains stable as the NS contour evolves.

\paragraph{Isotropic level sets.}
In the case where the level set is an isotropic ball, 
the above expression may be specialised 
(\cref{cor:optimal-slice-width-in-a-ball})
and the optimal scaling of $w_*$ remains $d^{-1/2}$. 
We remark that the $d^{-1/2}$-scaling is also exhibited by the standard deviation of the uniform distribution in a ball in high dimensions, since the covariance matrix of a ball of radius $R$ is $\frac{R^2}{d+2}I$.
In low dimensions, however, we expect that the dominant chord length is on the scale of the diameter.
Overall, this motivates a method for tuning $w_*$ for all $d$: In low dimensions, we compute the covariance matrix of the (uniformly distributed) live point population and rescale the slice width to the diameter of the whitened ball.
In high dimensions, the asymptotic value of $w_*$ above applies, so that we can use the raw entries of the covariance matrix without a dimension-dependent rescaling. 

\begin{figure}[ht]
	\begin{center}
		\includegraphics{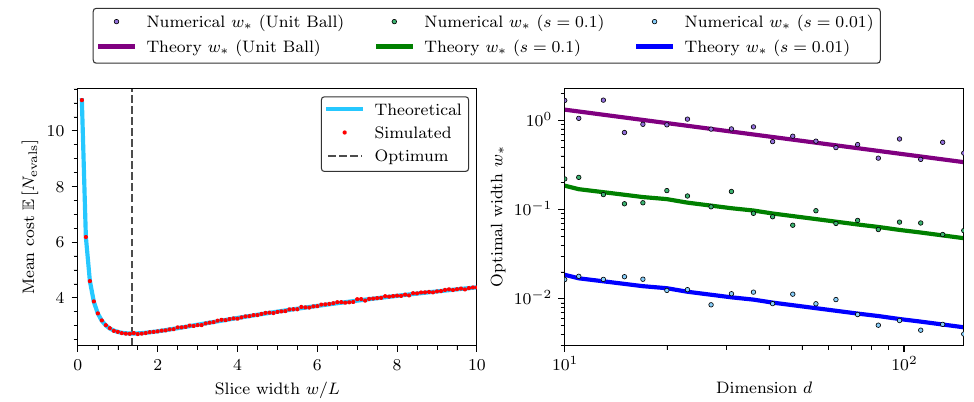}
		\caption{Validation of theory against numerics. \textbf{Left:} The theoretical prediction (\cref{eqn:total-cost-hit-and-run}) is compared against the cost obtained from numerical simulation of slice sampling. The optimum is given by \cref{eqn:fixed-slice-optimal-u-lambertW}.
		\textbf{Right:} The theoretically obtained optimal slice width $w_*$ (\cref{eqn:optimal-slice-width-formula}) is compared against the numerically calculated optimal $w_*$, obtained by scanning through a range of values of $w_*$ and simulating slice sampling to compute the cost.}
		\label{fig:number_of_steps}
	\end{center}
\end{figure}

\paragraph{Numerical tests.}
We validate the theoretical predictions above numerically. Details are given in \cref{app:hrss-numerical-tests}.
First, the theoretical
expressions for both the cost and the optimal width agree with numerical experiments
(\cref{fig:number_of_steps}). 
Second, the variability of the per-step cost remains modest at the
optimum (\cref{app:hrss-variance-of-expected-cost}; see \cref{fig:std-number-of-steps}), which
explains why vectorising many short constrained chains can be effective despite the adaptive
control flow of slice sampling (see \cref{app:ss_vs_rw} for an empirical comparison with random walks).
These two empirical consequences of this analysis are leveraged by NSS. 

\paragraph{Remarks on assumptions.}
In our derivation, it was assumed that the eigenvalues of $A_d$ are bounded away from $0$ and $\infty$ (\cref{lemma:quadratic-form-expansion}).
In practice, this is a weak assumption, especially for statistical models
used in the physical sciences where nested sampling is widely applied~\citep{Ashton:2022grj}. 
By the Bernstein--von Mises theorem~\citep[Section 10.2]{van2000asymptotic}, under standard regularity conditions and in the large-data limit, the posterior is locally well approximated by a Gaussian around a regular mode. This provides asymptotic intuition for modelling the constrained region as approximately ellipsoidal.
In highly nonlinear targets or landscapes with widely separated modes, these assumptions may break down. However, we show empirically in \cref{sec:experiments} that in such cases, slice sampling remains robust.
In particular, the variance of the chain length may become larger since the number of slice sampling steps depends on the local geometry, which is expected to lower the efficiency. However, this inefficiency can be measured and serve as a practical indicator that the theoretical assumptions are becoming less reliable.

\subsection{SMC and MCMC baselines}\label{sec:smc_benchmarks}
We benchmark NSS against particle and MCMC baselines implemented in the same \texttt{blackjax}
stack. For particle methods we use adaptive tempered SMC with an ESS-controlled temperature
schedule \citep{fearnhead2010adaptivesequentialmontecarlo}, and compare multiple mutation kernels:
HMC (SMC-HMC), random-walk Metropolis--Hastings (SMC-RW), slice sampling (SMC-SS), and
independent Metropolis--Hastings (SMC-IRMH). For posterior-only comparisons (where evidence
estimation is not required) we additionally run standalone slice sampling (SS) and NUTS. Full
configuration details are given in \cref{app:nss_config,app:smc_benchmarks}.

\section{Experiments}\label{sec:experiments}
In this section we review the performance of NSS on synthetic benchmarks, Bayesian inference problems from
Inference Gym~\citep{inferencegym2020}, which serves a similar aim to posteriordb~\citep{pmlr-v258-magnusson25a} with overlapping problems and construction, and an ML application to Gaussian Process hyperparameter
marginalization. We compare against adaptive tempered SMC with several mutation kernels
(SMC-RW, SMC-IRMH, SMC-SS, SMC-HMC) and, where appropriate, posterior-only baselines (SS and NUTS).
We evaluate posterior quality using Maximum Mean Discrepancy (MMD)~\citep{JMLR:v13:gretton12a}
and the sliced 2-Wasserstein distance ($W_2$) against reference samples (analytic where
available, otherwise long-run MCMC; see \cref{app:metrics}). For problems without analytic ground truth, all NSS posterior estimates are broadly consistent with long-run NUTS. For efficiency, we report ESS
normalised by wall-clock time and by the number of energy evaluations. For the GP Regression task we evaluate performance using a held out test set comprised of the last 30\% of the data, measuring the log-likelihood of the test set under the posterior predictive distribution averaged over hyperparameter posterior samples.
All GPU experiments were run on a Google Cloud A2 Standard instance equipped with 1$\times$ NVIDIA A100 GPU (40\,GB HBM2), 12 vCPUs, and
85\,GB host memory. We run experiments in single precision (float32) unless otherwise noted. Extended results and more experimental details are provided in \cref{app:experiments}.

For evidence estimation, NSS reports geometric shrinkage uncertainties from volume simulations (\cref{app:ns_volume}), which quantify the stochastic error from the NS compression process. SMC normalizing-constant estimators are unbiased with variance $O(1/m)$ under standard conditions \citep{beskos2011errorboundsnormalizingconstants}; at $m=1000$ particles (the value we run most experiments at), this statistical error is negligible compared to the mixing uncertainty that arises when mutation kernels fail to fully explore the target. We report all metrics, including $\ln Z$, with run-to-run error bars from 5 independent seeds; for SMC the dominant source of error is mixing failure rather than within-run variance (see \cref{app:dimension_scaling}).

\subsection{Synthetic Benchmarks}\label{sec:synthetic}

We evaluate NSS on synthetic benchmarks designed to probe sampler performance on pathological features common in inference problems: pronounced multimodality (Mixture of Gaussians in 2$d$ with 40 modes and in 10$d$ with 5 modes and randomized covariances) and hierarchical structures (Neal's funnel). These targets test both posterior sampling quality and evidence estimation accuracy. We compare against adaptive tempered SMC baselines with random-walk, independent, slice-sampling,
and HMC mutation kernels (SMC-RW/IRMH/SS/HMC). Results are summarized in \cref{tab:synthetic}, with full problem descriptions, visualizations, and extended baselines in \cref{app:experiments}.

\begin{figure}[ht]
    \vskip 0.2in
    \begin{center}
    \centerline{\includegraphics{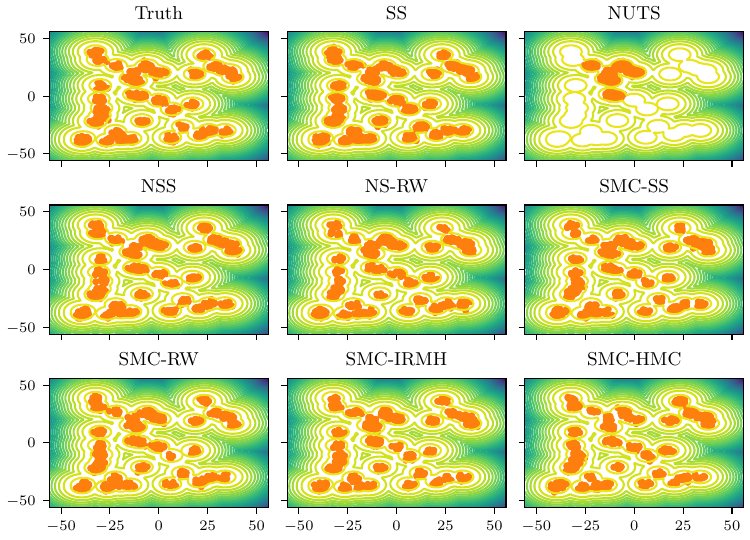}}
    \caption{Posterior samples for a mixture of 40 bivariate Gaussians on a bounded uniform prior. NSS (orange) recovers all modes with correct weights. This 2$d$ example, despite its low dimensionality, challenges many sampling methods due to the large number of well-separated modes.}
    \label{fig:mog}
    \end{center}
    \vskip -0.2in
\end{figure}

\begin{table}[ht]
    \caption{Performance on synthetic benchmarks. NSS vs SMC-RW vs SMC-HMC, averaged over 5 runs. $W_2$ and MMD computed against ground truth samples. Full results with additional baselines in \cref{app:experiments}.}
    \label{tab:synthetic}
    \vskip 0.15in
    \begin{center}
    \scriptsize
    \begin{tabular}{llrrrrrr}
    \toprule
    Problem & Method & Energy evals & Time (s) & ESS/s & $\ln Z$ & MMD & $W_2$ \\
    \midrule
    \multirow{4}{*}{\shortstack[l]{MoG\\$[d{=}2, 40\ \text{modes}]$}}
    & Truth & - & - & - & -9.21 & 0.021 $\pm$ 0.009 & 3.51 $\pm$ 0.71 \\
    & NSS & 7.8 $\times 10^{4}$ & 5.1 & 7.4 $\times 10^{2}$ & -9.19 $\pm$ 0.02 & 0.029 $\pm$ 0.014 & 4.06 $\pm$ 0.87 \\
    & SMC-RW & 1.9 $\times 10^{5}$ & 3.6 & 2 $\times 10^{3}$ & -9.21 $\pm$ 0.01 & 0.032 $\pm$ 0.012 & 4.22 $\pm$ 0.75 \\
    & SMC-HMC & 1.9 $\times 10^{5}$ & 8.1 & 8.6 $\times 10^{2}$ & -9.22 $\pm$ 0.06 & 0.034 $\pm$ 0.019 & 4.48 $\pm$ 1.2 \\
    \midrule
    \multirow{4}{*}{\shortstack[l]{MoG\\$[d{=}10, 5\ \text{modes}]$}}
    & Truth & - & - & - & -46.05 & 0.036 $\pm$ 0.02 & 4.70 $\pm$ 0.95 \\
    & NSS & 1.5 $\times 10^{6}$ & 8.1 & 9.9 $\times 10^{2}$ & -45.97 $\pm$ 0.11 & 0.19 $\pm$ 0.05 & 9.56 $\pm$ 0.89 \\
    & SMC-RW & 2.7 $\times 10^{6}$ & 4.1 & 7.3 $\times 10^{2}$ & -47.93 $\pm$ 1.33 & 1.9 $\pm$ 0.65 & 19.0 $\pm$ 2.7 \\
    & SMC-HMC & 5.4 $\times 10^{5}$ & 8.5 & 3.5 $\times 10^{2}$ & -46.13 $\pm$ 0.46 & 0.5 $\pm$ 0.23 & 12.3 $\pm$ 1.8 \\
    \midrule
    \multirow{4}{*}{\shortstack[l]{Funnel\\$[d{=}10]$}}
    & Truth & - & - & - & - & (8.3 $\pm$ 1.5) $\times 10^{-3}$ & 7.35 $\pm$ 0.40 \\
    & NSS & 2 $\times 10^{6}$ & 7.7 & 4 $\times 10^{3}$ & -62.34 $\pm$ 0.10 & 0.033 $\pm$ 0.002 & 9.17 $\pm$ 0.06 \\
    & SMC-RW & 2.5 $\times 10^{6}$ & 4.1 & 5.6 $\times 10^{3}$ & -63.56 $\pm$ 0.17 & 0.066 $\pm$ 0.005 & 8.92 $\pm$ 0.01 \\
    & SMC-HMC & 5.2 $\times 10^{5}$ & 8.2 & 2.9 $\times 10^{3}$ & -62.71 $\pm$ 1.42 & 0.059 $\pm$ 0.010 & 9.03 $\pm$ 0.22 \\
    \bottomrule
    \end{tabular}
    \end{center}
    \vskip -0.1in
\end{table}

The results demonstrate consistent patterns across problems. On the 2$d$ MoG with 40 modes---a challenging benchmark from the ML sampling literature \citep{midgley2023flowannealedimportancesampling}---all our particle methods recover the correct mode structure (\cref{fig:mog}), confirming that our SMC implementations are competitive baselines and likely underrepresented in this literature. The interesting performance gaps between methods emerge on harder problems involving high-dimensional multimodality, invalid likelihood regions, and hierarchical structure.

On the 10$d$ funnel, NSS has the lowest MMD score, perhaps unsurprising as slice sampling was introduced in the literature to solve this exact problem. For multimodal targets, NSS maintains significantly better posterior quality metrics indicating proper mode discovery and weighting as shown in \cref{tab:synthetic}. All methods have similar evaluation overheads, and produce large numbers of posterior samples per second. The 10$d$ MoG starts to differentiate tested methods, with slice sampling embedded in particle methods, particularly NSS, proving generally the most reliable at mode recovery. We note that the broadly similar metrics across methods on these benchmarks are expected: the purpose here is to validate that NSS keeps pace with performant baselines from the same class of particle methods, rather than to claim superiority on problems where all methods perform well. \Cref{app:dimension_scaling} provides additional analysis of evidence estimation accuracy and reliability as dimension increases.

\subsection{Inference Gym}\label{sec:gym}
To complement the results on synthetic benchmarks, we evaluate on real-data problems from the Inference Gym~\citep{inferencegym2020} repository. These Bayesian probabilistic models include ground truth posteriors derived from exhaustive NUTS runs. We compare NSS against adaptive tempered SMC with multiple mutation kernels (SMC-RW/IRMH/SS/HMC). Results are shown in \cref{tab:gym_all}, with posterior quality measured by $W_2$ and MMD against the NUTS reference samples. No ground truth marginal likelihood estimates are available for these problems, so we compare $\ln Z$ estimates across methods for consistency.


The results reveal several patterns. On lower-dimensional problems (EightSchools, GermanCredit), all methods achieve similar posterior quality and compatible evidence estimates. NSS matches SMC-RW and SMC-HMC on EightSchools in both MMD and $W_2$, and achieves the best posterior quality on GermanCredit logistic regression. For the higher-dimensional problems (S\&P500, RadonIndiana), NSS remains competitive. On S\&P500, NSS achieves comparable evidence variance to SMC-HMC and substantially outperforms SMC-RW on all metrics; SMC-HMC achieves the best posterior quality, benefiting from gradient information in this smooth stochastic volatility model. On RadonIndiana, NSS achieves the best performance across all metrics, with substantially lower MMD and $W_2$ than SMC-HMC and tighter evidence estimates; SMC-RW fails on this hierarchical model, substantially underestimating the evidence. These results demonstrate that NSS scales effectively to moderately high-dimensional problems; all NSS posteriors are broadly consistent with the NUTS reference draws on these examples. The purpose of these benchmarks is to validate that NSS remains competitive with state-of-the-art GPU-accelerated particle methods, rather than to claim superiority; scaling further will likely require gradient-based constrained samplers.

\begin{table*}[ht]
    \caption{Results on Inference Gym benchmarks. NSS vs SMC-RW vs SMC-HMC, averaged over 5 runs. $W_2$ and MMD computed against NUTS reference chains.}
    \label{tab:gym_all}
    \vskip 0.15in
    \begin{center}
    \scriptsize
    \begin{tabular}{llrrrrrr}
    \toprule
    Problem & Method & Energy evals & Time (s) & ESS/s & $\ln Z$ & MMD & $W_2$ \\
    \midrule
    \multirow{3}{*}{\shortstack[l]{EightSchools\\$[d{=}10]$}}
    & NSS & 1.6 $\times 10^{6}$ & 10 & 3.8 $\times 10^{2}$ & -36.15 $\pm$ 0.09 & (2.0 $\pm$ 0.8) $\times 10^{-3}$ & 0.86 $\pm$ 0.12 \\

    & SMC-RW & 6.6 $\times 10^{6}$ & 3.9 & 4.2 $\times 10^{3}$ & -36.11 $\pm$ 0.16 & (1.8 $\pm$ 0.8) $\times 10^{-3}$ & 0.85 $\pm$ 0.07 \\

    & SMC-HMC & 6.8 $\times 10^{6}$ & 7.6 & 2.3 $\times 10^{3}$ & -36.16 $\pm$ 0.10 & (2.0 $\pm$ 0.3) $\times 10^{-3}$ & 0.95 $\pm$ 0.09 \\
    \midrule
    \multirow{3}{*}{\shortstack[l]{GermanCredit\\$[d{=}25]$}}
    & NSS & 2.0 $\times 10^{7}$ & 16 & 5.5 $\times 10^{2}$ & -529.12 $\pm$ 0.26 & (1.4 $\pm$ 0.3) $\times 10^{-3}$ & (8.9 $\pm$ 0.5) $\times 10^{-3}$ \\

    & SMC-RW & 2.4 $\times 10^{7}$ & 6.7 & 2.5 $\times 10^{2}$ & -529.53 $\pm$ 0.18 & (2.3 $\pm$ 0.6) $\times 10^{-3}$ & (11 $\pm$ 1.0) $\times 10^{-3}$ \\

    & SMC-HMC & 9.8 $\times 10^{6}$ & 11 & 1.6 $\times 10^{2}$ & -529.19 $\pm$ 0.08 & (4.1 $\pm$ 0.6) $\times 10^{-3}$ & (13 $\pm$ 0.7) $\times 10^{-3}$ \\
    \midrule
    \multirow{3}{*}{\shortstack[l]{S\&P500\\$[d{=}103]$}}
    & NSS & 3.7 $\times 10^{7}$ & 79 & 1.1 $\times 10^{2}$ & -571.25 $\pm$ 0.18 & 0.014 $\pm$ 0.002 & 0.14 $\pm$ 0.008 \\

    & SMC-RW & 5.7 $\times 10^{7}$ & 53 & 47 & -572.87 $\pm$ 0.62 & 0.023 $\pm$ 0.006 & 0.18 $\pm$ 0.015 \\

    & SMC-HMC & 5.4 $\times 10^{6}$ & 90 & 24 & -571.02 $\pm$ 0.19 & (2.2 $\pm$ 0.6) $\times 10^{-3}$ & 0.072 $\pm$ 0.005 \\
    \midrule
    \multirow{3}{*}{\shortstack[l]{RadonIndiana\\$[d{=}97]$}}
    & NSS & 1.2 $\times 10^{8}$ & 63 & 2.6 $\times 10^{2}$ & -2591.13 $\pm$ 2.43 & (2.0 $\pm$ 0.3) $\times 10^{-3}$ & 0.029 $\pm$ 0.002 \\

    & SMC-RW & 1.9 $\times 10^{8}$ & 29 & 5.4 $\times 10^{2}$ & -3033.96 $\pm$ 376.35 & 0.5 $\pm$ 0.55 & 0.67 $\pm$ 0.47 \\

    & SMC-HMC & 1.5 $\times 10^{7}$ & 14 & 1.1 $\times 10^{3}$ & -2596.62 $\pm$ 8.59 & 0.04 $\pm$ 0.039 & 0.1 $\pm$ 0.051 \\
    \bottomrule
    \end{tabular}
    \end{center}
    \vskip -0.1in
\end{table*}

\subsection{Gaussian Process hyperparameter marginalization}\label{sec:gp}
A practical application of Nested Sampling in machine learning is Bayesian Gaussian Process (GP)
regression with hyperparameter marginalization~\citep{rasmussen2006gaussian, simpson2021marginalisedgaussianprocessesnested}. Standard practice often fits GP hyperparameters
by maximizing the marginal likelihood (type-II maximum likelihood), but this point estimate
ignores hyperparameter uncertainty and can lead to miscalibrated predictive uncertainties.
A full Bayesian treatment instead marginalizes hyperparameters \citep{lalchand_sparse_2022}.
Particle methods such as NSS additionally provide an estimate of the model evidence (marginal
likelihood) \(Z\), which enables principled Bayesian model comparison between kernel choices
\citep{kroupa2024kernel}. Beyond its scientific interest, GP regression also serves as a useful proxy for models that place non-trivial computational load on the likelihood evaluation, making it a sharper test of GPU acceleration.

We demonstrate NSS on two standard GP regression benchmarks (Mauna Loa CO$_2$ and Airline Passengers),
using a composite kernel with constant, linear, and spectral mixture components \citep{pmlr-v28-wilson13}, resulting in an 11-dimensional
hyperparameter space (see \cref{app:gp} for kernel details). We again compare against adaptive tempered SMC with random-walk and HMC mutation
kernels, and against NUTS as a posterior-only baseline. Because evaluating the GP log marginal likelihood involves repeated Cholesky
factorisations, this problem benefits from double precision for numerical stability which does not vectorize as
efficiently (see \cref{app:scaling}) across particles. We therefore use
a reduced particle population of \(m=500\). This also limits the benefit of batch deletion, but it still provides a significant acceleration. In preliminary single-precision tests, NSS remained stable, consistent
with HRSS being gradient-free and based on log-density comparisons (see \cref{app:har_slice}). We use the GPJax library \citep{Pinder2022} for GP likelihood evaluations.

\begin{figure}[ht]
    \vskip 0.2in
    \begin{center}
    \centerline{\includegraphics[width=0.5\columnwidth]{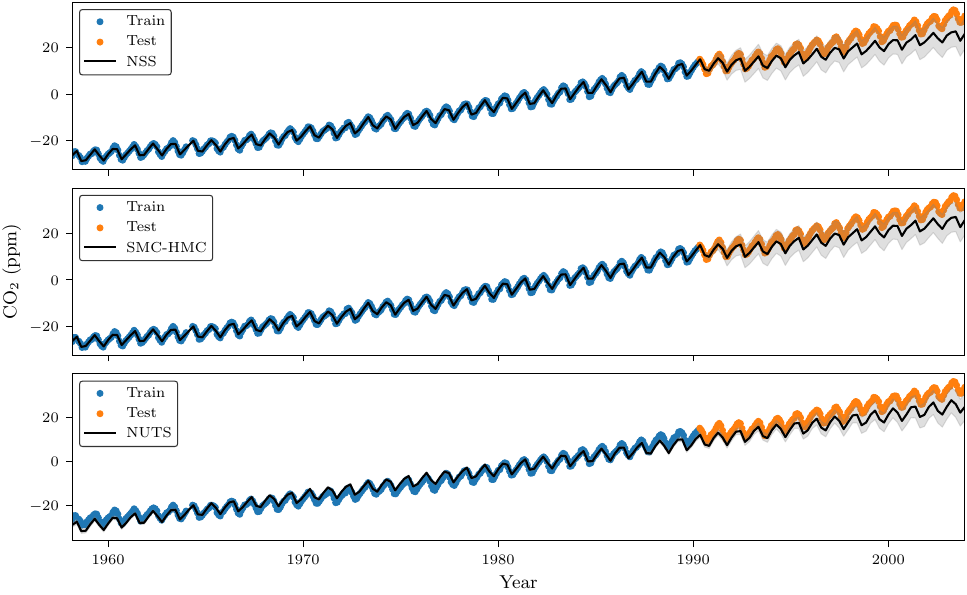}\includegraphics[width=0.5\columnwidth]{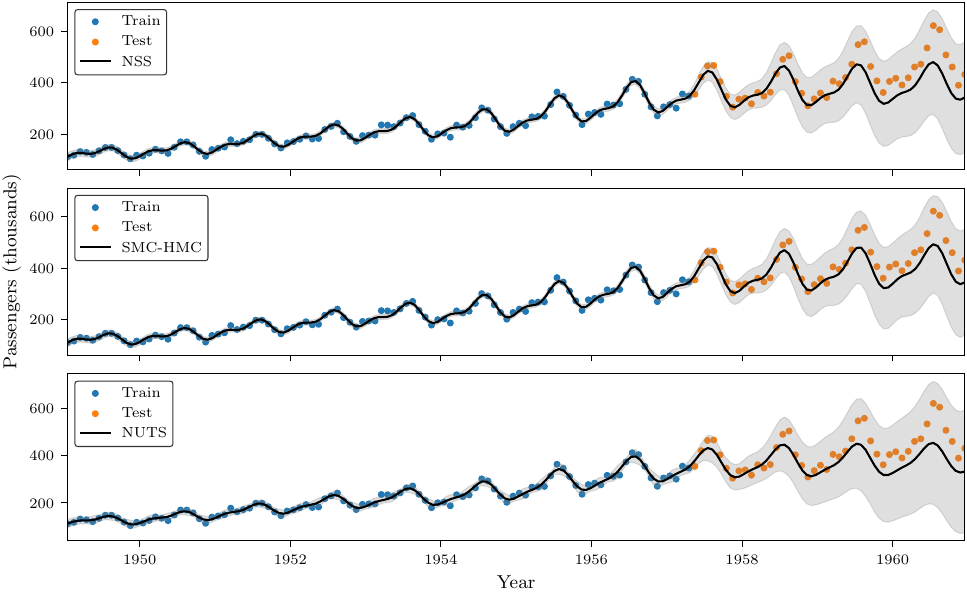}}
    \caption{Posterior predictive distributions for GP regression on Mauna Loa CO$_2$ [left] and Airline passengers [right]. Shaded regions show 95\% credible intervals from NSS samples.}
    \label{fig:gp_prediction}
    \end{center}
    \vskip -0.2in
\end{figure}

\begin{table}[ht]
    \caption{GP regression results on Mauna Loa CO$_2$ and Airline Passengers datasets ($d{=}11$ spectral mixture kernel). Results averaged over 5 runs. NSS provides marginal likelihood estimates with uncertainty, enabling model comparison.}
    \label{tab:gp}
    \vskip 0.15in
    \begin{center}
    \scriptsize
    \begin{tabular}{llcccccc}
    \toprule
    Problem & Method & $\ln Z$ & Time (s) & ESS & ESS/s & Test NLL & RMSE \\
    \midrule
    \multirow{4}{*}{\shortstack[l]{Mauna Loa\\$[d{=}11]$}}
    & NSS & 874.42 $\pm$ 2.42 & 423 & 5511 & 13 & 38.01 $\pm$ 7.82 & 0.25 \\
    & SMC-RW & 880.83 $\pm$ 1.36 & 130 & 495 & 3.8 & 57.87 $\pm$ 3.02 & 0.25 \\
    & SMC-HMC & 875.34 $\pm$ 2.66 & 100 & 896 & 9.0 & 49.00 $\pm$ 12.31 & 0.25 \\
    & NUTS & --- & 277 & 1000 & 3.6 & 67.86 $\pm$ 52.42 & 0.30 \\
    \midrule
    \multirow{4}{*}{\shortstack[l]{Airline Passengers\\$[d{=}11]$}}
    & NSS & 24.72 $\pm$ 0.57 & 32 & 4203 & 131 & 15.68 $\pm$ 0.77 & 0.46 \\
    & SMC-RW & 24.02 $\pm$ 1.30 & 13 & 976 & 75 & 15.61 $\pm$ 0.80 & 0.45 \\
    & SMC-HMC & 23.50 $\pm$ 0.92 & 19 & 563 & 30 & 15.24 $\pm$ 0.85 & 0.44 \\
    & NUTS & --- & 117 & 1000 & 8.5 & 25.23 $\pm$ 17.58 & 0.49 \\
    \bottomrule
    \end{tabular}
    \end{center}
    \vskip -0.1in
\end{table}

The results in \cref{tab:gp} and posterior predictive distributions in \cref{fig:gp_prediction} reveal several important findings. First, Spectral Mixture kernels provide an extremely challenging posterior landscape to sample from. This surface is highly non-linear, multimodal and features sweeping degeneracies, which are all features NS is proposed to handle particularly well. To evaluate predictive performance we use a held-out test set (the last 30\% of the data) and compute the test negative log-likelihood (NLL) and root-mean-square error (RMSE) under the posterior predictive distribution (see \cref{app:gp_metrics} for metric definitions). NSS provides the best predictive performance on Mauna Loa and competitive performance on Airline, where all particle methods achieve similar NLL. NSS's high effective sample size (ESS) indicates thorough posterior exploration; visually in \cref{fig:gp_prediction}, all methods capture similar trends. As no ground-truth posterior is available for these GP problems, techniques such as simulation-based calibration~\citep{modrak_simulation-based_2025} could provide further validation of uncertainty calibration if required. We provide additional comparison with legacy NS implementations in \cref{app:existing}, where NSS achieves over an order of magnitude speedup on identical hardware by fully exploiting batch likelihood evaluation.

\section{Discussion}
Both NSS and our SMC baselines rely on \emph{mutation} steps that evolve particles via short
Markov chains, and a key practical hyperparameter is the chain length \(p\).
If \(p\) is too small, replacement particles remain strongly correlated with their duplicated
parents, reducing the effective resolution of the particle population; if \(p\) is too large,
runtime is wasted on over-mixing relative to the outer-loop progress.
This trade-off is present in many particle Monte Carlo methods.
In terms of parallel structure, the SMC baselines apply an embarrassingly parallel update that
attempts a fixed number of steps for all particles, while classical NS (with \(k=1\)) is at the
other extreme, updating one particle at a time. Our batched NS outer loop (with \(k>1\)) lies
between these extremes, and is reminiscent of waste-free particle schemes
\citep{dau2021wastefreesequentialmontecarlo}.

Empirically, NSS is often efficient in terms of energy evaluations per effective sample, but can be
slower in wall-clock time than SMC on hardware where strict SIMD synchronization penalizes
variable-cost inner updates. In NSS, HRSS steps have adaptive control flow (stepping-out and
shrinkage) and therefore a variable number of energy evaluations; when many chains are executed
in lockstep this induces some wasted computation, as discussed in \cref{app:scaling}. Our tuning
analysis helps reduce this effect: near the optimal slice width, HRSS step costs concentrate
(\cref{sec:nss}, \cref{fig:std-number-of-steps}), making batched execution substantially more
effective than naive constrained random walks. An important direction for future work is to
adapt \(p\) online using diagnostics of within-chain decorrelation across the population
\citep{margossian2024nested}. The NS literature has also explored \emph{dynamic} schemes that vary
the number of live particles \citep{Higson_2018}, but such approaches are less compatible with a
static-memory, accelerator-oriented implementation; adapting \(p\) and inner-kernel parameters
appears more tractable in this setting.

Across the chosen benchmarks, NSS provides robust performance with minimal
manual tuning, consistent with the view that NS is particularly effective when the induced
intermediate targets have hard constraints and complex geometry (\cref{sec:constrained}).
In summary, NSS is a natural default when the target is multimodal or has constrained support, when the likelihood is simulator-like or non-differentiable, and when evidence estimation is required alongside posterior samples. When reliable gradients are available and the target is smooth and unimodal, gradient-based alternatives such as SMC-HMC or NUTS will typically be more efficient, as our own experiments confirm (\cref{sec:gym}).
We expect the proposed implementation to be useful in scientific and ML workflows where the
forward model is already vectorized (or emulated)---a setting that is increasingly common in the
physical sciences \citep{Wong:2023lgb,Spurio_Mancini_2022}---and where estimating \(\log Z\) is
valuable for principled model comparison~\citep{lovick_high-dimensional_2025,leeney_jax-bandflux_2025,yallup_parallel_2025}. Alternative approaches to evidence estimation include bridge sampling and path sampling~\citep{gelman_simulating_1998}; we do not address popular scalable approaches to model comparison based on stacking or cross-validation~\citep{vehtari2017practical,fong2019marginal,Yao_2022}, which primarily post-process draws from an existing inference procedure and are not themselves a sampling technique.

Our benchmarks are deliberately chosen to overlap with those used in the growing literature on neural samplers~\citep[e.g.][]{midgley2023flowannealedimportancesampling,wu2025reversediffusionsequentialmonte,blessing2025underdampeddiffusionbridgesapplications}, which evaluate often on similar regression problems in 10--100 dimensions and benchmark against SMC. The synthetic experiments we highlight, multimodality, funnels, and other geometries where HMC has known difficulties, are precisely where there is general interest in improved samplers. An implicit goal of this work is to provide strong, well-tuned classical baselines for such comparisons: NSS and our SMC implementations are GPU-native, operate in the same JAX ecosystem as many neural samplers, and can serve as reference methods against which learned approaches should be evaluated.

In higher dimensions ($d>10^3$), more sophisticated constrained
mutation kernels (e.g.\ gradient-guided variants) may be required to maintain efficiency
\citep{lemos2023improvinggradientguidednestedsampling}; at the same time, exact evidence
estimation for highly multimodal non-convex targets is challenging for any method, and
understanding practical scaling limits remains an open problem. We also provide practical accelerator friendly versions of slice sampling, optionally embedded within SMC, and note that for highly non-linear problems slice sampling is highly performant. This embodies a broader contribution of this work, beyond just NSS, of well tested and efficient constrained sampling methods for modern ML ecosystems.

\section{Conclusion}
Nested Sampling is a widely used tool for Bayesian model comparison in the physical sciences,
notably in astrophysics \citep{Trotta_2008}, but has seen comparatively less adoption in modern
ML software ecosystems. We presented an accelerator-oriented implementation of Nested Sampling
in the \texttt{jax} ecosystem, structured as a batched NS outer kernel with a
pluggable constrained-update interface. Instantiating this interface with Hit-and-Run Slice
Sampling yields Nested Slice Sampling (NSS), a simple and robust variant designed for hard
constraints and massively parallel likelihood evaluation.

Besides this engineering contribution, our main methodological contribution is a principled treatment of the HRSS slice-width
hyperparameter: we provide an optimal-width analysis (including the \(w_*\approx 1.36\,\ell\) rule
and its high-dimensional scaling for ellipsoidal sets) and show that operating near this regime
leads to well-behaved per-step costs that support vectorized execution. We empirically compare
NSS to strong adaptive tempered SMC baselines and demonstrate competitive performance on
challenging synthetic targets, High dimensional Bayesian inference, and GP hyperparameter marginalization
(\cref{sec:gp}), where evidence estimation enables Bayesian model comparison and improved
uncertainty calibration. Our implementation of NSS is particularly performant for moderate-dimensional problems that exhibit complex constrained geometries, and where the likelihood is already vectorized or emulated.

The accompanying open-source implementation aims to make NS easier to use and to benchmark in ML
settings, while remaining compatible with established NS practice (e.g.\ evidence termination) and
removing several legacy complexities (e.g.\ clustering heuristics; see \cref{app:existing}).
Open problems include better automatic selection of the mutation length \(p\), more asynchronous
batched execution strategies on accelerators~\citep{bou-rabee_no-underrun_2025}, and integrating gradient information into constrained updates when it is available.



\subsubsection*{Acknowledgments}
We thank Sam Power, Adam Ormondroyd, Sam Leeney, Toby Lovick, and Metha Prathaban for helpful discussions and code testing. An early version of this work was presented at the ICLR Frontiers of Probabilistic Inference workshop~\citep{yallup_nested_2025}. We used OpenAI GPT-5.2 to refine portions of the draft, and Claude Opus 4.5 was used in refining the code. The authors take full responsibility for the final content. This material is based upon work supported by the Google Cloud research credits program, with the award GCP442577929. The authors were supported by the research environment and infrastructure of the Handley Lab at the University of Cambridge. N. K. was supported by the Harding Distinguished Postgraduate Scholarship.

\subsubsection*{Code Availability}
The Nested Sampling implementation is a pending contribution to the \texttt{blackjax} library. Experiment code scripts, and a current working nested sampling implementation, are available at \url{https://github.com/yallup/nss/}.

\bibliography{refs}
\bibliographystyle{tmlr}

\appendix


\newpage


\section{Baseline Methods and Configuration}\label{app:baselines}

\subsection{Adaptive Tempered SMC}\label{app:adaptive_smc}
SMC methods estimate ratios of normalizing constants by moving a particle system through a
sequence of bridging distributions \(\{\pi_t\}_{t=0}^T\) with unnormalized densities
\(\{\gamma_t\}\), using incremental importance weights, resampling, and MCMC mutation
\citep{delmoral,db01284a-0891-3c43-8c20-462b421d9d12}.
While many paths are possible (e.g.\ tempering, data-tempering, or other interpolations), for
benchmarking we use a standard \emph{tempered} (annealed) path targeting the power-posterior
family
\begin{equation}
    \pi_\beta(x) \;\propto\; \gamma_\beta(x) := \Pi(x)\exp(-\beta E(x)),\qquad \beta\in[0,1],
\end{equation}
with \(\beta_0=0\) and \(\beta_T=1\).

Given particles \(\{x_i^{(t)}\}_{i=1}^m\) approximating \(\pi_{\beta_t}\), we choose the next
temperature \(\beta_{t+1}>\beta_t\) adaptively using an ESS criterion following
\citet{fearnhead2010adaptivesequentialmontecarlo}.
For a candidate \(\beta\), define incremental weights
\begin{equation}
    \tilde w_i(\beta) := \exp\!\big(-(\beta-\beta_t)E(x_i^{(t)})\big),
    \qquad
    \bar w_i(\beta) := \frac{\tilde w_i(\beta)}{\sum_{j=1}^m \tilde w_j(\beta)},
\end{equation}
and the effective sample size \(\mathrm{ESS}(\beta):=1/\sum_{i=1}^m \bar w_i(\beta)^2\).
We then set \(\beta_{t+1}\) so that \(\mathrm{ESS}(\beta_{t+1})\approx \rho m\) for a fixed target
\(\rho\in(0,1)\).
The incremental normalizing-constant ratio has the standard estimator
\begin{equation}
    \frac{\widehat Z(\beta_{t+1})}{\widehat Z(\beta_t)}
    \;=\;
    \frac{1}{m}\sum_{i=1}^m \tilde w_i(\beta_{t+1}),
\end{equation}
so \(\log\widehat Z\) accumulates as a sum of \(\log\)-means over SMC steps.

After reweighting, we resample particles according to \(\bar w_i(\beta_{t+1})\) and then apply
an MCMC \emph{mutation} kernel that leaves \(\pi_{\beta_{t+1}}\) invariant.
The inner kernel is a design choice; common options include random-walk or independent
Metropolis--Hastings proposals \citep{10.5555/1051451,particlemontecarlo} and Hamiltonian Monte
Carlo \citep{DUANE1987216,buchholz2020adaptivetuninghamiltonianmonte}.
In our benchmarks we choose inner kernels and tuning rules to match NSS as closely as possible
(e.g.\ comparable particle counts and comparable numbers of inner-kernel steps), so that
differences primarily reflect the bridging path (tempered versus constraint-based) and the
resulting geometry of the intermediate targets.
Prior comparisons between NS and SMC have used a range of NS inner kernels
\citep{grosse2015sandwichingmarginallikelihoodusing,salomone2024unbiasedconsistentnestedsampling},
and inner-kernel choice can strongly affect the practical performance of either method.

The adaptive tempering sequence is closest in spirit to the automatic bridging sequence of NS, and the ESS criterion plays a similar role to the compression factor $k/m$ in NSS.

\subsection{Adaptive non-equilibrium samplers and parallelism}\label{app:term}
The nested and SMC samplers studied in this paper can be viewed as \emph{adaptive non-equilibrium samplers}: they use a termination criterion, their outer loops run for a data-dependent (non-deterministic) number of steps, and they target an evolving sequence of distributions. This structure is less immediately friendly to accelerators; however, it offers distinct advantages in robustness and ease of use. It is possible to express the computation using fixed-length scans, or to use \texttt{jax.lax.while\_loop} to lower the full loop into a single JIT-compiled function. In practice, however, one often wants access to intermediate states, which makes this less desirable.

We quantify the practical overhead associated with adaptive execution, including host--accelerator communication and schedule discovery, on two 20-dimensional test problems. We compare the full adaptive loop, in which the schedule is discovered on the fly, with a fixed \texttt{scan} over a precomputed schedule (assuming it is known \emph{a priori}, which is not the case in practice). We implement both variants for NSS and SMC-IRMH. To assess the additional overhead introduced by particle reweighting, we also include an Annealed Importance Sampling (AIS) baseline~\citep{neal1998annealedimportancesampling}, which can be viewed as the no-resampling analogue of tempered SMC, here using the same independent-MH transition kernel for comparability. Optimised AIS (OAIS)~\citep{syed2025optimisedannealedsequentialmonte} extends this framework with optimised tempering schedules and could serve as a more advanced parallel baseline in future work.

The targets are: (i) a 20-dimensional Gaussian model with prior $\Pi(x) = \mathcal{N}(x; 0, \mathbf{I})$ and Gaussian likelihood proportional to $\mathcal{N}(x; \mathbf{1}, 0.01\,\mathbf{I})$; and (ii) a 6-component mixture of Gaussians with a broad uniform prior, in which one mode carries substantially more mass than the others. Results are shown in \cref{tab:overhead}. Samplers use the default settings described in \cref{app:experiments}, run on the same hardware, and are averaged over 5 seeds. In these examples the likelihood is deliberately cheap, so host--accelerator communication overhead is comparatively most visible; even so, the extra overhead from the adaptive implementation is modest. On the Gaussian target, all methods recover similar evidence values, whereas on the mixture target SMC-IRMH and AIS substantially underestimate the evidence, highlighting that the comparison reflects both overhead and robustness. The NSS loop is computationally more expensive, but it delivers more robust estimates on the more challenging mixture problem. 

\begin{table}[ht]
    \caption{Runtime overhead and evidence estimates on two 20-dimensional problems ($m=1000$, averaged over 5 runs). $(\beta)$ denotes replay over a fixed schedule with \texttt{jax.lax.scan}.}
    \label{tab:overhead}
    \vskip 0.15in
    \begin{center}
    \begin{tabular}{lcccc}
    \toprule
    & \multicolumn{2}{c}{Gaussian} & \multicolumn{2}{c}{MoG} \\
    \cmidrule(lr){2-3} \cmidrule(lr){4-5}
    Method & Runtime (s) & $\ln Z$ & Runtime (s) & $\ln Z$ \\
    \midrule
    NSS & 21.52 & -28.19 $\pm$ 0.14 & 25.47 & -72.49 $\pm$ 1.03 \\
    NSS ($\beta$) & 15.69 & -27.92 $\pm$ 0.13 & 18.65 & -72.38 $\pm$ 0.98 \\
    \midrule
    SMC-IRMH & 5.15 & -28.35 $\pm$ 0.04 & 5.51 & -91.87 $\pm$ 0.06 \\
    SMC-IRMH ($\beta$) & 4.49 & -28.39 $\pm$ 0.06 & 4.77 & -91.93 $\pm$ 0.12 \\
    AIS & 3.76 & -28.39 $\pm$ 0.10 & 4.21 & -91.83 $\pm$ 0.38 \\
    \midrule
    Analytic & --- & -28.38 & --- & -73.68 \\
    \bottomrule
    \end{tabular}
    \end{center}
    \vskip -0.1in
\end{table}

\subsection{Algorithm Pseudocode}\label{app:algorithms}
For reference, \cref{alg:ns} and \cref{alg:smc} give pseudocode for the outer kernels of Nested Sampling and tempered SMC respectively, corresponding to the descriptions in \cref{sec:ns,sec:smc}.

\begin{algorithm}[htb]
    \caption{Nested Sampling outer kernel}\label{alg:ns}
 \begin{algorithmic}[1]
    \Require Live particles $\{x_i\}_{i=1}^m$, deletion batch size $k$
    \State Evaluate energies $\{E(x_i)\}_{i=1}^m$
    \State Sort particles by $E(x_i)$ and identify the $k$ highest-energy (lowest-likelihood) particles
    \State Record batch threshold $E_{\mathrm{batch}} \gets$ minimum energy among the top-$k$
    \State Store the $k$ deleted particles and their energies as dead points at level $E_{\mathrm{batch}}$
    \State \emph{Resample} $k$ parents from the remaining $m{-}k$ live particles
    \State \emph{Mutate} each parent with a constrained kernel targeting $\Pi_{E_{\mathrm{batch}}}(x) \propto \Pi(x)\,\mathbf{1}_{\{E(x) < E_{\mathrm{batch}}\}}$ for $p$ steps
    \State Replace the $k$ dead particles with the $k$ new samples
    \Ensure Updated live particles $\{x_i\}_{i=1}^m$, updated dead-point record
 \end{algorithmic}
\end{algorithm}

\begin{algorithm}[htb]
    \caption{Adaptive tempered SMC outer kernel}\label{alg:smc}
 \begin{algorithmic}[1]
    \Require Particles $\{x_i\}_{i=1}^m$ approximating $\pi_{\beta_t}$, temperature $\beta_t$
    \State Select next temperature $\beta_{t+1}$ (e.g.\ by targeting a desired ESS from incremental weights)
    \State Compute incremental weights $\tilde w_i \gets \exp\!\big(-(\beta_{t+1}-\beta_t)\,E(x_i)\big)$
    \State Accumulate $\log \widehat Z \gets \log \widehat Z + \log\!\left(\frac{1}{m}\sum_{i=1}^m \tilde w_i\right)$
    \State Normalize $\bar w_i \gets \tilde w_i \big/ \sum_{j=1}^m \tilde w_j$
    \State \emph{Resample} $\{x_i\}_{i=1}^m$ with replacement according to $\{\bar w_i\}$
    \State \emph{Mutate} each particle with an MCMC kernel targeting $\pi_{\beta_{t+1}}$ for $p$ steps
    \Ensure Updated particles $\{x_i\}_{i=1}^m$ approximating $\pi_{\beta_{t+1}}$, temperature $\beta_{t+1}$
 \end{algorithmic}
\end{algorithm}

\subsection{NSS Default Configuration}\label{app:nss_config}
Unless otherwise noted, NSS experiments use population size $m=1000$ and batch deletion size $k=100$ (i.e., $k/m=0.1$). We use $p=d$ MCMC steps per replacement, where $d$ is the parameter dimension. The theoretical analysis in \cref{lemma:optimal-slice-width} shows that the optimal slice width scales as $w^\star \propto d^{-1/2}$ for a $d$-dimensional isotropic target. In practice, we combine a fixed base width $w=1.0$ with adaptive whitening: at each outer iteration, we estimate the live-set covariance and sample HRSS directions in the corresponding Mahalanobis metric (see \cref{sec:nss}). This effectively rescales the slice width to the local geometry, so that a simple fixed $w$ in the whitened space approximates the optimal scaling. The stepping-out procedure then adapts to the actual slice length, ensuring robustness when the whitening is imperfect.

Termination occurs when the estimated remaining evidence contribution falls below $e^{-3}$ (synthetic) or $e^{-5}$ (Inference Gym) of the current estimate. For GP regression experiments requiring double precision, we use $m=500$ with $k=250$. For the higher-dimensional Inference Gym problems (S\&P500, RadonIndiana), we use $m=1000$ with $k=500$ and triple the MCMC steps to $p=3d$; the same tripling is applied to SMC baselines for consistency (although this is likely conservative for both methods).

\subsection{SMC Inner Kernel Configuration}\label{app:smc_benchmarks}
This section details the specific inner kernels used for our SMC baselines. We deliberately use settings that are stronger than typical library defaults to ensure fair comparisons: high ESS targets, long mutation chains, and adaptive proposals. Across all SMC variants we use \(m=1000\) particles (or $m=500$ for GP experiments) and adaptive tempering with an ESS target of \(0.9m\) \citep{fearnhead2010adaptivesequentialmontecarlo}. Within each SMC step we apply \(p=5d\) inner-kernel transitions per particle for random-walk kernels and $p=2-5$ HMC transitions with trajectory length $L=5$--$10$ for HMC kernels. As noted in \cref{app:nss_config}, we triple inner step counts for the higher-dimensional Inference Gym problems.

\paragraph{SMC-RW (random-walk Metropolis--Hastings).}
We use a Gaussian random-walk proposal with covariance estimated from the current particle
population and scaled as \(2.38^2/d\), following standard optimal-scaling results for random-walk
MH \citep{10.1214/ss/1177011137}. The resulting Metropolis--Hastings kernel targets the current
tempered density \(\pi_{\beta_t}\) \citep{10.5555/1051451}.

\paragraph{SMC-IRMH (independent Metropolis--Hastings).}
We use an independent Gaussian proposal fitted from the current particle cloud (empirical mean
and covariance) and apply an independent MH correction
\citep{10.5555/1051451,10.1214/18-BA1129}. This can be interpreted as an importance-sampling-like
mutation step inside a particle Monte Carlo method \citep{particlemontecarlo}, and is closely
related in spirit to Annealed Importance Sampling when used along a tempered path
\citep{neal1998annealedimportancesampling}.

\paragraph{SMC-HMC (Hamiltonian Monte Carlo).}
We use an HMC mutation kernel with mass-matrix adaptation based on the particle covariance, and step size scaled using the Expected Squared Jump Distance of the previous ($\beta_{t-1}$) temperature step trajectory \citep{buchholz2020adaptivetuninghamiltonianmonte}.
This provides a gradient-based baseline on problems where gradients are available and the
geometry is amenable to Hamiltonian dynamics.

\paragraph{SMC-SS (slice sampling).}
We also consider slice sampling as an SMC mutation kernel \citep{nealslicesampling}. This targets
the tempered density \(\pi_{\beta_t}(x)\propto \Pi(x)\exp(-\beta_t E(x))\) without hard constraints,
and provides a gradient-free alternative to MH/HMC within the same SMC outer loop. This is an important control as the inner kernel exactly matches that of NSS, with identical tuning (slice width, number of steps). 

\paragraph{Additional baselines.}
We did not include more expressive learned proposals (e.g.\ flow-based repeated annealing)
\citep{matthews2023continualrepeatedannealedflow} or alternative classical baselines such as
parallel tempering \citep{Syed_2021}, in order to keep the comparison focused on strong,
standard kernels. We report point estimates of \(\log Z\) from SMC; unbiasedness and error bounds
for SMC normalizing-constant estimators are available under standard conditions
\citep{beskos2011errorboundsnormalizingconstants}. Finally, we use standalone SS and NUTS as
posterior-only baselines where appropriate; these do not provide marginal-likelihood estimates. NUTS is run with warmup window adaption~\citep{carpenter2017stan}, and SS is run for a similar number of steps (1000-2000) with covariance set from the initial particle sample covariance.
We remark that the slice sampler in NSS can be combined with the recent reformulation of slice sampling as a finite state machine as a means to soften the synchronization penalty \citep{dance2025efficiently} to potentially yield an even larger speedup of slice sampling.
For problems with explicit linear (polytope) constraints, the chord endpoints along a Hit-and-Run direction can instead be computed analytically, enabling more specialised higher-order or preconditioned Hit-and-Run variants~\citep{paul2026higher}.

\subsection{Hit-and-Run Slice Sampling Implementation}\label{app:har_slice}
Our constrained update kernel is Hit-and-Run Slice Sampling (HRSS), implemented using the
stepping-out and shrinkage procedures of slice sampling \citep{nealslicesampling}.
For a current state \(x\) and (randomised) direction \(v\), HRSS performs a one-dimensional slice
update along the line \(\{x+t v : t\in\mathbb{R}\}\). We work in log space throughout and treat
points outside the constraint (or outside the support of \(\Pi\)) as having log density
\(-\infty\).

\paragraph{Stepping-out.}
Given a slice height (auxiliary) variable, stepping-out expands an interval until it brackets
the horizontal slice along the chosen line.
We use the standard \emph{linear stepping-out} procedure with a randomised initial bracket of
width \(w\) \citep{nealslicesampling}, which yields a transition kernel that leaves the target
invariant under the usual slice-sampling conditions.

To ensure bounded-loop execution under JIT compilation, we cap the number of stepping-out expansions at 10 steps in each direction. This cap exists purely for safety against unbounded \texttt{while} loops and is not expected to affect the sampler under normal operation.

\paragraph{Shrinkage.}
Given a bracketing interval, we sample a candidate \(t'\) uniformly from the current interval
and accept it if it lies in the slice; otherwise we shrink the interval by replacing the left or
right endpoint with \(t'\) and repeat.
This shrinkage loop produces an exact slice update along the line in the idealised setting, and
in particular it does not require a separate Metropolis--Hastings accept/reject step.

As with stepping-out, we cap the maximum number of shrinkage iterations at 100 for robustness. The stepping-out cap is implemented as in \citet{nealslicesampling} and the resulting algorithm is unbiased regardless of whether this limit is reached. The shrinkage cap is more consequential: if reached, the update returns the current state as a null move. In practice, reaching the shrinkage cap signals a pathological condition. We have identified two scenarios where this can occur: (a) the likelihood surface reaches a flat plateau (more likely in reduced precision), in which case the cap serves as an augmented termination criterion alongside compression-based stopping, and the result remains unbiased; (b) the likelihood is non-deterministic (e.g.\ consuming a random seed), which violates the standard assumption of a deterministic target and would require a pseudo-marginal extension~\citep{murray2016pseudomarginalslicesampling}. Across all experiments in \cref{sec:experiments} neither the stepping-out cap (10 per direction) nor the shrinkage cap (100) was ever reached.

\paragraph{Reduced-precision execution.}
HRSS requires only comparisons of (log-)densities and does not use gradients or numerical
integration. This makes it straightforward to run in single or mixed precision, provided the
log-density computation itself is stable. We did not observe the same sensitivity to reduced precision that can arise in gradient-based inner kernels. We found relatively robust performance on all problems, including GP hyperparameter marginalization, when using single precision. Due to potential instabilities of other algorithms on the GP tasks, we used double precision for all methods in that section to ensure a fair comparison.

\subsection{Slice Sampling vs Random Walk}\label{app:ss_vs_rw}

A key advantage of slice sampling over random-walk Metropolis--Hastings for constrained sampling is the \emph{concentration of per-step computational cost}. This property is critical for efficient GPU execution: when many chains run in parallel via \texttt{jax.vmap}, the wall-clock time per batch is determined by the slowest chain. High variance in per-step cost leads to wasted computation as faster chains wait for stragglers.

We compare NSS (using HRSS) against constrained random walk (RW) on an ill-conditioned Gaussian target with condition number $\kappa=100$. For RW, we use a Gaussian proposal with covariance matched to the target and scaled optimally, rejecting proposals that violate the constraint. \Cref{tab:constrained_stats} shows that NSS maintains nearly constant cost ($\approx 5$ evaluations per step with standard deviation $\approx 1.2$) across dimensions 10--100, while RW has both higher mean cost and dramatically higher variance (standard deviation $\approx 30$--34).

\begin{table}[ht]
\caption{Evaluations per step for constrained sampling on an ill-conditioned Gaussian ($\kappa=100$). NSS uses slice sampling; RW uses random walk with rejection. NSS maintains constant cost with low variance across dimensions, while RW exhibits high variance that degrades batched execution efficiency.}
\label{tab:constrained_stats}
\vskip 0.15in
\scriptsize
\begin{center}
\begin{tabular}{lcc}
\toprule
 & \multicolumn{2}{c}{Mean $\pm$ std Evals per successful sample} \\
\cmidrule(lr){2-3}
Dimension & NSS & RW \\
\midrule
10 & 4.9 $\pm$ 1.2 & 19.6 $\pm$ 30.7 \\
50 & 5.0 $\pm$ 1.2 & 25.3 $\pm$ 33.7 \\
100 & 5.1 $\pm$ 1.2 & 24.1 $\pm$ 32.7 \\
\bottomrule
\end{tabular}
\end{center}
\vskip -0.1in
\end{table}

The distribution of per-step costs reveals why this matters for vectorized execution. \Cref{fig:walk_length} shows that NSS cost is tightly concentrated around the mean, while RW exhibits a heavy tail extending to 100+ evaluations per step. In a batched setting, even a small fraction of high-cost chains forces all other chains to wait, severely degrading parallel efficiency.

\begin{figure}[ht]
    \vskip 0.2in
    \begin{center}
    \centerline{\includegraphics[width=0.32\columnwidth]{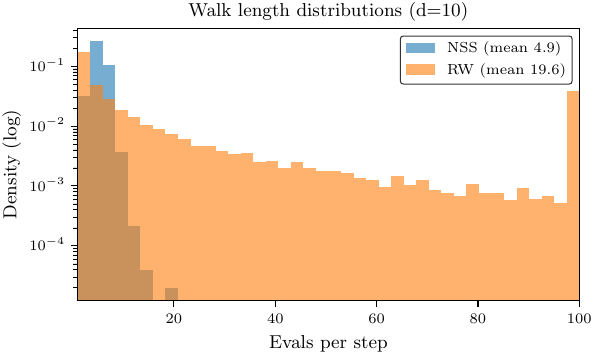}\includegraphics[width=0.32\columnwidth]{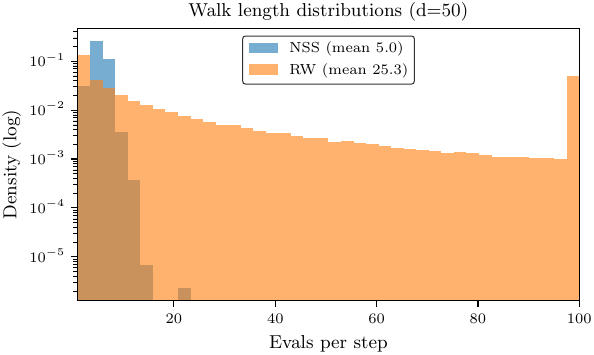}\includegraphics[width=0.32\columnwidth]{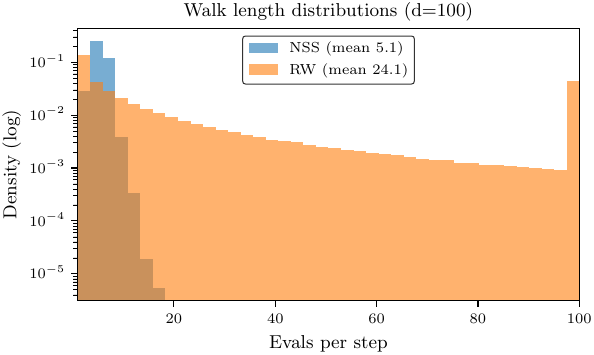}}
    \caption{Distribution of likelihood evaluations per constrained step for NSS (slice sampling, blue) vs RW (random walk, orange) on an ill-conditioned Gaussian ($\kappa=100$) in dimensions 10, 50, and 100. NSS maintains concentrated cost ($\approx 5$ evals, with standard deviation $\approx 1$) while RW exhibits heavy tails that degrade batched execution efficiency. }
    \label{fig:walk_length}
    \end{center}
    \vskip -0.2in
\end{figure}

This empirical finding aligns with the theoretical analysis in \cref{app:hrss-variance-of-expected-cost}, which shows that HRSS cost variance remains $O(1)$ across a wide range of dimensions when operating near the optimal slice width. The concentration property is a direct consequence of the stepping-out procedure automatically adapting to the local geometry, whereas rejection-based methods accumulate rejections at the constraint boundary.

\section{Extended Experimental Results}\label{app:experiments}

This section provides detailed problem descriptions and complete results for the synthetic benchmarks discussed in \cref{sec:synthetic}. For each problem, we provide the target distribution definition, ground truth computation methodology, and extended performance comparisons. Across all synthetic experiments we report error bars on key quality metrics ($W_2$, MMD) computed over 5 independent draws from the reference density. We run each algorithm with 5 independent seeds to measure the run-to-run variability on a subset of the problems. As all experiments are run on identical hardware, and we are comparing across different algorithms with different computational profiles, we report results in terms of number of target density evaluations, wall clock time, and the ESS per wall clock time as a measure of efficiency. The ESS for NS is measured as described in \cref{app:ns_volume}, and for SMC variants it is computed by summing contributions from all annealing steps using the standard formula~\citep{le_thu_nguyen_improving_2014}. All reported wall-clock times include \texttt{jax} JIT compilation overhead, which represents a significant proportion of total runtime on these problems; production use would amortize this cost across multiple runs.

\subsection{Mixture of Gaussians ($d{=}2$, 40 modes)}\label{app:mog40}

This pedagogical example, introduced by~\citet{midgley2023flowannealedimportancesampling}, consists of 40 bivariate normal distributions arranged on a bounded uniform reference density. The ground truth evidence is $\ln Z = -9.21$, computed analytically from the mixture weights and component normalizations. Interestingly, standalone SS also mixes well despite being a single chain method. The full results are shown in \cref{tab:mog40} and \cref{fig:mog}.

\begin{table}[ht]
    \caption{Performance on the 40-component mixture of Gaussians. Results averaged over 5 runs. $\ln Z$ reported where available.}
    \label{tab:mog40}
    \vskip 0.15in
    \begin{center}
    \scriptsize
    \begin{tabular}{lrrrrrr}
    \toprule
    Method & Energy evals & Time (s) & ESS/s & $\ln Z$ & MMD & $W_2$ \\
    \midrule
    Truth & - & - & - & -9.21 & 0.021 $\pm$ 0.009 & 3.51 $\pm$ 0.71 \\
    SS & 6.6 $\times 10^{4}$ & 9.2 & 1.1 $\times 10^{3}$ & - & 0.027 $\pm$ 0.015 & 3.84 $\pm$ 0.94 \\
    NUTS & 4.4 $\times 10^{4}$ & 37 & 2.7 $\times 10^{2}$ & - & 2.8 $\pm$ 0.27 & 34.7 $\pm$ 2.1 \\
    \hline
    \multicolumn{7}{c}{\textit{Particle Methods}}\\
    NSS & 7.8 $\times 10^{4}$ & 5.1 & 7.4 $\times 10^{2}$ & -9.19 $\pm$ 0.02 & 0.029 $\pm$ 0.014 & 4.06 $\pm$ 0.87 \\
    NS-RW & 6.4 $\times 10^{4}$ & 3.3 & 1.1 $\times 10^{3}$ & -9.20 $\pm$ 0.04 & 0.032 $\pm$ 0.012 & 4.42 $\pm$ 0.63 \\
    SMC-SS & 2 $\times 10^{5}$ & 5.9 & 1.2 $\times 10^{3}$ & -9.20 $\pm$ 0.01 & 0.036 $\pm$ 0.012 & 4.43 $\pm$ 0.54 \\
    SMC-RW & 1.9 $\times 10^{5}$ & 3.6 & 2 $\times 10^{3}$ & -9.21 $\pm$ 0.01 & 0.032 $\pm$ 0.012 & 4.22 $\pm$ 0.75 \\
    SMC-IRMH & 1.9 $\times 10^{5}$ & 3.5 & 2 $\times 10^{3}$ & -9.21 $\pm$ 0.01 & 0.023 $\pm$ 0.008 & 3.82 $\pm$ 0.79 \\
    SMC-HMC & 1.9 $\times 10^{5}$ & 8.1 & 8.6 $\times 10^{2}$ & -9.22 $\pm$ 0.06 & 0.034 $\pm$ 0.019 & 4.48 $\pm$ 1.2 \\
    \bottomrule
    \end{tabular}
    \end{center}
    \vskip -0.1in
\end{table}

\subsection{Mixture of Gaussians ($d{=}10$)}\label{app:mog10d}

This benchmark extends multimodal sampling challenges to higher dimensions. The target is a mixture of five 10-dimensional Gaussian distributions with randomized means and covariances, designed to test mode discovery and proper weighting across well-separated regions of parameter space.

Ground truth samples are drawn directly from the known mixture components, providing exact reference distributions for quality metrics. This problem starts to differential the particle methods, where both SMC-SS and NSS are particularly strong. This pattern holds on random re-seeding. The full results of this experiment are shown in \cref{tab:mog10} and \cref{fig:mog10d_app}.

\begin{figure}[ht]
    \vskip 0.2in
    \begin{center}
    \centerline{\includegraphics{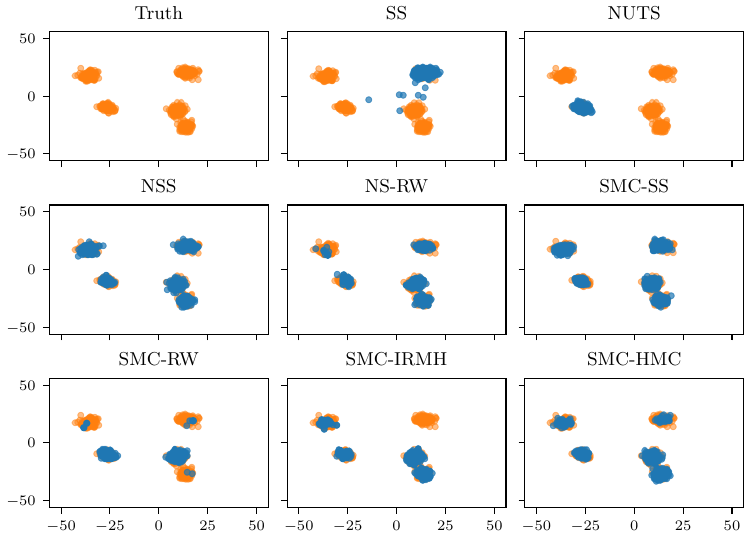}}
    \caption{Marginal distribution of the first two parameters of the $d{=}10$ mixture of Gaussians. Ground truth samples (blue) compared with NSS samples (orange).}
    \label{fig:mog10d_app}
    \end{center}
    \vskip -0.2in
\end{figure}

\begin{table}[ht]
    \caption{Performance on the $d{=}10$ mixture of Gaussians. Results averaged over 5 runs.}
    \label{tab:mog10}
    \vskip 0.15in
    \begin{center}
    \scriptsize
    \begin{tabular}{lrrrrrr}
    \toprule
    Method & Energy evals & Time (s) & ESS/s & $\ln Z$ & MMD & $W_2$ \\
    \midrule
    Truth & - & - & - & -46.05 & 0.036 $\pm$ 0.02 & 4.70 $\pm$ 0.95 \\
    SS & 6.7 $\times 10^{4}$ & 9.5 & 1.1 $\times 10^{3}$ & - & 2.7 $\pm$ 0.24 & 22.2 $\pm$ 0.53 \\
    NUTS & 3.6 $\times 10^{5}$ & 89 & 1.1 $\times 10^{2}$ & - & 2.8 $\pm$ 0.06 & 22.5 $\pm$ 0.12 \\
    \hline
    \multicolumn{7}{c}{\textit{Particle Methods}}\\
    NSS & 1.5 $\times 10^{6}$ & 8.1 & 9.9 $\times 10^{2}$ & -45.97 $\pm$ 0.11 & 0.19 $\pm$ 0.05 & 9.56 $\pm$ 0.89 \\
    NS-RW & 1.5 $\times 10^{6}$ & 5.3 & 1.3 $\times 10^{3}$ & -46.09 $\pm$ 0.44 & 0.94 $\pm$ 0.41 & 15.4 $\pm$ 2.0 \\
    SMC-SS & 2.6 $\times 10^{6}$ & 6.8 & 4.4 $\times 10^{2}$ & -46.07 $\pm$ 0.05 & 0.044 $\pm$ 0.03 & 5.43 $\pm$ 1.4 \\
    SMC-RW & 2.7 $\times 10^{6}$ & 4.1 & 7.3 $\times 10^{2}$ & -47.93 $\pm$ 1.33 & 1.9 $\pm$ 0.65 & 19.0 $\pm$ 2.7 \\
    SMC-IRMH & 2.7 $\times 10^{6}$ & 4.4 & 6.7 $\times 10^{2}$ & -46.88 $\pm$ 0.22 & 0.93 $\pm$ 0.3 & 15.3 $\pm$ 1.8 \\
    SMC-HMC & 5.4 $\times 10^{5}$ & 8.5 & 3.5 $\times 10^{2}$ & -46.13 $\pm$ 0.46 & 0.5 $\pm$ 0.23 & 12.3 $\pm$ 1.8 \\
    \bottomrule
    \end{tabular}
    \end{center}
    \vskip -0.1in
\end{table}

\subsection{Neal's Funnel}\label{app:funnel}

The funnel distribution~\citep{nealslicesampling} is a canonical example of a hierarchical structure that presents significant challenges for sampling algorithms. In 10 dimensions, the target density is defined as:
\begin{equation}
    P(y,\mathbf{x}) = \mathcal{N}(y \mid 0, 3) \prod_{n=1}^{9} \mathcal{N}(x_n \mid 0, \exp(y/2))\,,
\end{equation}
with a uniform prior distribution on all parameters. The hierarchical structure creates a ``funnel'' shape where the variance of the $x$ variables depends exponentially on $y$, causing the distribution to span many orders of magnitude in scale.

Ground truth samples are obtained by running HMC with a non-centered parameterization~\citep{gorinova2019automaticreparameterisationprobabilisticprograms}, which removes the problematic coupling. All algorithms under test use the more challenging centered parameterization to evaluate robustness to difficult geometries. Although HRSS has limited tuning requirements (primarily the slice width), embedding it within a particle method like NSS makes tuning effectively automatic: the adaptive whitening based on the live-set covariance (\cref{sec:nss}) continuously rescales the sampler to the local geometry, removing the need for manual tuning even on problems with extreme scale variation like the funnel. The difference between NSS and SMC-SS on this problem highlights that NS provides a more natural framework to embed slice sampling within, compared to the tempered SMC approach. The full results are shown in \cref{tab:funnel} and \cref{fig:funnel_app}.

\begin{figure}[ht]
    \vskip 0.2in
    \begin{center}
    \centerline{\includegraphics{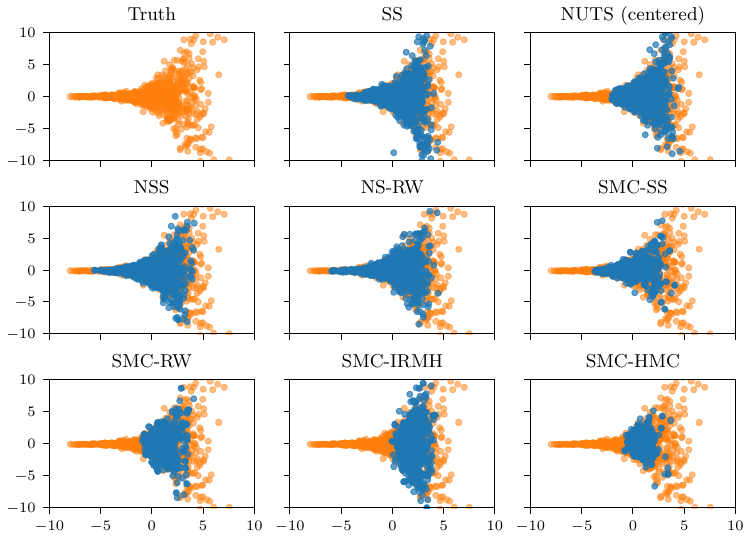}}
    \caption{Marginal distribution of the first two parameters of Neal's $d{=}10$ funnel. Ground truth samples (blue) compared with NSS samples (orange).}
    \label{fig:funnel_app}
    \end{center}
    \vskip -0.2in
\end{figure}

\begin{table}[ht]
    \caption{Performance on the $d{=}10$ funnel target. Results averaged over 5 runs. $\ln Z$ reported where available.}
    \label{tab:funnel}
    \vskip 0.15in
    \begin{center}
    \scriptsize
    \begin{tabular}{lrrrrrr}
    \toprule
    Method & Energy evals & Time (s) & ESS/s & $\ln Z$ & MMD & $W_2$ \\
    \midrule
    Truth & - & - & - & - & (8.3 $\pm$ 1.5) $\times 10^{-3}$ & 7.35 $\pm$ 0.40 \\
    NUTS (centered) & 4.1 $\times 10^{4}$ & 20 & 1 $\times 10^{2}$ & - & 0.031 $\pm$ 0.004 & 9.07 $\pm$ 0.03 \\
    SS & 5.8 $\times 10^{4}$ & 6.3 & 1.6 $\times 10^{3}$ & - & 0.041 $\pm$ 0.011 & 9.22 $\pm$ 0.18 \\
    \hline
    \multicolumn{7}{c}{\textit{Particle Methods}}\\
    NSS & 2 $\times 10^{6}$ & 7.7 & 4 $\times 10^{3}$ & -62.34 $\pm$ 0.10 & 0.033 $\pm$ 0.002 & 9.17 $\pm$ 0.06 \\
    NS-RW & 2.2 $\times 10^{6}$ & 5.6 & 5.9 $\times 10^{3}$ & -62.34 $\pm$ 0.05 & 0.033 $\pm$ 0.003 & 9.18 $\pm$ 0.06 \\
    SMC-SS & 2.6 $\times 10^{6}$ & 6.3 & 3.7 $\times 10^{3}$ & -62.63 $\pm$ 0.18 & 0.033 $\pm$ 0.002 & 9.08 $\pm$ 0.05 \\
    SMC-RW & 2.5 $\times 10^{6}$ & 4.1 & 5.6 $\times 10^{3}$ & -63.56 $\pm$ 0.17 & 0.066 $\pm$ 0.005 & 8.92 $\pm$ 0.01 \\
    SMC-IRMH & 2.6 $\times 10^{6}$ & 4.4 & 5.3 $\times 10^{3}$ & -63.53 $\pm$ 0.10 & 0.063 $\pm$ 0.003 & 8.95 $\pm$ 0.03 \\
    SMC-HMC & 5.2 $\times 10^{5}$ & 8.2 & 2.9 $\times 10^{3}$ & -62.71 $\pm$ 1.42 & 0.059 $\pm$ 0.010 & 9.03 $\pm$ 0.22 \\
    \bottomrule
    \end{tabular}
    \end{center}
    \vskip -0.1in
\end{table}

\subsection{Inference Gym Models}\label{app:inference_gym}

We run a reduced, more targeted subset of the full algorithmic range on the remaining problems explored, restricting to SMC-RW, SMC-HMC and NSS. The NUTS reference chains used to assess quality were run with 1000 warm-up steps and for 5000 sampling steps, before thinning to 1000 posterior samples. The NUTS chains were run in double precision on the machine CPU as this executed significantly faster than using the hardware accelerator, highlighting the fact that the throughput of a GPU is best exploited with ensemble methods. The standard prior and likelihood definitions provided by the library were used throughout~\cite{inferencegym2020}, with a filtering to prior samples applied to ensure valid support on all initial prior samples.

The S\&P500 stochastic volatility model presents unique challenges due to its heavy-tailed Cauchy priors on volatility parameters, which produce high-variance samples during prior initialization. Additionally, this model requires filtering-based likelihood evaluation, which can amplify sensitivity to initialization. Without aggressive filtering of prior samples to ensure valid likelihood support, all particle methods fail entirely on this problem. With proper initialization, NSS achieves comparable evidence variance to SMC-HMC and substantially outperforms SMC-RW; SMC-HMC benefits from gradient information on this smooth target, achieving the best posterior quality metrics.

\subsection{Gaussian Process Hyperparameter Marginalization}\label{app:gp}

This section provides details for the GP regression experiments in \cref{sec:gp}. We use a composite kernel combining trend and periodic components, following the spectral mixture approach of \citet{pmlr-v28-wilson13}.

\paragraph{Kernel specification.}
The covariance function is a sum of three components:
\begin{equation}
    k(x, x') = k_{\mathrm{const}}(x, x') + k_{\mathrm{linear}}(x, x') + k_{\mathrm{SM}}(x, x'),
\end{equation}
where the constant and linear kernels capture the long-term trend:
\begin{align}
    k_{\mathrm{const}}(x, x') &= \sigma_c^2, \\
    k_{\mathrm{linear}}(x, x') &= \sigma_\ell^2 \, x^\top x',
\end{align}
and the spectral mixture kernel captures periodic structure:
\begin{equation}
    k_{\mathrm{SM}}(x, x') = \sum_{q=1}^{Q} w_q \exp\!\left(-2\pi^2 (x - x')^2 / \lambda_q^2\right) \cos\!\left(2\pi f_q (x - x')\right).
\end{equation}
We use $Q=3$ mixture components, giving 11 hyperparameters in total: $\{\sigma_c^2, \sigma_\ell^2, \sigma_n^2, w_{1:3}, \lambda_{1:3}, f_{1:3}\}$, where $\sigma_n^2$ is the observation noise variance.

\paragraph{Priors.}
All variance and scale parameters are assigned standard normal priors in log-space: $\log \sigma_c^2, \log \sigma_\ell^2, \log \sigma_n^2, \log w_q, \log \lambda_q \sim \mathcal{N}(0, 1)$. The frequencies $f_1 < f_2 < f_3$ are constrained to be sorted and lie in $[0, f_{\mathrm{Nyquist}}]$, where $f_{\mathrm{Nyquist}} = N/(2T)$ for $N$ observations over time span $T$. We place a uniform prior on the sorted frequencies, implemented via a Dirichlet-based bijection with appropriate Jacobian correction.

\paragraph{Data preprocessing.}
Both inputs $X$ and outputs $y$ are standardized (zero mean, unit variance) before inference. Reported predictions are transformed back to the original scale for visualization. We apply the methodology to two standard datasets: the Mauna Loa CO\textsubscript{2} dataset~\citep{keeling1976atmosphericco2} and the airline passenger dataset~\citep{boxjen76}.

\paragraph{Experiment configuration.}
We use $m=500$ particles with $k=250$ for NSS, and $m=500$ particles for SMC baselines. NSS uses $p=d$ MCMC steps; SMC-RW uses $p=5d$ steps; SMC-HMC uses $p=2$ transitions with trajectory length $L=10$. NUTS uses 1000 warmup and 1000 sampling iterations. Results are averaged over 5 independent runs with a 70/30 train/test split.

\subsection{Dimension Scaling of Evidence Estimation}\label{app:dimension_scaling}

Beyond per-step efficiency, we evaluate the accuracy of evidence estimation as dimension increases. This experiment also illustrates the distinction between two sources of uncertainty in NS evidence estimates:
\begin{enumerate}
    \item Geometric uncertainty: Stochastic variation from the volume shrinkage process, computable via Monte Carlo simulation of Beta-distributed shrinkage factors (\cref{app:ns_volume}). 
    \item Mixing uncertainty: Systematic bias from imperfect constrained sampling that fails to maintain the i.i.d.\ assumption on the live set. This appears as run-to-run variation \emph{beyond} the geometric uncertainty.
\end{enumerate}
When the constrained sampler mixes well, empirical run-to-run variation should be consistent with the reported geometric uncertainty. When mixing fails, runs will exhibit systematic bias.

We use the 10$d$ Mixture of Gaussians experimental design (\cref{app:mog10d}) with an analytically known evidence, varying the dimension from 10 to 20. \Cref{fig:dimension_scaling} and \Cref{tab:dimension_scaling} show results from averaging 10 independent runs per method.

\begin{figure}[ht]
    \vskip 0.2in
    \begin{center}
    \centerline{\includegraphics[width=0.8\columnwidth]{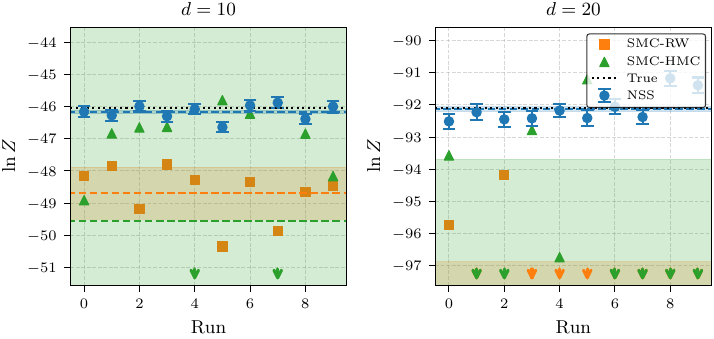}}
    \caption{Evidence estimates ($\ln Z$) across 10 independent runs for MoG targets in $d{=}10$ (left) and $d{=}20$ (right). NSS (blue circles) consistently tracks the true value (black dotted line), while SMC-RW (orange squares) and SMC-HMC (green triangles) exhibit systematic negative bias that worsens with dimension. Arrows indicate runs that failed to reach $\beta=1$ within the budget.}
    \label{fig:dimension_scaling}
    \end{center}
    \vskip -0.2in
\end{figure}

\begin{table}[ht]
\caption{Evidence estimation accuracy on MoG targets at $d=\{10,20\}$. Mean absolute error $|\ln Z - \ln Z_{\mathrm{true}}|$ from 10 independent runs. NSS$^*$ denotes the estimate obtained by unrolling the order statistics across all 10 runs into a single combined evidence estimate~\cref{app:ns_volume}.}
\label{tab:dimension_scaling}
\vskip 0.15in
\scriptsize
\begin{center}
\begin{tabular}{lcccc}
\toprule
 & \multicolumn{4}{c}{$|\ln Z - \ln Z_{\mathrm{true}}|$} \\
\cmidrule(lr){2-5}
$d$ & NSS & NSS$^*$ & SMC-RW & SMC-HMC \\
\midrule
10 & 0.19 $\pm$ 0.16 & 0.12 & 2.64 $\pm$ 0.81 & 3.55 $\pm$ 6.19 \\
20 & 0.35 $\pm$ 0.27 & 0.02 & 7.86 $\pm$ 3.09 & 8.04 $\pm$ 6.05 \\
\bottomrule
\end{tabular}
\end{center}
\vskip -0.1in
\end{table}

NSS maintains accurate evidence estimates across dimensions, with mean absolute error $\approx 0.2$--$0.35$ that is consistent with the NS geometric uncertainty---indicating that the constrained slice sampler mixes well and the reported uncertainties are reliable. When combining all 10 runs (as would be done in practice for uncertainty reduction), NSS achieves errors of 0.12 (10$d$) and 0.02 (20$d$). In contrast, both SMC-RW and SMC-HMC exhibit systematic negative bias that far exceeds their nominal uncertainties: errors of 2.6--3.6 in 10$d$ worsen to 7.9--8.0 in 20$d$. This demonstrates mixing uncertainty dominating: the samplers fail to discover all modes, producing biased evidence estimates regardless of the number of particles. The bias is particularly severe for SMC-HMC in 20$d$, where most runs fail to discover all modes (indicated by arrows in \cref{fig:dimension_scaling}). The constraint-based path of NS, combined with the robustness of slice sampling, provides more reliable evidence estimation than tempered SMC with standard mutation kernels on this problem.

\subsection{Dimension Scaling of Inner mutation step}\label{app:mcmc_steps}

In \cref{app:ss_vs_rw} we showed that relative to a constrained random walk baseline, slice sampling maintains a constant per-step cost distribution as dimension increases. However, this doesn't address the mixing rate of the chain itself which will exhibit scaling with dimension. Hamiltonian Monte Carlo is known to have a favorable $d^{1/4}$ scaling for unconstrained targets~\citep{beskos2010optimaltuninghybridmontecarlo}, while random walk methods typically scale as $d$ or worse~\citep{10.1214/ss/1177011137}. The impact of this is that we keep the number of MCMC mutations per replacement proportional to dimension in algorithms that display this random walk scaling (NSS, SMC-RW) to maintain a roughly constant effective sample size per replacement. For SMC-HMC we make the realistic choice of using a fixed number (2-5) of HMC transitions per replacement, as the $d^{1/4}$ scaling is mild enough that this remains effective in practice.

\paragraph{Mutation Chain Length}

The number of MCMC steps $p$ per particle replacement controls the tradeoff between decorrelation and cost. \Cref{fig:ablation_mcmc_steps} shows ablations varying $p$ on the $d{=}10$ MoG problem (5 independent runs each). All methods use $m=1000$ particles. NSS uses batch deletion $k=100$ and sweeps $p\in\{1,3,5,\ldots,20\}$. SMC-RW uses ESS target $0.9$ and sweeps $p\in\{5,15,\ldots,250\}$. SMC-HMC uses ESS target $0.9$, trajectory length $L=5$, and sweeps $p\in\{1,2,4,6,8,10\}$.

\begin{figure}[ht]
    \vskip 0.2in
    \begin{center}
    \includegraphics[width=0.7\columnwidth]{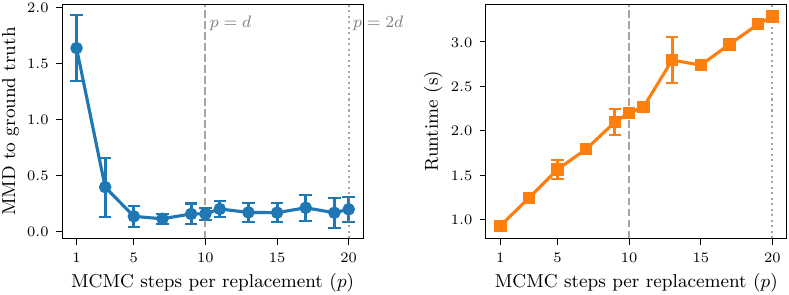}\\[0.3em]
    \includegraphics[width=0.7\columnwidth]{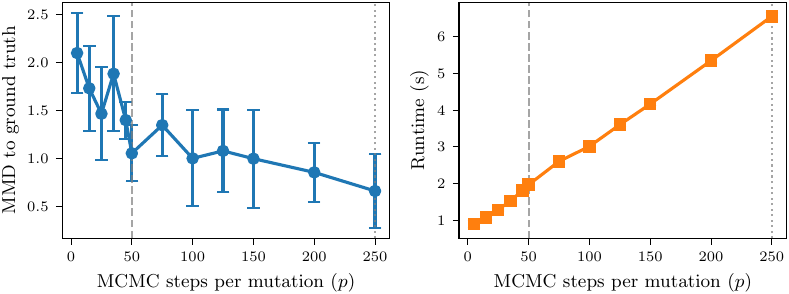}\\[0.3em]
    \includegraphics[width=0.7\columnwidth]{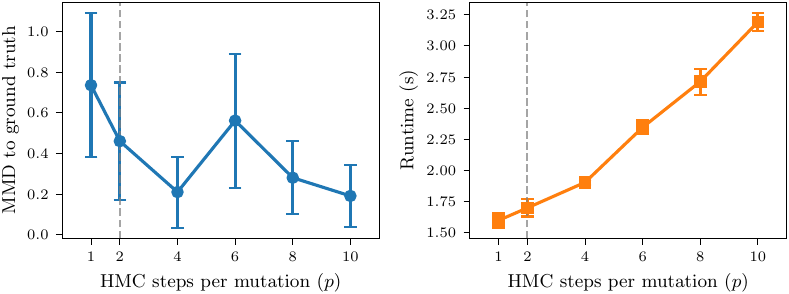}
    \caption{Ablation of MCMC chain length $p$ on $d{=}10$ MoG. Top: NSS ($p=1$--$20$). Middle: SMC-RW ($p=5$--$250$). Bottom: SMC-HMC ($p=1$--$10$, trajectory length $L=5$).}
    \label{fig:ablation_mcmc_steps}
    \end{center}
    \vskip -0.2in
\end{figure}

NSS plateaus around $p\approx 7$ with MMD $\approx 0.11$; we use $p=d$ for consistency. SMC-RW and SMC-HMC improve more slowly with increased $p$. The optimal $p$ is problem-specific; for challenging problems, a similar ablation is recommended. Assuming a near optimally tuned Gaussian Random Walk, with acceptance rate around 0.234 \citep{10.1214/ss/1177011137}, attempting $p=5d$ moves will yield a similar effort to NSS (which is rejection free) making $p=d$ moves. We note that the mean values of slice walk lengths in \cref{tab:constrained_stats} are around 5, so the total evaluations are indeed similar between the Gaussian walk and HRSS, with some loss of parallel efficiency incurred for HRSS. Despite this random walk scaling, NSS can still extend to hundreds of dimensions~\citep{yallup_particle_2025}. Going much beyond this requires more sophisticated constrained samplers, which we explore more in \cref{app:optimal-scaling-hrss}.


\section{GPU Scaling and Vectorization}\label{app:scaling}

With the increased proliferation of Neural Network surrogates, neural Simulation-Based Inference, and the general utilization of GPU acceleration in scientific problems, sampling algorithms that efficiently leverage parallel hardware have become essential. However, implementing typically control-flow-heavy MCMC methods effectively on architectures like GPUs presents challenges~\citep{pmlr-v151-hoffman22a, pmlr-v130-hoffman21a}.

The efficiency of this parallelization is limited by the inner MCMC step, here Hit-and-Run Slice Sampling. As discussed in \cref{app:ss_vs_rw}, slice sampling involves a variable number of likelihood evaluations per call (e.g., during stepping-out). When executing $k$ chains in parallel (e.g., via \texttt{jax.vmap}), the time for the collective step is determined by the chain requiring the maximum number of evaluations. However, the cost concentration property of HRSS (\cref{app:optimal-scaling-hrss}) ensures that the per-chain variability remains low, enabling significant wall-clock time reductions on highly parallel hardware.

\begin{figure}[ht]
    \vskip 0.2in
    \begin{center}
    \includegraphics[width=0.9\columnwidth]{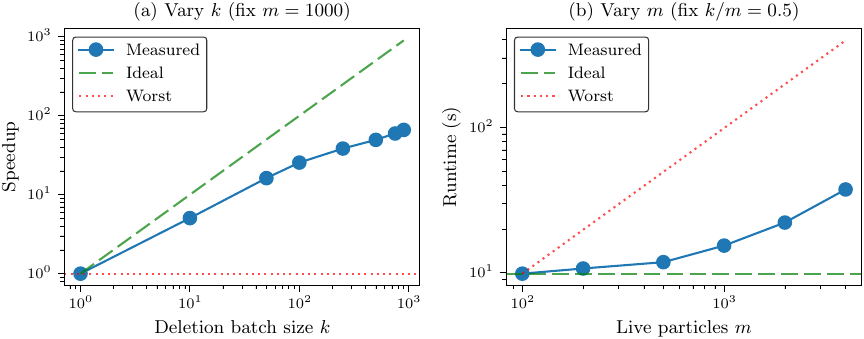}
    \caption{GPU scaling ablation for NSS on GP hyperparameter inference (Airline Passengers, $d{=}11$). Left: runtime vs.\ deletion batch size $k$. Right: $\ln Z$ accuracy vs.\ $k$. The corresponding GermanCredit scaling results are shown in \cref{fig:scaling_credit} in the main text.}
    \label{fig:gpu_scaling}
    \end{center}
    \vskip -0.2in
\end{figure}

Detailed scaling results are provided in \cref{tab:scaling}. Note that timings in this section exclude JIT compilation overhead to isolate the scaling behavior of the algorithm itself. For the GP task (\cref{fig:gpu_scaling}), increasing the deletion batch size $k$ from 10 to 900 (with fixed $m=1000$) reduces runtime by 13$\times$ (152s to 12s) while maintaining consistent $\ln Z$ estimates until $k/m > 0.5$. The GermanCredit scaling (\cref{fig:scaling_credit}) shows even better parallel speedups (28$\times$), consistent with the higher per-evaluation compute cost of this problem.

The population size scaling is particularly striking for GermanCredit: increasing $m$ from 100 to 4000 (a 40$\times$ increase in particles) only increases runtime by 1.5$\times$ (8.2s to 12.6s), while reducing $\sigma(\ln Z)$ from 0.80 to 0.13. This near-constant runtime with increasing population size demonstrates effective GPU utilization on problems that parallelize well on the chosen hardware. Gaussian Process inference is somewhat bottlenecked by the $O(N^3)$ Cholesky factorization, leading to more modest scaling. The potential for GPU acceleration on new tasks depends on the per-evaluation compute cost and parallelizability of the likelihood function.

\begin{table}[ht]
    \caption{GPU scaling ablation comparing GP hyperparameter inference (Airline Passengers, $d{=}11$) and GermanCredit logistic regression ($d{=}25$). For each problem: top rows vary deletion batch size $k$ with fixed $m=1000$; bottom rows vary population size $m$ with fixed $k/m=0.5$.}
    \label{tab:scaling}
    \vskip 0.15in
    \begin{center}
    \scriptsize
    \begin{tabular}{lllrrrr}
    \toprule
    Problem & Study & Parameters & Time (s) & $\ln Z$ & $\sigma(\ln Z)$ & Evals ($\times 10^6$) \\
    \midrule
    \multirow{13}{*}{\shortstack[l]{GP Airline\\Passengers\\$[d{=}11]$}}
    & \multirow{7}{*}{\shortstack[l]{Deletion\\batch\\($m{=}1000$)}}
    & $k{=}10$, $k/m{=}0.01$ & 152.3 & 25.21 & 0.17 & 1.85 \\
    & & $k{=}50$, $k/m{=}0.05$ & 47.5 & 24.88 & 0.19 & 1.82 \\
    & & $k{=}100$, $k/m{=}0.10$ & 30.2 & 26.12 & 0.18 & 1.75 \\
    & & $k{=}250$, $k/m{=}0.25$ & 20.0 & 26.03 & 0.21 & 1.61 \\
    & & $k{=}500$, $k/m{=}0.50$ & 15.6 & 25.38 & 0.19 & 1.31 \\
    & & $k{=}750$, $k/m{=}0.75$ & 12.9 & 24.19 & 0.23 & 1.02 \\
    & & $k{=}900$, $k/m{=}0.90$ & 11.6 & 23.03 & 0.38 & 0.79 \\
    \cmidrule{2-7}
    & \multirow{6}{*}{\shortstack[l]{Population\\size\\($k/m{=}0.5$)}}
    & $m{=}100$, $k{=}50$ & 9.9 & 22.57 & 0.67 & 0.15 \\
    & & $m{=}200$, $k{=}100$ & 10.7 & 23.73 & 0.40 & 0.27 \\
    & & $m{=}500$, $k{=}250$ & 11.9 & 25.27 & 0.27 & 0.67 \\
    & & $m{=}1000$, $k{=}500$ & 15.4 & 24.97 & 0.24 & 1.33 \\
    & & $m{=}2000$, $k{=}1000$ & 22.2 & 26.50 & 0.14 & 2.57 \\
    & & $m{=}4000$, $k{=}2000$ & 37.4 & 26.21 & 0.10 & 5.15 \\
    \midrule
    \multirow{13}{*}{\shortstack[l]{German\\Credit\\$[d{=}25]$}}
    & \multirow{7}{*}{\shortstack[l]{Deletion\\batch\\($m{=}1000$)}}
    & $k{=}10$, $k/m{=}0.01$ & 193.9 & -528.99 & 0.23 & 21.5 \\
    & & $k{=}50$, $k/m{=}0.05$ & 52.4 & -529.17 & 0.21 & 21.1 \\
    & & $k{=}100$, $k/m{=}0.10$ & 31.8 & -529.00 & 0.24 & 20.5 \\
    & & $k{=}250$, $k/m{=}0.25$ & 16.4 & -529.62 & 0.21 & 19.0 \\
    & & $k{=}500$, $k/m{=}0.50$ & 10.1 & -529.91 & 0.29 & 16.0 \\
    & & $k{=}750$, $k/m{=}0.75$ & 8.3 & -529.77 & 0.33 & 11.9 \\
    & & $k{=}900$, $k/m{=}0.90$ & 7.0 & -529.95 & 0.47 & 8.6 \\
    \cmidrule{2-7}
    & \multirow{6}{*}{\shortstack[l]{Population\\size\\($k/m{=}0.5$)}}
    & $m{=}100$, $k{=}50$ & 8.2 & -528.09 & 0.80 & 1.6 \\
    & & $m{=}200$, $k{=}100$ & 9.0 & -529.56 & 0.62 & 3.2 \\
    & & $m{=}500$, $k{=}250$ & 9.4 & -528.43 & 0.39 & 7.7 \\
    & & $m{=}1000$, $k{=}500$ & 9.9 & -529.47 & 0.25 & 15.8 \\
    & & $m{=}2000$, $k{=}1000$ & 10.7 & -529.01 & 0.19 & 31.2 \\
    & & $m{=}4000$, $k{=}2000$ & 12.6 & -529.17 & 0.13 & 62.3 \\
    \bottomrule
    \end{tabular}
    \end{center}
    \vskip -0.1in
\end{table}


\subsection{Comparison with existing Nested Sampling implementations}\label{app:existing}

Existing prominent Nested Sampling implementations, such as PolyChord~\citep{Handley:2015vkr} and UltraNest~\citep{buchner2021ultranestrobustgeneral}, were built to target CPU architectures and rely on MPI-based parallelism, with features (clustering, unit-hypercube prior transforms, dynamic population sizes) that are not well-suited to GPU execution (see \cref{sec:algorithm} for discussion). As such it is hard to construct a fair comparison with our GPU-accelerated NSS implementation. Despite these architectural differences, we find the performance (in terms of quality metrics) on difficult problems to be potentially superior to these existing codes, with a significant runtime advantage.

To demonstrate this we consider the Mauna Loa dataset hyperparameter marginalization problem from \cref{sec:gp}, comparing NSS against UltraNest and JAXNS~\citep{albert2020jaxnshighperformancenestedsampling}. UltraNest implements a variety of NS methods; we pick two that closest match NSS: SliceSampler and a vectorized PopSliceSampler. JAXNS is a JAX-native NS implementation, but executes its MCMC chains sequentially and lacks the batched parallelism of NSS; it also forces all computation into 64-bit precision. We use the same population size and termination criterion across all methods, and run on the same NVIDIA A100 hardware with the same JIT-compiled likelihood (enabling vectorized likelihoods in UltraNest).
\begin{table}[ht]
    \caption{Comparison of NSS with existing NS implementations on Mauna Loa GP ($d{=}11$ spectral mixture kernel).}
    \label{tab:gp_ultranest}
    \vskip 0.15in
    \begin{center}
    \scriptsize
    \begin{tabular}{lccccc}
    \toprule
    Method & $\ln Z$ & Time (s) & Evals & Test NLL & Test RMSE \\
    \midrule
    NSS & 870.02 $\pm$ 0.41 & 227 & 1050579 & 32.6466 & 0.2690 \\
    UltraNest (PopSliceSampler) & 803.40 $\pm$ 0.73 & 2813 & 15823697 & 141.1388 & 0.3519 \\
    UltraNest (SliceSampler) & 881.91 $\pm$ 0.69 & 6785 & 2711842 & 50.6655 & 0.2522 \\
    JAXNS & 799.86 $\pm$ 0.36 & 2611 & 1977399 & 125.1613 & 0.4160 \\
    \bottomrule
    \end{tabular}
    \end{center}
    \vskip -0.1in
\end{table}

The results in \cref{tab:gp_ultranest} show NSS achieves a 12$\times$ to 30$\times$ speedup over UltraNest depending on configuration, while achieving substantially better predictive NLL and using fewer likelihood evaluations. JAXNS, despite operating in the same JAX ecosystem, is over 10$\times$ slower than NSS and produces substantially worse evidence estimates and predictive metrics. This likelihood is relatively bottlenecked by Cholesky factorization, so we expect even larger speedups on problems with more parallelizable likelihoods.


\section{Technical Details}\label{app:technical}

\subsection{Sample Quality Metrics}\label{app:metrics}

To provide a principled comparison of posterior sample quality, we employ two integral probability metrics to measure the statistical distance between the samples $X=\{x_i\}_{i=1}^n$ generated by a given method and our ground truth reference samples $Y=\{y_j\}_{j=1}^m$ (either analytically known or from exhaustive NUTS runs). For our analysis, we draw $n=m=1000$ samples for each comparison.

\paragraph{Maximum Mean Discrepancy (MMD).}
We use the kernel two-sample discrepancy MMD~\cite{JMLR:v13:gretton12a}, which measures the distance between the mean embeddings of two sample sets in a reproducing kernel Hilbert space (RKHS). For a characteristic kernel $k$, the squared MMD is
\[
\mathrm{MMD}^2(X,Y)
=\mathbb{E}_{x,x'\sim X}\big[k(x,x')\big]
-2\,\mathbb{E}_{x\sim X,\,y\sim Y}\big[k(x,y)\big]
+\mathbb{E}_{y,y'\sim Y}\big[k(y,y')\big].
\]
With a characteristic kernel, $\mathrm{MMD}(X,Y)=0$ if and only if $X$ and $Y$ are drawn from the same distribution. We use a Gaussian (RBF) kernel
\[
k(x,y)=\exp\!\bigl(-\|x-y\|^2/(2\sigma^2)\bigr),
\]
with bandwidth $\sigma$ set by the median heuristic. MMD is particularly effective at detecting differences in the shape and moments of distributions.

\paragraph{Sliced 2-Wasserstein Distance ($W_2$).}
We compute the sliced 2-Wasserstein distance~\citep{bonneel2015sliced}, which provides a computationally efficient and statistically robust approximation to the full Wasserstein distance, particularly in higher dimensions. The sliced $W_2$ distance projects both sample sets onto random one-dimensional directions and computes the exact 1D Wasserstein distance along each projection.

For a unit vector $\theta \in \mathbb{S}^{d-1}$, let $X_\theta = \{\langle x_i, \theta \rangle\}$ and $Y_\theta = \{\langle y_j, \theta \rangle\}$ denote the projected samples. The 1D Wasserstein distance admits a closed-form solution: for sorted samples, $W_2^2(X_\theta, Y_\theta) = \frac{1}{n}\sum_{i=1}^n (X_\theta^{(i)} - Y_\theta^{(i)})^2$, where superscripts denote order statistics. The sliced $W_2$ distance averages over random projections:
\[
W_2(X, Y) = \left( \mathbb{E}_{\theta \sim \mathrm{Uniform}(\mathbb{S}^{d-1})} \left[ W_2^2(X_\theta, Y_\theta) \right] \right)^{1/2}.
\]
We approximate this expectation using 200 random projection directions. This approach avoids the regularization bias of entropic optimal transport methods~\citep{cuturi2014fast} and scales linearly in sample size, making it well-suited for comparing high-dimensional posterior samples.

\paragraph{Predictive metrics for GP regression.}\label{app:gp_metrics}
For the GP hyperparameter marginalization experiments (\cref{sec:gp}), we evaluate predictive performance on a held-out test set $\{(x_{*,i}, y_{*,i})\}_{i=1}^{N_*}$ using two standard metrics.

The \emph{test negative log-likelihood} (Test NLL) measures how well the posterior predictive distribution explains unseen observations. For each posterior hyperparameter sample $\theta_s$ ($s=1,\dots,S$), the GP predictive distribution at a test input $x_*$ is Gaussian with mean $\mu_s(x_*)$ and variance $\sigma^2_{f,s}(x_*) + \sigma^2_{n,s}$, where $\sigma^2_{f,s}$ is the latent GP predictive variance and $\sigma^2_{n,s}$ is the observation noise (both functions of $\theta_s$). We marginalize over samples pointwise via a mixture:
\[
\log p(y_{*,i} \mid \mathcal{D}) = \log\!\left(\frac{1}{S}\sum_{s=1}^{S} \mathcal{N}\!\left(y_{*,i}\;\middle|\;\mu_s(x_{*,i}),\;\sigma^2_{f,s}(x_{*,i})+\sigma^2_{n,s}\right)\right),
\]
computed numerically using logsumexp for stability, and the total NLL is
\[
\mathrm{NLL} = -\sum_{i=1}^{N_*} \log p(y_{*,i} \mid \mathcal{D}).
\]
Lower NLL indicates better-calibrated predictive uncertainty. The \emph{root mean square error} (RMSE) evaluates point prediction accuracy:
\[
\mathrm{RMSE} = \sqrt{\frac{1}{N_*}\sum_{i=1}^{N_*}\bigl(y_{*,i} - \hat{y}_{*,i}\bigr)^2},
\]
where $\hat{y}_{*,i} = \frac{1}{S}\sum_{s=1}^S \mu_s(x_{*,i})$ is the posterior predictive mean averaged over hyperparameter samples. While RMSE captures the accuracy of the mean prediction, it does not assess uncertainty calibration; the combination of NLL and RMSE therefore distinguishes methods that produce similar point estimates but differ in their uncertainty quantification (\cref{tab:gp}).

\subsection{Nested Sampling Volume Estimation}\label{app:ns_volume}

\subsubsection{Computation of the effective number of live particles}\label{app:nlive}
Our implementation performs \emph{batched} NS updates: at each outer iteration we remove the
\(k\) live points with highest energies and then generate \(k\) replacements, all under the
same constraint threshold \(E_{\mathrm{batch}}\) (the \(k\)-th worst energy).
In the code, all replacements are therefore assigned the same \emph{birth level}
\(E^{\text{birth}} = E_{\mathrm{batch}}\).

For volume accounting it is convenient to \emph{unroll} each batched deletion into a sequence of
\(k\) single-death events \emph{without replenishment}~\citep{Fowlie_2021}. Under the usual NS
idealisation that the live set is i.i.d.\ from the constrained prior, the \(k\) removed points
are precisely the top \(k\) order statistics of \(m\) draws, and this unrolling yields the
correct joint distribution for the associated prior volumes. This is mathematically equivalent
to the joint volume contraction \(t\sim\mathrm{Beta}(m{-}k{+}1,k)\) derived
by~\citet{Fowlie_2021} for likelihood plateaus, and applies generally to the joint compression
of any \(k\) removed points, ensuring that evidence estimation remains unbiased.
Concretely, within a batch we assign the \(j\)-th removed point (\(j=1,\dots,k\), ordered from
worst to best within the batch) an effective live count
\begin{equation}
    n_{\text{live},(j)} = m-(j-1).
\end{equation}
After the \(k\) deaths, the \(k\) births at the batch threshold restore the live set to size
\(m\) for the next outer iteration.

This ``within-batch'' live-count schedule is what we use in the post-hoc shrinkage simulation in
\cref{app:ess}. (For \(k=1\) it reduces to the standard constant-\(m\) setting.)

\subsubsection{Estimation of prior volumes and propagation of geometric uncertainty}\label{app:ess}
The deterministic approximation $\Delta\log X \approx -k/m$ in the main text is useful for
exposition, but in experiments we propagate the canonical NS \emph{geometric} uncertainty by
Monte Carlo simulation of volume shrinkage factors.

For a single death event with $n_{\text{live}}$ live points, the shrinkage ratio
$t := X_i/X_{i-1}$ satisfies $t\sim\mathrm{Beta}(n_{\text{live}},1)$, equivalently
\[
\Delta \log X_i := \log X_i - \log X_{i-1}
= \frac{1}{n_{\text{live}}}\log s,
\qquad s\sim\mathcal U(0,1),
\]
(or $\Delta\log X_i = -e/n_{\text{live}}$ with $e\sim\mathrm{Exp}(1)$).
Starting from $X_0=1$, we obtain a full volume trajectory by iterating $X_i=X_{i-1}t_i$.

In the batched setting, we apply this shrinkage update to each of the $k$ unrolled deaths
within a batch using $n_{\text{live},(j)} = m-(j-1)$, producing volumes
$X_{b,0}\to X_{b,1}\to\cdots\to X_{b,k}$ for batch $b$.

Repeating the shrinkage simulation $R$ times (we use $R=100$) yields an ensemble of
volume trajectories; after concatenating all dead points in likelihood order we form
(trapezoidal) prior-mass elements
\[
dX_i^{(r)} := \frac{X_{i-1}^{(r)}-X_{i+1}^{(r)}}{2},
\qquad X_0^{(r)}:=1,\ \ X_{N+1}^{(r)}:=0,
\]
and corresponding quadrature weights at inverse temperature $\beta$,
\[
w_i^{(r)}(\beta) := \exp(-\beta E_i)\, dX_i^{(r)},
\qquad
\widehat Z^{(r)}(\beta)=\sum_{i=1}^N w_i^{(r)}(\beta).
\]
We compute uncertainty estimates for $\log Z$ from the empirical dispersion of
$\{\log \widehat Z^{(r)}(\beta)\}_{r=1}^R$.

For posterior diagnostics (ESS and resampling), we collapse the $R$ stochastic weights by
averaging in log-space (geometric-mean weights)
\[
\tilde w_i(\beta) := \exp\!\left(\frac{1}{R}\sum_{r=1}^R \log w_i^{(r)}(\beta)\right),
\]
and report the Kish effective sample size
\[
\mathrm{ESS}(\beta)
=\frac{\left(\sum_{i=1}^N \tilde w_i(\beta)\right)^2}{\sum_{i=1}^N \tilde w_i(\beta)^2}
=\frac{1}{\sum_{i=1}^N \bar w_i(\beta)^2},
\qquad
\bar w_i(\beta):=\frac{\tilde w_i(\beta)}{\sum_{j=1}^N \tilde w_j(\beta)}.
\]
This captures the standard NS geometric uncertainty induced by random volume contraction, but
does not account for additional error sources from imperfect constrained mixing; we therefore
complement it with repeated runs on different seeds and initial conditions for some experiments (\cref{sec:gp} and \cref{app:dimension_scaling}). Various diagnostics and corrections  to account for these effects have been proposed
\citep{prathaban2024costlesscorrectionchainbased,higson2019diagnostic}.

\section{Optimal scaling of hit-and-run slice sampling}\label{app:optimal-scaling-hrss}

This appendix develops a principled tuning theory for Hit-and-Run Slice Sampling (HRSS), with the goal of explaining how its per-step computational cost behaves and how the slice-width parameter should scale with dimension in the regimes relevant to Nested Slice Sampling. The key results are a closed-form expression for the expected HRSS cost in terms of the chord length of the one-dimensional slice (\cref{lemma:hit-and-run-cost}); a unique optimal width when the slice length is fixed (\cref{thm:optimal-width-fixed-slice}) and an asymptotically optimal global width for sampling uniformly from high-dimensional ellipsoids, yielding the characteristic \(d^{-1/2}\) scaling and identifying the leading dependence on the ellipsoid geometry (\cref{lemma:optimal-slice-width}). The section is organised as follows. \Cref{appendix:proofs} collects technical probabilistic lemmas and establishes the convergence results needed to control expectations in high dimension, which it then specialises to ellipsoids to obtain chord-length asymptotics and related geometric inputs for the tuning rule. \Cref{appendix:Optimal scaling of hit-and-run slice sampling in ellipsoids} derives the HRSS cost formula and the optimal-width results (fixed-slice and ellipsoid-wide). \Cref{app:hrss-numerical-tests} validates the theoretical predictions for both the cost and the optimal width, and \cref{app:hrss-variance-of-expected-cost} quantifies how strongly per-step costs concentrate near the optimum, motivating efficient vectorised implementations.

\subsection{Proofs}\label{appendix:proofs}

\subsubsection{Technical lemmas}

A \emph{Polish} space is a topological space which is separable and admits a complete metrisation (\citet{kallenberg1997foundations}, Chapter 1). This is not a particularly restrictive assumption, all spaces we consider are Polish.
We denote convergence in distribution by $\tod$, almost sure convergence by $\toas$ and convergence in $L^1$ by $\toLone$.
We say that a family of random variables $\{X_n\}_{n\ge 1}$ is \emph{uniformly integrable} (\citet{rao1991theory}, Theorem 2) if
\begin{equation}\label{eqn:ui-definition}
	\lim_{x\rightarrow\infty}\sup_{n\ge 1} \E\left[|X_n|1_{|X_n|> x}\right]=0.
\end{equation} 

The $\mathrm{Gamma}(\alpha,\theta)$ distribution with shape and scale parameters $\alpha$ and $\theta$, respectively, is supported on $[0,\infty)$ and has density 
\begin{equation}\label{eqn:gamma-density}
	x\mapsto\frac{1}{\Gamma(\alpha)\theta^\alpha}x^{\alpha-1}e^{-x/\theta},
\end{equation}
where $\Gamma$ is the Gamma function.
For any real $p>-\alpha$, the moments are obtained by direct integration,
\begin{equation}\label{eqn:gamma-moments}
	\E[X^p]=\theta^p\frac{\Gamma(\alpha+p)}{\Gamma(\alpha)}.
\end{equation}
The $\mathrm{Beta}(\alpha,\beta)$ distribution is supported on $[0,1]$ and has density
\begin{equation}\label{eqn:beta-density}
	x\mapsto\frac{1}{\mathrm{B}(\alpha,\beta)}x^{\alpha-1}(1-x)^{\beta-1},
\end{equation}
where $\mathrm{B}$ is the Beta function. For any real $p>-\alpha$, the moments are obtained by direct integration,
\begin{equation}\label{eqn:beta-moments}
	\E[X^p]=\frac{\Gamma(\alpha+\beta)\Gamma(\alpha+p)}{\Gamma(\alpha+\beta+p)\Gamma(\alpha)}.
\end{equation}

\begin{lemma}\label{lemma:gamma-ratio-bound}
	For each $p\ge 0$, there exists a finite constant $C_p$ such that for all $z\ge 1$, 
	\begin{equation}\frac{\Gamma(z)}{\Gamma(z+p)}\le C_pz^{-p},\end{equation}
	where $\Gamma$ is the Gamma function.
\end{lemma}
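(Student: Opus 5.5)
The plan is to show that the function $f(z):=z^{p}\,\Gamma(z)/\Gamma(z+p)$ is bounded on $[1,\infty)$, and then take $C_p:=\sup_{z\ge1}f(z)$. The case $p=0$ is trivial with $C_0=1$, so I will assume $p>0$.

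The most direct route is through the Beta function, which is already implicit in the moment formula \cref{eqn:beta-moments}. For $z\ge1$ and $p>0$,
\begin{equation*}
\frac{\Gamma(z)}{\Gamma(z+p)}=\frac{\mathrm{B}(z,p)}{\Gamma(p)}=\frac{1}{\Gamma(p)}\int_0^1 t^{z-1}(1-t)^{p-1}\,dt .
\end{equation*}
I will substitute $t=e^{-s}$ and use $1-e^{-s}\le s$. When $p\ge1$, this gives
\begin{equation*}
\mathrm{B}(z,p)\le\int_0^\infty e^{-zs}s^{p-1}\,ds=\Gamma(p)\,z^{-p},
\end{equation*}
so $C_p=1$ works. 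When $0<p<1$ the inequality on $(1-e^{-s})^{p-1}$ goes the wrong way, and this is the one point that needs care.

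For $0<p<1$ I will use instead the bound $1-e^{-s}\ge s e^{-s}$, which holds because $e^{s}-1\ge s$. Since $p-1<0$, this gives $(1-e^{-s})^{p-1}\le s^{p-1}e^{(1-p)s}$. Hence
\begin{equation*}
\mathrm{B}(z,p)\le\int_0^\infty s^{p-1}e^{-(z+p-1)s}\,ds=\Gamma(p)(z+p-1)^{-p}.
\end{equation*}
For $z\ge1$ we have $z+p-1\ge pz$, because $z(1-p)\ge 1-p$. Therefore $\Gamma(z)/\Gamma(z+p)\le p^{-p}z^{-p}$, and $C_p=p^{-p}$ works.

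As a fallback, continuity of $f$ on $[1,\infty)$ combined with Stirling's formula, which gives $f(z)\to1$ as $z\to\infty$, would yield boundedness without explicit constants. I expect the main obstacle to be only the sign issue for $p<1$ handled above, so the explicit integral bounds should suffice.
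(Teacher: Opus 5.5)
Your argument is correct, and it takes a different route from the paper. The paper splits $p=n+a$ with $n=\lfloor p\rfloor$ and $a\in[0,1)$. It bounds the integer part by the recurrence, $\Gamma(z+n)/\Gamma(z)=\prod_{k=0}^{n-1}(z+k)\ge z^n$. It bounds the fractional part by citing Gautschi's inequality, which after rearranging gives $\Gamma(x+a)/\Gamma(x)\ge 2^{a-1}x^a$ for $x\ge 1$. This yields $C_p=1$ for integer $p$ and $C_p=2^{1-a}$ otherwise.

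You instead write $\Gamma(z)/\Gamma(z+p)=\mathrm{B}(z,p)/\Gamma(p)$ with $\mathrm{B}(z,p)=\int_0^\infty e^{-zs}(1-e^{-s})^{p-1}\,ds$. You then compare the factor $(1-e^{-s})^{p-1}$ pointwise with $s^{p-1}$ when $p\ge 1$, or with $s^{p-1}e^{(1-p)s}$ when $0<p<1$. Both comparisons hold, the resulting Gamma integrals are right, and so is the step $z+p-1\ge pz$ for $z\ge 1$.

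Your route is self-contained: it needs only $e^{s}-1\ge s$ rather than an external inequality. It also splits only at $p=1$, not into integer and fractional parts. It gives $C_p=1$ for every $p\ge 1$, and indeed for all $z>0$, which is sharper than the paper's $2^{1-a}$ for non-integer $p>1$. For $0<p<1$, your $p^{-p}$ beats the paper's $2^{1-p}$ when $p<1/2$ and loses when $p>1/2$. The paper's route is shorter once Gautschi's inequality is taken as given. Since Lemma~\ref{lemma:qd-uniformly-integrable} only uses that $C_p$ is finite, either version serves equally well downstream.
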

\begin{proof}
	It suffices to show $\Gamma(z+p)/\Gamma(z)\ge c_pz^p$ for some $c_p>0$ and then set $C_p=1/c_p$.
	
	Fix $p\ge 0$. Write $p=n+a$ with $n=\lfloor p\rfloor\in \mathbb{N}$ and $a=p-n\in [0,1)$.
	For any $z\ge 1$,
	\begin{equation}
		\frac{\Gamma(z+p)}{\Gamma(z)}=\frac{\Gamma(z+n+a)}{\Gamma(z+n)}\frac{\Gamma(z+n)}{\Gamma(z)}.
	\end{equation}
	To bound the first ratio, we use Gautschi's inequality~(\citet{olver2010nist}, 5.6.4), which states that for $x>0$ and $a\in (0,1)$
	\begin{equation}
		x^{1-a}<\frac{\Gamma(x+1)}{\Gamma(x+a)}<(x+1)^{1-a}.
	\end{equation}
	Inverting the middle ratio and multiplying by $\frac{\Gamma(x+1)}{\Gamma(x)}=x$, we obtain
	\begin{equation}
		\frac{x}{(x+1)^{1-a}}<\frac{\Gamma(x+a)}{\Gamma(x)}<x^a.
	\end{equation}
	Now using the bound 
	\begin{equation}
		\frac{x}{(x+1)^{1-a}}=x^a\left(\frac{x}{x+1}\right)^{1-a}\ge 2^{a-1}x^a,
	\end{equation}
	valid for $x\ge 1$, because $\frac{x}{x+1}\ge \frac12$ and $1-a\in (0,1]$.
	Therefore, 
	\begin{equation}
		\frac{\Gamma(z+n+a)}{\Gamma(z+n)}\ge 2^{a-1}(z+n)^a\ge 2^{a-1}z^a,
	\end{equation}
	since $z+n\ge z$ and $a> 0$.
	
	To bound the second ratio, we use the recurrence relation of the Gamma function:
	\begin{equation}
		\frac{\Gamma(z+n)}{\Gamma(z)}=\prod_{k=0}^{n-1}(z+k)\ge z^n.
	\end{equation}
	
	Combining the two bounds, we have shown that for all $z\ge 1$ and $a\in (0,1)$,
	\begin{equation}
		\frac{\Gamma(z+p)}{\Gamma(z)}\ge z^n\cdot 2^{a-1}z^a=2^{a-1}z^p.
	\end{equation}
	To handle the edge case $a=0$, in which case $p$ is an integer, we note that
	\begin{equation}
		\frac{\Gamma(z+p)}{\Gamma(z)}=\prod_{k=0}^{p-1}(z+k)\ge z^p.
	\end{equation}
	Therefore, for all $p\ge 0$ and $z\ge 1$,
	\begin{equation}
		\frac{\Gamma(z+p)}{\Gamma(z)}\ge c_p z^p
	\end{equation}
	with 
	\begin{equation}
		c_p=\begin{cases}
			1,&\text{if }p\in \mathbb{N},\\
			2^{a-1},&\text{if }p\notin \mathbb{N},\text{ where } a=p-\lfloor p\rfloor.
		\end{cases}
	\end{equation}
\end{proof}

\begin{lemma}\label{lemma:ui-moment-condition}
	Let $\{X_n\}_{n\ge 1}$ be a family of random variables and fix $\delta>0$.
	If $\sup_{n\ge 1} \E[|X_n|^{1+\delta}]<\infty$, then $\{X_n\}_{n\ge 1}$ is uniformly integrable.
\end{lemma}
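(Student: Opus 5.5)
The plan is to bound the tail expectation in \eqref{eqn:ui-definition} directly by the $(1+\delta)$-th moment, via a Markov-type inequality applied on the event $\{|X_n|>x\}$. Set $M:=\sup_{n\ge 1}\E[|X_n|^{1+\delta}]<\infty$. The argument is a single pointwise inequality followed by taking expectations and then the supremum over $n$.

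\textbf{Pointwise bound.} Fix $x>0$. On the event $\{|X_n|>x\}$ we have $|X_n|/x>1$, and hence $(|X_n|/x)^{\delta}\ge 1$. This gives
\begin{equation}
|X_n|\,1_{\{|X_n|>x\}}\le |X_n|\left(\frac{|X_n|}{x}\right)^{\delta}1_{\{|X_n|>x\}}\le \frac{|X_n|^{1+\delta}}{x^{\delta}} .
\end{equation}
The second inequality simply drops the indicator, which is legitimate because the right-hand side is nonnegative.

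\textbf{Expectation and supremum.} Taking expectations, and using monotonicity of the integral for nonnegative random variables, yields
\begin{equation}
\E\left[|X_n|\,1_{\{|X_n|>x\}}\right]\le \frac{\E[|X_n|^{1+\delta}]}{x^{\delta}}\le \frac{M}{x^{\delta}} .
\end{equation}
Since this bound does not depend on $n$, it also holds for the supremum over $n\ge 1$. Because $\delta>0$, we have $M/x^{\delta}\to 0$ as $x\to\infty$, which is exactly \eqref{eqn:ui-definition}.

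\textbf{Points to check.} There is no genuine obstacle here. The only things to keep track of are that $\delta>0$ is strict, which is what makes $x^{-\delta}\to 0$, and that all quantities involved are nonnegative, so monotonicity of expectation applies without any integrability caveats. As a side remark, finiteness of $M$ already implies $\sup_n\E|X_n|<\infty$, but that fact is not needed for this definition of uniform integrability.
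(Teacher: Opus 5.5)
Your proof is correct, and it takes a different route from the paper's. The paper does not bound the tail directly. It invokes de la Vall\'ee-Poussin's criterion: uniform integrability is equivalent to the existence of a convex, even $\phi$ with $\phi(0)=0$, $\phi(x)/x\to\infty$ and $\sup_n\E[\phi(X_n)]<\infty$. It then observes that $\phi(x)=|x|^{1+\delta}$ qualifies.

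Your pointwise inequality $|X_n|\,1_{\{|X_n|>x\}}\le |X_n|^{1+\delta}/x^{\delta}$ is exactly the easy (sufficiency) half of that theorem, specialised to a power function. What your version buys is that it is self-contained. It checks the paper's tail definition \eqref{eqn:ui-definition} directly, without appealing to an external equivalence theorem. It also gives the explicit uniform rate $\sup_n\E[|X_n|1_{\{|X_n|>x\}}]\le M x^{-\delta}$.

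The paper's citation buys brevity. It also makes clear that any superlinear convex $\phi$ would do, for example $\phi(x)=|x|\log(1+|x|)$, which is a weaker requirement than a $(1+\delta)$-moment bound. However, the paper only ever uses the direction you prove by hand.
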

\begin{proof}
	By Vall\'ee-Poussin's theorem~(\citet{rao1991theory}, Theorem 2), uniform integrability is equivalent to the existence of a convex function $\phi:\mathbb{R}\rightarrow [0,\infty)$ with $\phi(0)=0$, $\phi(-x)=\phi(x)$, $\phi(x)/x\rightarrow\infty$ as $x\rightarrow\infty$ and $\sup_{n\ge 1} \E[\phi(X_n)]<\infty$. 
	Therefore, a sufficient condition is that $\sup_{n\ge 1} \E[|X_n|^{1+\delta}]<\infty$ for some $\delta>0$, since $x\mapsto |x|^{1+\delta}$ with $\delta>0$ is convex and satisfies the growth condition.
\end{proof}

\begin{lemma}\label{lemma:vitali-and-continuous-mapping}
	Let $S$ be a Polish space and $X_n\tod X$ in $S$. Let $f:S\rightarrow \mathbb{R}$ be continuous. Assume that the family $\{f(X_n)\}_n$ is uniformly integrable. Then, $\E[f(X_n)]\rightarrow \E[f(X)]$.
\end{lemma}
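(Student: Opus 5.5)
The plan is to combine the continuous mapping theorem with Vitali's convergence theorem, after moving to a single probability space via Skorokhod's representation theorem. The Polish assumption on $S$ is exactly what makes Skorokhod applicable.

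First, since $S$ is Polish and $X_n\tod X$, Skorokhod's representation theorem gives random elements $Y_n, Y$ on a common probability space with $Y_n \overset{d}{=} X_n$, $Y\overset{d}{=}X$ and $Y_n\toas Y$. Because $f$ is continuous, it follows that $f(Y_n)\toas f(Y)$. Moreover, $f(Y_n)\overset{d}{=}f(X_n)$, so $\E[f(Y_n)]=\E[f(X_n)]$ and $\E[f(Y)]=\E[f(X)]$. Uniform integrability, as defined in \eqref{eqn:ui-definition}, depends only on the laws. Hence $\{f(Y_n)\}_n$ is uniformly integrable.

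Second, we need $f(Y)$ to be integrable. Fatou's lemma gives $\E|f(Y)|\le\liminf_n \E|f(Y_n)|$. The right-hand side is bounded by $\sup_n \E|f(Y_n)|$, which is finite by uniform integrability: take $x$ large enough that the tail term in \eqref{eqn:ui-definition} is at most $1$, so that $\E|f(Y_n)|\le x+1$ for every $n$.

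Third, we prove $L^1$ convergence directly. Let $g_x(t)=\max(-x,\min(t,x))$ be the truncation at level $x$, and split
\[
|f(Y_n)-f(Y)|\le |g_x(f(Y_n))-g_x(f(Y))| + |f(Y_n)|1_{|f(Y_n)|>x} + |f(Y)|1_{|f(Y)|>x}.
\]
The middle and last terms have expectations that are small uniformly in $n$ once $x$ is large. For the middle term this is uniform integrability; for the last it is dominated convergence, using the integrability of $f(Y)$. The first term is bounded by $2x$ and tends to $0$ almost surely, so bounded convergence sends its expectation to $0$ for each fixed $x$. Letting first $n\to\infty$ and then $x\to\infty$ gives $f(Y_n)\toLone f(Y)$, which implies $\E[f(X_n)]=\E[f(Y_n)]\to\E[f(Y)]=\E[f(X)]$.

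The main care point is the transfer to the Skorokhod space, together with the observation that both uniform integrability and the expectations depend only on distributions. The remaining steps are the standard Vitali argument.
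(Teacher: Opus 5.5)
Your proof is correct and follows the same route as the paper: Skorokhod representation, the continuous mapping theorem, the observation that uniform integrability and expectations depend only on laws, and then a Vitali-type argument. The only difference is that you prove the Vitali step by hand (integrability of the limit via Fatou, then truncation plus bounded convergence), whereas the paper cites Vitali's convergence theorem for the almost surely convergent, uniformly integrable family.
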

\begin{proof}
	Because $S$ is Polish (hence separable) and $X_n\tod X$, Skorokhod's representation theorem (\citet{kallenberg1997foundations}, Theorem 5.31) gives a probability space on which there exist random variables $\tilde{X}_n$ and $\tilde{X}$ with the same laws as $X_n$ and $X$, respectively, i.e. $\tilde{X}_n\sim X_n$ and $\tilde{X}\sim X$, respectively, such that $\tilde{X}_n\toas \tilde{X}$.
	
	Since $f$ is continuous, it follows that $f(\tilde{X}_n)\toas f(\tilde{X})$ by the continuous mapping theorem (\cite{van2000asymptotic}, Theorem 2.3).
	As $\{f(\tilde{X}_n)\}$ have the same laws as $\{f(X_n)\}$ and uniform integrability depends only on the laws (\cref{eqn:ui-definition}), they are also uniformly integrable.
	
	Set $Y_n=f(\tilde{X}_n)$ and $Y=f(\tilde{X})$. We have $Y_n\toas Y$ (hence $Y_n\xrightarrow{p}Y$) and $\{Y_n\}$ is uniformly integrable, so by the Vitali convergence theorem (\citet{bogachev2007measure}, Theorem 4.5.4), $Y_n\toLone Y$ and therefore $\E[Y_n]\rightarrow \E[Y]$.
	Because $\tilde{X}_n\sim X_n$ and $\tilde{X}\sim X$, we have $\E[f(\tilde{X}_n)]=\E[f(X_n)]$ and $\E[f(\tilde{X})]=\E[f(X)]$. Hence, $\E[f(X_n)]\rightarrow \E[f(X)]$.        
\end{proof}

\begin{lemma}\label{lemma:gamma-ratio-limit}
	Fix $a,b>0$ and let $U\sim\mathrm{Gamma}(a,1)$ and $V_b\sim\mathrm{Gamma}(b,1)$ be independent.
	Define $X_b:=\frac{U}{U+V_b}$.
	Then, $bX_b\tod U$ as $b\rightarrow\infty$.
\end{lemma}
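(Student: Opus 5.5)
The plan is to write $bX_b = U\cdot \frac{b}{U+V_b}$ and show that the second factor tends to $1$ in probability. Slutsky's theorem then gives the claim. Since $U$ does not depend on $b$, we actually obtain the stronger statement $bX_b\to U$ in probability, which implies convergence in distribution.

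\textbf{Step 1: law of large numbers for $V_b$.} I will show that $V_b/b\to 1$ in probability as $b\to\infty$. Using \eqref{eqn:gamma-moments}, we have $\E[V_b]=b$ and $\mathrm{Var}(V_b)=b$. Hence $\E[(V_b/b-1)^2]=1/b\to 0$, and Chebyshev's inequality yields $V_b/b\to 1$ in probability. This works for arbitrary real $b\to\infty$ (not just integers), so no additive decomposition of the Gamma variable is needed.

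\textbf{Step 2: the ratio.} The factor equals
\[
\frac{b}{U+V_b}=\frac{1}{U/b+V_b/b}.
\]
Because $U$ is almost surely finite and fixed, $U/b\to 0$ almost surely. Combining this with Step 1, the denominator tends to $1$ in probability. The map $x\mapsto 1/x$ is continuous at $1$, so the continuous mapping theorem (\citet{van2000asymptotic}, Theorem 2.3) gives $b/(U+V_b)\to 1$ in probability.

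\textbf{Step 3: conclusion.} The product $U\cdot b/(U+V_b)$ then tends to $U\cdot 1=U$ in probability by Slutsky's theorem (or directly, since products of sequences converging in probability converge in probability). In particular, $bX_b\tod U$. Independence of $U$ and $V_b$ is not even needed for this convergence; it only matters elsewhere, for identifying $X_b\sim\mathrm{Beta}(a,b)$.

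The only mild care point is that $b$ is a continuous parameter. Convergence in probability as $b\to\infty$ is equivalent to convergence along every sequence $b_n\to\infty$, so I will argue along an arbitrary such sequence. I expect no real obstacle.
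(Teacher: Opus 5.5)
Your proposal is correct and takes essentially the same approach as the paper: Chebyshev gives $V_b/b\to 1$ in probability, $U/b$ is negligible, and Slutsky is applied to $bX_b=U/(U/b+V_b/b)$. The only differences are cosmetic: you get $U/b\to 0$ almost surely rather than via Markov's inequality, and you record the slightly stronger convergence in probability to $U$, which is valid because $U$ does not depend on $b$.
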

\begin{proof}
	Write $bX_b=\frac{U}{U/b+V_b/b}$.
	Since $\E[V_b]=b$ and $\mathrm{Var}[V_b]=b$, it holds by Chebychev's inequality that for any $\epsilon>0$, 
	\begin{equation}
		P\left(\left|\frac{V_b}{b}-1\right|>\epsilon\right)\le \frac{1}{\epsilon^2}\mathrm{Var}\left(\frac{V_b}{b}\right)=\frac{1}{b\epsilon^2}\rightarrow 0,
	\end{equation}
	as $b\rightarrow\infty$.
	Hence, $\frac{V_b}{b}\xrightarrow{p} 1$.
	Since $U$ has a fixed distribution independent of $b$, $\E[U]=a<\infty$ for all $b$.
	By Markov's inequality, it follows that for all $\epsilon>0$,
	\begin{equation}
		P\left(\left|\frac{U}{b}\right|>\epsilon\right)=P(U>\epsilon b)\le \frac{\E[U]}{\epsilon b}\rightarrow 0.
	\end{equation}
	Thus, $\frac{U}{b}\xrightarrow{p}0$.
	It follows that $\frac{V_b}{b}+\frac{U}{b}\xrightarrow{p}1$.
	Slutsky's theorem (\citet{van2000asymptotic}, Lemma 2.8) finally gives $bX_b\tod U$.
\end{proof}

\subsubsection{Convergence of expectations in a ball}\label{sec:convergence-of-expectations}

Let the random variable $y_{\dimn}$ be uniform in the $\dimn$-dimensional unit ball $\{x\in \mathbb{R}^{\dimn}:\Vert x\Vert \le 1\}$, where $\Vert\cdot\Vert$ is the $L^2$ norm. Write this as the polar decomposition $y_{\dimn}=\rho_{\dimn}U_{\dimn}$ with radius $\rho_{\dimn}=\Vert y_{\dimn}\Vert\in [0,1]$ and unit vector $U_{\dimn}=\frac{y_{\dimn}}{\Vert y_{\dimn}\Vert}\in S^{\dimn-1}$, where $S^{\dimn-1}=\{x\in \mathbb{R}^{\dimn}:\Vert x\Vert = 1\}$ is the unit sphere. $\rho_{\dimn}$ and $U_{\dimn}$ are independent.
For any fixed unit vector $u$, define 
\begin{equation}
	X_{\dimn}:=(U_{\dimn}\cdot u)^2\in[0,1]\quad\text{and}\quad Y_{\dimn}:=1-\rho_{\dimn}^2\in[0,1]
\end{equation}
It follows that $Y_{\dimn}$ and $X_{\dimn}$ are independent.

Define
\begin{equation}
	Z_{\dimn}:=\rho_{\dimn}^2(U_{\dimn}\cdot u)^2+(1-\rho_{\dimn}^2)=X_{\dimn}+Y_{\dimn}-X_{\dimn}Y_{\dimn}\in [0,1].
\end{equation}

Introduce the scaled variables 
\begin{equation}
	S_{\dimn}:=\dimn X_{\dimn}\in [0,\dimn],\quad\quad T_{\dimn}:=\dimn Y_{\dimn}\in [0,\dimn]\quad\text{and}\quad Q_{\dimn}:=\dimn Z_{\dimn}\in [0,\dimn].
\end{equation}
It follows that $S_{\dimn}$ and $T_{\dimn}$ are independent. 

An exact algebraic identity is given by $Q_{\dimn}=S_{\dimn}+T_{\dimn}-\frac{T_{\dimn}S_{\dimn}}{\dimn}$.
Let
\begin{equation}
	P_{\dimn}:=S_{\dimn}+T_{\dimn}\quad\text{and}\quad \epsilon_{\dimn}:=T_{\dimn}S_{\dimn}.
\end{equation}
Then we have the identity \begin{equation}\label{eqn:exact-identity}
	Q_{\dimn}=P_{\dimn}-\frac{\epsilon_{\dimn}}{\dimn}.
\end{equation}

\begin{lemma}\label{lemma:distribution-of-X-and-Y}
	$X_{\dimn}\sim \mathrm{Beta}\left(\frac12, \frac{\dimn-1}{2}\right)$
	and
	$Y_{\dimn}\sim \mathrm{Beta}\left(1, \frac{\dimn}{2}\right)$.
\end{lemma}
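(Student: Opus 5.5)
We prove the two claims separately, using the polar decomposition $y_{\dimn}=\rho_{\dimn}U_{\dimn}$ with $\rho_{\dimn}$ and $U_{\dimn}$ independent, $U_{\dimn}$ uniform on $S^{\dimn-1}$.

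\emph{Law of $X_{\dimn}$.} Represent $U_{\dimn}=G/\Vert G\Vert$ with $G\sim\mathcal N(0,I_{\dimn})$. This is legitimate because the standard Gaussian is rotation invariant, so $G/\Vert G\Vert$ is uniform on the sphere. By rotation invariance again we may take $u=e_1$. Then
\begin{equation*}
X_{\dimn}=\frac{G_1^2}{G_1^2+\sum_{i\ge 2}G_i^2}.
\end{equation*}
Here $G_1^2\sim\chi^2_1=\mathrm{Gamma}(\tfrac12,2)$ and $\sum_{i\ge2}G_i^2\sim\chi^2_{\dimn-1}=\mathrm{Gamma}(\tfrac{\dimn-1}{2},2)$, and the two are independent. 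The standard Gamma--Beta relation says that $A/(A+B)\sim\mathrm{Beta}(\alpha,\beta)$ for independent $A\sim\mathrm{Gamma}(\alpha,\theta)$ and $B\sim\mathrm{Gamma}(\beta,\theta)$ with a common scale. It follows that $X_{\dimn}\sim\mathrm{Beta}(\tfrac12,\tfrac{\dimn-1}{2})$. If a self-contained argument is preferred, one can verify the Gamma--Beta relation by the change of variables $(A,B)\mapsto(A/(A+B),A+B)$ and integrating out the second coordinate.

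\emph{Law of $Y_{\dimn}$.} Since $y_{\dimn}$ is uniform in the unit ball, $P(\rho_{\dimn}\le r)=r^{\dimn}$ for $r\in[0,1]$, as volume scales like $r^{\dimn}$. For $s\in[0,1]$ this gives
\begin{equation*}
P(Y_{\dimn}> s)=P(\rho_{\dimn}^2<1-s)=(1-s)^{\dimn/2}.
\end{equation*}
Differentiating yields the density $\tfrac{\dimn}{2}(1-s)^{\dimn/2-1}$. Because $\mathrm{B}(1,\tfrac{\dimn}{2})=2/\dimn$, this is exactly the $\mathrm{Beta}(1,\tfrac{\dimn}{2})$ density in \cref{eqn:beta-density}.

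Neither step is a real obstacle. The only point needing care is justifying the Gaussian representation of the uniform direction together with the independence of numerator and residual, which is standard. For $\dimn=1$ the parameter $\tfrac{\dimn-1}{2}$ degenerates, and the statement should be read for $\dimn\ge2$.
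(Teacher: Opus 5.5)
Your proof is correct. For $Y_{\dimn}$ it coincides with the paper's argument, which reads off the radial density $\dimn r^{\dimn-1}$; your survival-function computation is the same thing.

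For $X_{\dimn}$ you take a different route. The paper uses the marginal density $f_V(v)\propto(1-v^2)^{(\dimn-3)/2}$ of one coordinate of a uniform unit vector and then changes variables to $x=v^2$. You instead write $U_{\dimn}=G/\Vert G\Vert$ and reduce to independent $\chi^2$ variables plus the Gamma--Beta ratio identity.

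The paper's version is shorter once the coordinate density on the sphere is accepted. It quotes that density ``by rotational symmetry'', though it really requires a slicing or surface-area computation. Yours needs no computation on the sphere and rests only on standard Gaussian facts.

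Your route also has a structural advantage. It yields directly the representation $X_{\dimn}=A_X/(A_X+B_X)$ with independent $A_X\sim\mathrm{Gamma}(\tfrac12,1)$ and $B_X\sim\mathrm{Gamma}(\tfrac{\dimn-1}{2},1)$; the common scale $2$ cancels in the ratio. This is exactly the representation the paper invokes in the proof of \cref{lemma:convergence-of-qd}. It also extends immediately to squared projections onto $k$-dimensional subspaces, which are $\mathrm{Beta}(\tfrac k2,\tfrac{\dimn-k}{2})$.

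Your caveat about $\dimn=1$ applies equally to the paper's argument, since $(1-v^2)^{-1}$ is not integrable. In that case $X_1\equiv 1$ is a point mass. This is consistent with the formal Beta moment formula the paper later uses for $\E[S_1^p]$.
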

\begin{proof}
	The first statement follows from $V=U_{\dimn}\cdot u$ having density $f_V(v)\propto (1-v^2)^{(\dimn-3)/2}$ by rotational symmetry, taking the square and comparing with the Beta density (\cref{eqn:beta-density}).
	The second statement follows from the radius $\rho_{\dimn}$ having distribution $f_{\rho_{\dimn}}(r)=\dimn r^{\dimn-1}$ on $[0,1]$.
\end{proof}

\begin{lemma}\label{lemma:convergence-of-qd}
	$Q_{\dimn}\tod Q$ as $\dimn\rightarrow\infty$, where $Q\sim \mathrm{Gamma}\left(\frac{3}{2},2\right)$.
\end{lemma}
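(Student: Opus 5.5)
We prove the claim from the exact identity \cref{eqn:exact-identity}, $Q_{\dimn}=P_{\dimn}-\epsilon_{\dimn}/\dimn$. The plan is to show two things: $P_{\dimn}=S_{\dimn}+T_{\dimn}\tod Q$, and $\epsilon_{\dimn}/\dimn\xrightarrow{p}0$. Slutsky's theorem then gives $Q_{\dimn}\tod Q$.

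\textbf{Marginal limits.} By \cref{lemma:distribution-of-X-and-Y}, $X_{\dimn}\sim\mathrm{Beta}(\frac12,\frac{\dimn-1}{2})$. Represent it as $U/(U+V_b)$ with $U\sim\mathrm{Gamma}(\frac12,1)$ and $V_b\sim\mathrm{Gamma}(b,1)$ independent, where $b=\frac{\dimn-1}{2}$. \Cref{lemma:gamma-ratio-limit} then gives $bX_{\dimn}\tod U$. Since $\dimn/b\to 2$, Slutsky's theorem gives
\begin{equation}
S_{\dimn}=\dimn X_{\dimn}\tod 2U\sim\mathrm{Gamma}\!\left(\tfrac12,2\right).
\end{equation}
Similarly, $Y_{\dimn}\sim\mathrm{Beta}(1,\frac{\dimn}{2})$ can be written as $U'/(U'+V'_{\dimn/2})$ with $U'\sim\mathrm{Gamma}(1,1)$, so
\begin{equation}
T_{\dimn}=\dimn Y_{\dimn}\tod 2U'\sim\mathrm{Gamma}(1,2).
\end{equation}
Alternatively, one can compute directly that $P(T_{\dimn}>t)=(1-t/\dimn)^{\dimn/2}\to e^{-t/2}$.

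\textbf{Joint limit and sum.} $S_{\dimn}$ and $T_{\dimn}$ are independent, so their joint laws are product measures. Product measures converge weakly when each factor does, for example via characteristic functions $\varphi_{S_{\dimn}}\varphi_{T_{\dimn}}\to\varphi_{2U}\varphi_{2U'}$. Hence $(S_{\dimn},T_{\dimn})\tod(2U,2U')$ with the limit components independent. The continuous mapping theorem applied to $(s,t)\mapsto s+t$ gives $P_{\dimn}\tod 2U+2U'$. Independent Gamma variables with a common scale add their shapes, so this limit is $\mathrm{Gamma}(\frac12+1,2)=\mathrm{Gamma}(\frac32,2)$.

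\textbf{The correction term.} This step needs the most care, but it is short. By independence and \cref{eqn:beta-moments},
\begin{equation}
\E[S_{\dimn}]=\dimn\cdot\frac{1/2}{\dimn/2}=1,\qquad \E[T_{\dimn}]=\dimn\cdot\frac{1}{1+\dimn/2}\le 2.
\end{equation}
Therefore $\E[\epsilon_{\dimn}]=\E[S_{\dimn}]\,\E[T_{\dimn}]\le 2$ is bounded, so $\E[\epsilon_{\dimn}/\dimn]\to 0$. Since $\epsilon_{\dimn}\ge 0$, Markov's inequality gives $\epsilon_{\dimn}/\dimn\xrightarrow{p}0$. A final application of Slutsky's theorem to $Q_{\dimn}=P_{\dimn}-\epsilon_{\dimn}/\dimn$ completes the argument.
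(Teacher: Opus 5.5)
Your proposal is correct and follows essentially the same route as the paper: the Gamma-ratio representation with \cref{lemma:gamma-ratio-limit} for the marginals, joint convergence via factorised characteristic functions, continuous mapping to obtain $P_{\dimn}\tod\mathrm{Gamma}(\frac32,2)$, a first-moment bound showing $\epsilon_{\dimn}/\dimn\xrightarrow{p}0$, and a final Slutsky step. The differences are cosmetic. You bound $\E[T_{\dimn}]\le 2$ and invoke Markov's inequality, whereas the paper computes $\E[\epsilon_{\dimn}/\dimn]=2/(\dimn+2)$ exactly and passes through $L^1$ convergence.
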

\begin{proof}
	If $X\sim \mathrm{Beta}(a,b)$, then $X=\frac{A}{A+B}$ where $A\sim\mathrm{Gamma}(a,1)$ and $B\sim\mathrm{Gamma}(b,1)$ are independent.
	
	By \cref{lemma:distribution-of-X-and-Y}, we write $X_{\dimn}$ as $X_{\dimn}=\frac{A_X}{A_X+B_X}$ with $A_X\sim\mathrm{Gamma}(\frac12, 1)$ and $B_X\sim\mathrm{Gamma}(\frac{\dimn-1}{2},1)$. By \cref{lemma:gamma-ratio-limit}, $\frac{\dimn-1}{2}X_{\dimn}\tod A_X$ and hence $S_{\dimn}=\dimn X_{\dimn}=\frac{\dimn}{(\dimn-1)/2}\frac{\dimn-1}{2}X_{\dimn}\tod 2A_X=:S$,
	where the random variable $S$ has distribution $\mathrm{Gamma}(\frac12,2)$.
	
	Similarly, write $Y_{\dimn}$ as $Y_{\dimn}=\frac{A_Y}{A_Y+B_Y}$, where $A_Y\sim\mathrm{Gamma}(1,1)$ and $B_Y\sim\mathrm{Gamma}(\frac{\dimn}{2},1)$.
	It follows that $T_{\dimn}=\dimn Y_{\dimn}=\frac{\dimn}{\dimn/2}\frac{\dimn}{2}Y_{\dimn}\tod 2A_Y=:T$, where $T$ has distribution $\mathrm{Gamma}(1,2)$.

	We have shown that marginal convergence of $S_{\dimn}\tod S$ and $T_{\dimn}\tod T$ holds, i.e. $S_{\dimn}$ and $T_{\dimn}$ converge individually in distribution. Moreover, we have that for any finite $\dimn$, $S_{\dimn}$ and $T_{\dimn}$ are independent.
	We proceed to show that $S_{\dimn}$ and $T_{\dimn}$ converge jointly as the random vector $(S_{\dimn},T_{\dimn})$ and that convergence preserves independence.
	Since $S_{\dimn}$ and $T_{\dimn}$ are independent, the joint characteristic function factorises:
	\begin{equation}
		\phi_{(S_{\dimn},T_{\dimn})}(s,t)=\phi_{S_{\dimn}}(s)\phi_{T_{\dimn}}(t).
	\end{equation}
	From marginal convergence, we have pointwise $\phi_{S_{\dimn}}(s)\rightarrow \phi_S(s)$ and $\phi_{T_{\dimn}}(t)\rightarrow\phi_T(t)$ for all $s$ and $t$.
	Therefore,
	\begin{equation}
		\phi_{(S_{\dimn},T_{\dimn})}(s,t)=\phi_{S_{\dimn}}(s)\phi_{T_{\dimn}}(t)\rightarrow \phi_S(s)\phi_{T}(t)=:\phi(s,t).
	\end{equation}
	The function $\phi(s,t)$ is the characteristic function of the product measure, i.e. of the independent pair $(S,T)$, since it factorises.
	Note that the characteristic functions $\phi_S$ and $\phi_T$ are continuous everywhere so that the product $\phi$ is continuous at zero.
	By L\'evy's continuity theorem (\citet{van2000asymptotic}, Theorem 2.13), pointwise convergence of characteristic functions to a characteristic function implies joint convergence in distribution.
	Therefore, $(S_{\dimn},T_{\dimn})\tod(S,T)$ with $S$ and $T$ independent, as required.
	
	A consequence thereof is that 
	$P_{\dimn}:=S_{\dimn}+T_{\dimn}\tod S+T$
	by the continuous mapping theorem since
	the map $(x,y)\mapsto x+y$ is continuous.
	Importantly, by the properties of the Gamma distribution, $S+T\sim\mathrm{Gamma}(\frac32,2)$.

	Using independence and $\E[S_{\dimn}]=\dimn\E[X_{\dimn}]=1$ and $\E[T_{\dimn}]=\dimn\E[Y_{\dimn}]=\frac{2\dimn}{\dimn+2}$, we have
	\begin{equation}
		\E\left[\frac{\epsilon_{\dimn}}{\dimn}\right]=\frac{1}{\dimn}\E[S_{\dimn}]\E[T_{\dimn}]=\frac{2}{\dimn+2}\rightarrow 0.
	\end{equation} 
	Since $\frac{\epsilon_{\dimn}}{\dimn}\ge 0$, $\E\left[\left|\frac{\epsilon_{\dimn}}{\dimn}\right|\right]=\E\left[\frac{\epsilon_{\dimn}}{\dimn}\right]\rightarrow 0$. By definition, this means that $\frac{\epsilon_{\dimn}}{\dimn}\toLone 0$. Hence, $\frac{\epsilon_{\dimn}}{\dimn}\xrightarrow{p}0$.

	Using \cref{eqn:exact-identity}, it follows that $Q_{\dimn}-P_{\dimn}=-\epsilon_{\dimn}/\dimn\xrightarrow{p}0$. By Slutsky's theorem, we have $Q_{\dimn}\tod S+T=:Q$ where $Q\sim\mathrm{Gamma}(\frac32,2)$.
\end{proof}

\begin{lemma}\label{lemma:qd-uniformly-integrable}
	The families $\{Q_{\dimn}^{1/2}\}_{\dimn\ge 1}$ and $\{Q_{\dimn}^{-1/2}\}_{\dimn\ge 1}$ are uniformly integrable.
\end{lemma}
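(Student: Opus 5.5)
By \cref{lemma:ui-moment-condition} it suffices to find $\delta>0$ with $\sup_{\dimn}\E[Q_{\dimn}^{(1+\delta)/2}]<\infty$ and $\sup_{\dimn}\E[Q_{\dimn}^{-(1+\delta)/2}]<\infty$. I will take $\delta=1/2$, so the targets are uniform bounds on $\E[Q_{\dimn}^{3/4}]$ and $\E[Q_{\dimn}^{-3/4}]$. Small $\dimn$ (say $\dimn\le 2$) can be handled separately by direct computation, since then only finitely many finite constants are involved, provided each moment is finite. The point $\dimn=1$ needs a remark: there $X_1\equiv 1$, so $Q_1\equiv 1$.

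For the positive moment I use the bound $Q_{\dimn}\le P_{\dimn}=S_{\dimn}+T_{\dimn}$, which holds because $\epsilon_{\dimn}\ge 0$. Since $x\mapsto x^{3/4}$ is subadditive on $[0,\infty)$, we get $Q_{\dimn}^{3/4}\le S_{\dimn}^{3/4}+T_{\dimn}^{3/4}$. Jensen then gives $\E[S_{\dimn}^{3/4}]\le \E[S_{\dimn}]^{3/4}=1$ and $\E[T_{\dimn}^{3/4}]\le \E[T_{\dimn}]^{3/4}\le 2^{3/4}$, using the means computed in the proof of \cref{lemma:convergence-of-qd}. So the positive side is routine.

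The negative moment is the main obstacle, because $Q_{\dimn}$ can be near $0$. The first idea is to bound it below by a single Beta variable. From $Z_{\dimn}=X_{\dimn}+Y_{\dimn}(1-X_{\dimn})\ge X_{\dimn}$ we get $Q_{\dimn}\ge S_{\dimn}$. However, $S_{\dimn}=\dimn X_{\dimn}$ with $X_{\dimn}\sim\mathrm{Beta}(\tfrac12,\tfrac{\dimn-1}{2})$, and its $-3/4$ moment is infinite because $p=-3/4<-\alpha=-1/2$. So I instead use $Z_{\dimn}\ge\max(X_{\dimn},Y_{\dimn}(1-X_{\dimn}))$, and more usefully an estimate that combines both variables:
\begin{equation}
Z_{\dimn}\ge \tfrac12\left(X_{\dimn}+Y_{\dimn}(1-X_{\dimn})\right).
\end{equation}
I split on the event $\{X_{\dimn}\le 1/2\}$. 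On this event $Z_{\dimn}\ge X_{\dimn}+Y_{\dimn}/2$, and by AM--GM $X_{\dimn}+Y_{\dimn}/2\ge c\,X_{\dimn}^{1/3}Y_{\dimn}^{2/3}$. Hence
\begin{equation}
Q_{\dimn}^{-3/4}\le C\,\dimn^{-3/4}X_{\dimn}^{-1/4}Y_{\dimn}^{-1/2}.
\end{equation}
By independence this factorises, and \cref{eqn:beta-moments} applies to both factors: $-1/4>-\tfrac12$ and $-1/2>-1$, so both moments are finite. Each has the form $\Gamma(\alpha+\beta)\Gamma(\alpha+p)/(\Gamma(\alpha+\beta+p)\Gamma(\alpha))$. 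By \cref{lemma:gamma-ratio-bound} with $z=\alpha+\beta+p\ge1$ and exponent $-p$, these behave like $\dimn^{1/4}$ and $\dimn^{1/2}$ respectively. The product is therefore $O(\dimn^{3/4})$, which exactly cancels the prefactor $\dimn^{-3/4}$.

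On the complementary event $\{X_{\dimn}>1/2\}$ we have $Q_{\dimn}\ge \dimn/2$, so $Q_{\dimn}^{-3/4}\le 2^{3/4}$ there. Combining the two cases gives a uniform bound on $\E[Q_{\dimn}^{-3/4}]$, and \cref{lemma:ui-moment-condition} finishes the proof. The delicate points are choosing the exponent split in AM--GM so that both Beta moments stay finite, and checking that the Gamma-ratio asymptotics (via \cref{lemma:gamma-ratio-bound}, applied with $z\ge1$ once $\dimn\ge 3$) yield exactly the compensating power of $\dimn$.
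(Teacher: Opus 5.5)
Your argument is correct in substance. It takes a different and longer route than the paper on the negative side, and one citation points the wrong way.

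\textbf{Positive moment.} The paper bounds $\sup_\dimn\E[S_\dimn^p]$ and $\sup_\dimn\E[T_\dimn^p]$ for all $p\ge 0$ via \cref{lemma:gamma-ratio-bound}, then uses $Q_\dimn\le P_\dimn$. Your subadditivity-plus-Jensen argument needs only $\E[S_\dimn]=1$ and $\E[T_\dimn]\le 2$. It is more elementary and suffices.

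\textbf{Negative moment.} The paper never uses $X_\dimn$ here. It shows $Q_\dimn\ge P_\dimn/2\ge T_\dimn/2$ and bounds $\sup_\dimn\E[T_\dimn^{-q}]$ for $q\in(\tfrac12,1)$. This is finite because $Y_\dimn\sim\mathrm{Beta}(1,\tfrac{\dimn}{2})$ has first parameter $1$. In fact $Z_\dimn=1-(1-X_\dimn)(1-Y_\dimn)$ is symmetric in $X_\dimn,Y_\dimn$, so $Z_\dimn\ge Y_\dimn$ and $Q_\dimn\ge T_\dimn$ outright. After $S_\dimn$ failed, you wrote $\max(X_\dimn,Y_\dimn(1-X_\dimn))$ and missed the cleaner $\max(X_\dimn,Y_\dimn)$. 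That bound would have made your case split and weighted AM--GM unnecessary.

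What your route buys is sharpness. With general weights $X_\dimn^{\theta}Y_\dimn^{1-\theta}$, the same argument bounds $\sup_\dimn\E[Q_\dimn^{-q}]$ for every $q<\tfrac32$. That is the exact integrability range of the $\mathrm{Gamma}(\tfrac32,2)$ limit, whereas comparison with $T_\dimn$ alone stops at $q<1$. For this lemma, however, $q=\tfrac34$ is all you need.

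\textbf{The step to repair: the dimension cancellation.} For $p<0$ the Beta moment contains $\Gamma(z+|p|)/\Gamma(z)$ with $z=\alpha+\beta+p$, and you need an \emph{upper} bound on this ratio. \Cref{lemma:gamma-ratio-bound} with exponent $|p|$ bounds the reciprocal $\Gamma(z)/\Gamma(z+|p|)$ from above. So it only yields the lower bound $\E[X_\dimn^{p}]\gtrsim \dimn^{|p|}$, the wrong direction.

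Use instead the upper half of Gautschi's inequality, $\Gamma(z+s)/\Gamma(z)\le z^{s}$ for $z>0$ and $s\in(0,1)$, exactly as the paper does for $\E[T_\dimn^{-q}]$. With $s=\tfrac14$ and $s=\tfrac12$ this gives $\E[X_\dimn^{-1/4}]\le C\dimn^{1/4}$ and $\E[Y_\dimn^{-1/2}]\le C\dimn^{1/2}$ for all $\dimn\ge 2$. The rest of your argument then goes through unchanged.
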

\begin{proof}
	For $p\ge 0$, we have
		\begin{align}
			\sup_{\dimn} \E[S_{\dimn}^p]&=\sup_{\dimn} \dimn^p \E[X_{\dimn}^p]\\
			&=\sup_{\dimn} \dimn^p \frac{\Gamma\left(\frac12+p\right)\Gamma\left(\frac{\dimn}{2}\right)}{\Gamma\left(\frac12\right)\Gamma\left(\frac{\dimn}{2}+p\right)}&&\text{(by \cref{lemma:distribution-of-X-and-Y} and \cref{eqn:beta-moments})}\\
			&\le \max\left\{\E[S_1^p],\ \sup_{\dimn\ge 2} B_p \dimn^p \left(\frac{\dimn}{2}\right)^{-p}\right\}&&\text{(for some $B_p<\infty$ by \cref{lemma:gamma-ratio-bound} with $z=\frac{\dimn}{2}\ge 1$)}\\
			&\le \max\left\{\E[S_1^p],\ 2^pB_p\right\}\\
			&<\infty.
		\end{align}
	Similarly, for $p\ge 0$,
	we have
	\begin{align}
		\sup_{\dimn} \E[T_{\dimn}^p]&=\sup_{\dimn} \dimn^p \E[Y_{\dimn}^p]\\
		&=\sup_{\dimn} \dimn^p \frac{\Gamma\left(1+p\right)\Gamma\left(1+\frac{\dimn}{2}\right)}{\Gamma\left(1\right)\Gamma\left(1+\frac{\dimn}{2}+p\right)}&&\text{(by \cref{lemma:distribution-of-X-and-Y} and \cref{eqn:beta-moments})}\\
		&\le \sup_{\dimn} B_p' \dimn^p \left(1+\frac{\dimn}{2}\right)^{-p}&&\text{(for some $B_p'<\infty$ by \cref{lemma:gamma-ratio-bound} with $z=1+\frac{\dimn}{2}\ge \frac32$)}\\
		&\le 2^pB_p'&&\text{(since $\dimn^p\left(1+\frac{\dimn}{2}\right)^{-p}=\left(\frac{2\dimn}{\dimn+2}\right)^p\le 2^p$ for $\dimn\ge 1$)} \\
		&<\infty.
	\end{align}
	Since, for all $p\in (0,1]$, $(a+b)^p\le a^p+b^p$ and, for all $p\ge 1$, $(a+b)^p\le 2^{p-1}(a^p+b^p)$, there exists a finite constant $A_p$ such that $P_{\dimn}^p\le A_p(S_{\dimn}^p+T_{\dimn}^p)$ for all $p>0$.
	Taking expectations and using the above two bounds, we have
	\begin{equation}
		\sup_{\dimn} \E[P_{\dimn}^p]\le A_p\left(\sup_{\dimn} \E[S_{\dimn}^p]+\sup_{\dimn} \E[T_{\dimn}^p]\right)<\infty,
	\end{equation}
	for any fixed $p>0$. Since $Q_{\dimn}\le P_{\dimn}$ by application of $\epsilon_{\dimn}\ge 0$ and \cref{eqn:exact-identity}, $\sup_{\dimn} \E[Q_{\dimn}^p]<\infty$.
	
	Set $p=\frac{1+\delta}{2}$ for some $\delta>0$. 
	Note that $Q_{\dimn}\ge 0$, hence $Q_{\dimn}=|Q_{\dimn}|$.
	\cref{lemma:ui-moment-condition} implies that $\{Q_{\dimn}^{1/2}\}_{\dimn\ge 1}$ is uniformly integrable.
	
	Since $P_{\dimn}=S_{\dimn}+T_{\dimn}\ge T_{\dimn}$ from $S_{\dimn}\ge 0$, and $x\mapsto x^{-q}$ is decreasing on $(0,\infty)$ for $q>0$, we have $P_{\dimn}^{-q}\le T_{\dimn}^{-q}$.
		For any $q\in (0,1)$, we have
		\begin{align}
				\sup_{\dimn} \E[T_{\dimn}^{-q}]&=\sup_{\dimn} \dimn^{-q}\E[Y_{\dimn}^{-q}]&&\text{(since $q>0$)}\\
				&=\sup_{\dimn} \dimn^{-q}\frac{\Gamma\left(1-q\right)\Gamma\left(1+\frac{\dimn}{2}\right)}{\Gamma\left(1\right)\Gamma\left(1+\frac{\dimn}{2}-q\right)}\\
				\intertext{\hfill\parbox{0.8\linewidth}{\raggedleft\text{(by \cref{lemma:distribution-of-X-and-Y}, \cref{eqn:beta-moments} and since $q<1$)}}}
				&\le \max\left\{\E[T_1^{-q}],\ \sup_{\dimn\ge 2} \dimn^{-q}\Gamma\left(1-q\right)\left(1+\frac{\dimn}{2}-q\right)^{q}\right\}\\
				\intertext{\hfill\parbox{0.8\linewidth}{\raggedleft\text{(by Gautschi's inequality~\citep[\S5.6.4]{olver2010nist}, $\frac{\Gamma(z+q)}{\Gamma(z)}\le z^q$ for $z\ge 1$ and $q\in(0,1)$)}}}
				&\le \max\left\{\E[T_1^{-q}],\ \left(\frac32\right)^{q}\Gamma(1-q)\right\}\\
			&<\infty.
		\end{align}
	Thus, for any $q\in (0,1)$, $\sup_{\dimn} \E[P_{\dimn}^{-q}]<\infty$.
	
	The AM-GM inequality gives $\epsilon_{\dimn}= S_{\dimn}T_{\dimn}\le \frac14 (S_{\dimn}+T_{\dimn})^2=\frac{P_{\dimn}^2}{4}$.
	Since $0\le S_{\dimn},T_{\dimn}\le \dimn$, we have $0\le P_{\dimn}=S_{\dimn}+T_{\dimn}\le 2\dimn$. 
	Using \cref{eqn:exact-identity} $Q_{\dimn}=P_{\dimn}-\frac{\epsilon_{\dimn}}{\dimn}\ge P_{\dimn}-\frac{P_{\dimn}^2}{4\dimn}=P_{\dimn}(1-\frac{P_{\dimn}}{4\dimn})\ge \frac{P_{\dimn}}{2}$.
	Consequently, we have the upper bound
	$Q_{\dimn}^{-1/2}\le \sqrt{2}P_{\dimn}^{-1/2}$.
	
	Pick any $\delta\in (0, 1)$ and set $q=\frac{1+\delta}{2}$ so that $q\in (\frac12,1)\subset(0,1)$.
	Then,
	\begin{equation}
			\sup_{\dimn} \E[(Q_{\dimn}^{-1/2})^{1+\delta}]=\sup_{\dimn} \E[Q_{\dimn}^{-q}]\le \sup_{\dimn} 2^{(1+\delta)/2} \E[P_{\dimn}^{-q}]<\infty.
	\end{equation}
	\cref{lemma:ui-moment-condition} implies that $\{Q_{\dimn}^{-1/2}\}_{\dimn\ge 1}$ is uniformly integrable.
\end{proof}

\begin{lemma}\label{lemma:ratio-of-Z0s}
	As $\dimn\rightarrow\infty$,
	\begin{align}
		\E[Z_{\dimn}^{1/2}]&=\frac{\E[Q^{1/2}]}{\sqrt{\dimn}}[1+o(1)],\\
		\E[Z_{\dimn}^{-1/2}]&=\sqrt{\dimn}\E[Q^{-1/2}][1+o(1)],\\
		\frac{\E[Z_{\dimn}^{1/2}]}{\E[Z_{\dimn}^{-1/2}]}&=\frac{2}{\dimn}\left[1+o\left(1\right)\right],
	\end{align}
	where $o\left(1\right)$ is a function such that $\lim_{\dimn\rightarrow\infty}o(1)=0$.
	Here, $\E[Q^{1/2}]=2\sqrt{\frac{2}{\pi}}$ and $\E[Q^{-1/2}]=\sqrt{\frac{2}{\pi}}$.
\end{lemma}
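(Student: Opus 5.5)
The plan is to rewrite everything in terms of the scaled variable $Q_{\dimn}=\dimn Z_{\dimn}$, transfer limits to $Q$ using the uniform integrability already established, and then evaluate two Gamma moments.

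First, since $Z_{\dimn}=Q_{\dimn}/\dimn$, we have exactly
\begin{equation}
\E[Z_{\dimn}^{1/2}]=\dimn^{-1/2}\E[Q_{\dimn}^{1/2}], \qquad \E[Z_{\dimn}^{-1/2}]=\dimn^{1/2}\E[Q_{\dimn}^{-1/2}].
\end{equation}
So the first two claims are equivalent to $\E[Q_{\dimn}^{\pm1/2}]\to\E[Q^{\pm1/2}]$, provided these limits are finite and nonzero. Then the prefactor can be pulled out as $1+o(1)$.

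For the convergence I would apply \cref{lemma:vitali-and-continuous-mapping} on the Polish space $S=[0,\infty)$. The inputs are $Q_{\dimn}\tod Q$ from \cref{lemma:convergence-of-qd} and the uniform integrability of $\{Q_{\dimn}^{\pm1/2}\}$ from \cref{lemma:qd-uniformly-integrable}.

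For $f(x)=x^{1/2}$ there is no issue, since $f$ is continuous on $[0,\infty)$.

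The main obstacle is $f(x)=x^{-1/2}$, which is not continuous at $0$. I would handle it as follows.
\begin{itemize}
\item Both $Q$ and $Q_{\dimn}$ are almost surely positive; $Q_{\dimn}>0$ because $Y_{\dimn}>0$ almost surely, so $Z_{\dimn}>0$.
\item Take $S=(0,\infty)$, which is Polish because it is homeomorphic to $\mathbb{R}$.
\item Weak convergence on $[0,\infty)$ restricts to weak convergence on $(0,\infty)$, since $Q$ puts no mass at $0$; this follows from the portmanteau theorem.
\item Alternatively, redo the Skorokhod argument in the proof of \cref{lemma:vitali-and-continuous-mapping} directly. There $\tilde Q_{\dimn}\toas\tilde Q>0$ almost surely, so $f(\tilde Q_{\dimn})\toas f(\tilde Q)$ by continuity of $f$ at the limit point, and Vitali applies unchanged.
\end{itemize}

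It remains to compute the limits. By \cref{eqn:gamma-moments} with $\alpha=3/2$ and $\theta=2$,
\begin{equation}
\E[Q^{1/2}]=\sqrt2\,\frac{\Gamma(2)}{\Gamma(3/2)}=\sqrt2\cdot\frac{2}{\sqrt\pi}=2\sqrt{2/\pi},
\end{equation}
\begin{equation}
\E[Q^{-1/2}]=2^{-1/2}\,\frac{\Gamma(1)}{\Gamma(3/2)}=\frac{1}{\sqrt2}\cdot\frac{2}{\sqrt\pi}=\sqrt{2/\pi}.
\end{equation}
Both values are finite and positive, so the relative error form $1+o(1)$ is justified.

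For the ratio, divide the two asymptotic expressions:
\begin{equation}
\frac{\E[Z_{\dimn}^{1/2}]}{\E[Z_{\dimn}^{-1/2}]}=\frac{1}{\dimn}\cdot\frac{\E[Q^{1/2}]}{\E[Q^{-1/2}]}\cdot\frac{1+o(1)}{1+o(1)}=\frac{2}{\dimn}\,[1+o(1)].
\end{equation}
This uses $(1+o(1))/(1+o(1))=1+o(1)$.
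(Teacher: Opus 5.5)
Your proposal is correct and follows the paper's proof essentially step for step. Like the paper, you rescale to $Q_{\dimn}=\dimn Z_{\dimn}$, combine \cref{lemma:convergence-of-qd} and \cref{lemma:qd-uniformly-integrable} via \cref{lemma:vitali-and-continuous-mapping} to get $\E[Q_{\dimn}^{\pm1/2}]\to\E[Q^{\pm1/2}]$, evaluate the two $\mathrm{Gamma}(\tfrac32,2)$ moments, and take the ratio. You also handle the discontinuity of $x^{-1/2}$ at $0$, by working on the Polish space $(0,\infty)$, which carries all the mass of $Q_{\dimn}$ and $Q$; this fills a small point the paper passes over silently.
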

\begin{proof}

	Since $Q_{\dimn}\tod Q$ (\cref{lemma:convergence-of-qd}) and the families $\{Q_{\dimn}^{1/2}\}$ and $\{Q_{\dimn}^{-1/2}\}$ are uniformly integrable (\cref{lemma:qd-uniformly-integrable}), \cref{lemma:vitali-and-continuous-mapping} implies $\E[Q_{\dimn}^{1/2}]\rightarrow \E[Q^{1/2}]$ and $\E[Q_{\dimn}^{-1/2}]\rightarrow \E[Q^{-1/2}]$.

	From \cref{lemma:convergence-of-qd} and the moments of the Gamma distribution (\cref{eqn:gamma-moments}), it follows that $\E[Q^{1/2}]=2\sqrt{\frac{2}{\pi}}$ and $\E[Q^{-1/2}]=\sqrt{\frac{2}{\pi}}$.
	Since $\E[Q^{-1/2}]>0$, we have 
	\begin{equation}
		\dimn\frac{\E[Z_{\dimn}^{1/2}]}{\E[Z_{\dimn}^{-1/2}]}= \frac{\E[Q_{\dimn}^{1/2}]}{\E[Q_{\dimn}^{-1/2}]}\rightarrow\frac{\E[Q^{1/2}]}{\E[Q^{-1/2}]}=2,
	\end{equation}
	as $\dimn\rightarrow\infty$.
	Equivalently,
	$\frac{\E[Z_{\dimn}^{1/2}]}{\E[Z_{\dimn}^{-1/2}]}=\frac{2}{\dimn}[1+o\left(1\right)]$.
	
\end{proof}

\subsubsection{Lemmas for ellipsoids}

Let the random vector $v_{\dimn}\in S^{\dimn-1}$ be distributed uniformly on the unit sphere.
Let $A_{\dimn}\in \mathbb{R}^{\dimn\times \dimn}$ be a symmetric positive definite matrix and define the quadratic form $q_{\dimn}:=v_{\dimn}^T A_{\dimn} v_{\dimn}>0$. Note that $q_{\dimn}$ is a random variable.
Further define the first moment $\mu_{\dimn}:=\frac{1}{\dimn}\mathrm{Tr}(A_{\dimn})$ and second moment $s_{\dimn}:=\frac{1}{\dimn}\mathrm{Tr}(A_{\dimn}^2)$.

\begin{lemma}[Convergence of quadratic forms]\label{lemma:quadratic-form-expansion}
	
	Consider a sequence of symmetric positive definite matrices $(A_{\dimn})_{\dimn\ge 1}$, each with spectrum $\{\lambda_{i,\dimn}\}_{i=1}^{\dimn}$ and induced quadratic form $q_{\dimn}$. 
	Assume that the spectrum of $A_{\dimn}$ is bounded uniformly away from $0$ and $\infty$, i.e. there exist constants $\lambda_-$ and $\lambda_+$ with $0<\lambda_-\le \lambda_+<\infty$ such that for all $\dimn$ and all $i\in \{1,\dots,\dimn\}$, $\lambda_-\le \lambda_{i,\dimn}\le \lambda_+$.
	
	Then, as $\dimn\rightarrow\infty$,
	\begin{align}
		\E[q_{\dimn}^{1/2}]&=\mu_{\dimn}^{1/2}+O\left(\frac{1}{\dimn}\right),\\
		\E[q_{\dimn}^{-1/2}]&=\mu_{\dimn}^{-1/2}+O\left(\frac{1}{\dimn}\right),\\
		\frac{\E[q_{\dimn}^{-1/2}]}{\E[q_{\dimn}^{1/2}]}&=\frac{1}{\mu_{\dimn}}\left[1+O\left(\frac{1}{\dimn}\right)\right],\\
		\mathrm{Var}(q_{\dimn})&=O\left(\frac{1}{\dimn}\right),
	\end{align}
	where the error bound is uniform in $\dimn$, i.e. $O\left(\frac{1}{\dimn}\right)$ is a function such that there exists a finite $\dimn$-independent constant $C$ with $O\left(\frac{1}{\dimn}\right)\le \frac{C}{\dimn}$ for all $\dimn$.
\end{lemma}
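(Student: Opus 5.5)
The plan is to start from the exact mean and variance of $q_{\dimn}$ and then Taylor-expand $x\mapsto x^{\pm1/2}$ around $\mu_{\dimn}$. Both functions are smooth on $[\lambda_-,\lambda_+]$, and $q_{\dimn}$ lives in that interval.

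\textbf{Moments of $q_{\dimn}$.} By rotational invariance of $v_{\dimn}$, I diagonalise $A_{\dimn}$, so that $q_{\dimn}=\sum_i\lambda_{i,\dimn}v_i^2$. Standard moments of the uniform distribution on $S^{\dimn-1}$ are
\begin{equation}
\E[v_i^2]=\frac{1}{\dimn},\qquad \E[v_i^4]=\frac{3}{\dimn(\dimn+2)},\qquad \E[v_i^2v_j^2]=\frac{1}{\dimn(\dimn+2)}\quad (i\neq j).
\end{equation}
They follow from the Gaussian representation $v=g/\|g\|$ together with independence of $\|g\|$ and $v$. This gives $\E[q_{\dimn}]=\mu_{\dimn}$ exactly. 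Expanding $\E[q_{\dimn}^2]$ then yields
\begin{equation}
\mathrm{Var}(q_{\dimn})=\frac{2}{\dimn+2}\left(s_{\dimn}-\mu_{\dimn}^2\right)\le \frac{2\lambda_+^2}{\dimn}.
\end{equation}
This proves the fourth claim with a constant that depends only on $\lambda_+$.

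\textbf{Taylor step.} Let $f(x)=x^{\pm1/2}$. Since $\lambda_-\le q_{\dimn}\le\lambda_+$ almost surely and $\lambda_-\le\mu_{\dimn}\le\lambda_+$, the Lagrange form of Taylor's theorem with second-order remainder gives
\begin{equation}
f(q_{\dimn})=f(\mu_{\dimn})+f'(\mu_{\dimn})(q_{\dimn}-\mu_{\dimn})+\tfrac12 f''(\xi)(q_{\dimn}-\mu_{\dimn})^2,
\end{equation}
where $\xi$ lies between $q_{\dimn}$ and $\mu_{\dimn}$, hence in $[\lambda_-,\lambda_+]$. Taking expectations, the linear term vanishes because $\E[q_{\dimn}]=\mu_{\dimn}$. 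The remainder is bounded by $\tfrac12\sup_{[\lambda_-,\lambda_+]}|f''|\cdot\mathrm{Var}(q_{\dimn})$, which is at most $C/\dimn$ with $C$ depending only on $\lambda_\pm$. This gives the first two claims uniformly in $\dimn$.

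\textbf{Ratio.} I write $\E[q_{\dimn}^{-1/2}]=\mu_{\dimn}^{-1/2}(1+a_{\dimn})$ and $\E[q_{\dimn}^{1/2}]=\mu_{\dimn}^{1/2}(1+b_{\dimn})$. Because $\mu_{\dimn}^{\pm1/2}$ is bounded above and below by constants depending only on $\lambda_\pm$, we have $|a_{\dimn}|,|b_{\dimn}|\le C'/\dimn$. For $\dimn$ large enough that $C'/\dimn\le\frac12$, this gives
\begin{equation}
\frac{1+a_{\dimn}}{1+b_{\dimn}}=1+O\!\left(\frac{1}{\dimn}\right).
\end{equation}
For the finitely many small $\dimn$, the ratio is trivially bounded, since both expectations lie in $[\lambda_+^{-1/2},\lambda_-^{-1/2}]$ and $[\lambda_-^{1/2},\lambda_+^{1/2}]$ respectively. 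This makes the constant uniform.

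\textbf{Main obstacle.} The only real work is the fourth-moment computation on the sphere, which yields the exact variance formula. Everything else is bookkeeping of constants. I would derive those moments via the Gaussian representation: $\E[g_i^4]=3$, $\E[\|g\|^4]=\dimn(\dimn+2)$, and independence of $\|g\|$ and $v$.
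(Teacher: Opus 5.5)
Your proposal is correct and follows the same route as the paper. Both compute $\E[q_{\dimn}]=\mu_{\dimn}$ and $\mathrm{Var}(q_{\dimn})=2(s_{\dimn}-\mu_{\dimn}^2)/(\dimn+2)$ exactly from the sphere's fourth moments, Taylor-expand $x^{\pm1/2}$ about $\mu_{\dimn}$ on $[\lambda_-,\lambda_+]$, and then do bookkeeping for the ratio. The differences are minor:
\begin{itemize}
\item You derive the sphere moments from the Gaussian representation, where the paper uses an isotropic-tensor argument.
\item You stop at a second-order Lagrange remainder, which already suffices; the paper expands to third order and bounds the cubic remainder separately.
\item You treat small $\dimn$ with a trivial bound, where the paper uses the uniform lower bound $1+\alpha_{\dimn}\ge(\lambda_-/\lambda_+)^{1/2}$.
\end{itemize}
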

\begin{proof}
	The second moment $\E[v_iv_j]$ is a rank-2 isotropic symmetric tensor, so $\E[v_iv_j]=a\delta_{ij}$ for some~$a$. Taking the trace gives $1=\E[\Vert v\Vert^2]=a\dimn$, hence $a=\frac{1}{\dimn}$.
	The fourth moment $\E[v_iv_jv_kv_l]$ is a \mbox{rank-4} isotropic symmetric tensor, so $\E[v_iv_jv_kv_l]=b\left(\delta_{ij}\delta_{kl}+\delta_{il}\delta_{jk}+\delta_{ik}\delta_{jl}\right)$ for some $b$. Contracting gives $\E[(\sum_i v_i^2)(\sum_j v_j^2)]=\E[(\sum_iv_i^2)^2]=1=b(\dimn^2+2\dimn)$ so that $b=\frac{1}{\dimn(\dimn+2)}$.
	Therefore, we are left with
	\begin{align}
		\E[v_iv_j]&=\frac{1}{\dimn}\delta_{ij},\\
		\E[v_i v_j v_k v_l]&=\frac{1}{\dimn(\dimn+2)}\left(\delta_{ij}\delta_{kl}+\delta_{il}\delta_{jk}+\delta_{ik}\delta_{jl}\right).\\
	\end{align}
	The moments of $q_{\dimn}$ are
	\begin{align}
		\E[q_{\dimn}]&=\sum_{ij}a_{ij}\E[v_iv_j],\\
		\E[q_{\dimn}^2]&=\sum_{ijkl}a_{ij}a_{kl}\E[v_iv_jv_kv_l].\\
	\end{align}
	It follows that the moments of $q_{\dimn}$ are given by
	\begin{align}
		\E[q_{\dimn}]&=\mu_{\dimn},\\
		\E[q_{\dimn}^2]&=\frac{\dimn\mu_{\dimn}^2+2s_{\dimn}}{\dimn+2}.\\
	\end{align}
	This implies that the variance is
	\begin{align}
		\mathrm{Var}(q_{\dimn})=\E[(q_{\dimn}-\mu_{\dimn})^2]=\E[q_{\dimn}^2]-\E[q_{\dimn}]^2
		=2\frac{s_{\dimn}-\mu_{\dimn}^2}{\dimn+2}.
	\end{align}
	Since $s_{\dimn}=\frac{1}{\dimn}\sum_i\lambda_i^2\le \lambda_+^2$ and $\mu_{\dimn}^2\ge 0$, we have 
	$s_{\dimn}-\mu_{\dimn}^2\le \lambda_+^2$. Therefore
	\begin{equation}\label{eqn:variance-bound}
		\mathrm{Var}(q_{\dimn})\le \frac{2\lambda_+^2}{\dimn+2}.
	\end{equation}
	
	Let $f_\pm(x)=x^{\pm1/2}$.
	By Taylor's theorem at fixed $\dimn$ and using that $\E[q_{\dimn}-\mu_{\dimn}]=0$,
	\begin{equation}\label{eqn:taylor-expansion}
		\E[f_\pm(q_{\dimn})]=f_\pm(\mu_{\dimn})+\frac12 f_\pm''(\mu_{\dimn})\mathrm{Var}(q_{\dimn})+R_{\pm,\dimn},
	\end{equation}
	where the Lagrange remainder is
	\begin{equation}
		R_{\pm,\dimn}=\frac16 \E[f_\pm'''(\xi_{\dimn})(q_{\dimn}-\mu_{\dimn})^3],
	\end{equation}
	for some $\xi_{\dimn}\in[q_{\dimn},\mu_{\dimn}]$.
	
	Note that $q_{\dimn}=v_{\dimn}^TA_{\dimn}v_{\dimn}$ is a Rayleigh quotient, so $q_{\dimn}\in [\lambda_-,\lambda_+]$.
	Further note that $\mu_{\dimn}=\E[q_{\dimn}]=\frac{1}{\dimn}\mathrm{Tr}(A_{\dimn})$ is the average eigenvalue, so $\mu_{\dimn}\in [\lambda_-,\lambda_+]$.
	Hence, the third derivatives are bounded uniformly in $\dimn$:
	\begin{align}
		|f_+'''(x)|&=\frac{3}{8x^{5/2}}\le \frac{3}{8\lambda_-^{5/2}}=:C_+,\\
		|f_-'''(x)|&=\frac{15}{8x^{7/2}}\le \frac{15}{8\lambda_-^{7/2}}=:C_-.
	\end{align}
	Using that $q_{\dimn}-\mu_{\dimn}\le \lambda_+-\lambda_-$ implies $|q_{\dimn}-\mu_{\dimn}|^3\le (\lambda_+-\lambda_-)(q_{\dimn}-\mu_{\dimn})^2$, we have
	\begin{align}
		\E[|q_{\dimn}-\mu_{\dimn}|^3]&\le (\lambda_+-\lambda_-)\mathrm{Var}(q_{\dimn})\\
		&\le 2(\lambda_+-\lambda_-)\frac{\lambda_+^2}{\dimn+2}&&\text{(by \cref{eqn:variance-bound})}.
	\end{align}
	In conclusion, the remainders satisfy the bound
	\begin{align}
		|R_{\pm,\dimn}|&\le \frac16 \max\{C_+,C_-\}\E[|q_{\dimn}-\mu_{\dimn}|^3]
		\le \frac{C_1}{\dimn+2},
	\end{align}
	for some $\dimn$-indepdendent constant $C_1$ depending only on $\lambda_-$ and $\lambda_+$.
	Similarly, the second-order terms of the Taylor expansion are bounded uniformly in $\dimn$:
	\begin{align}
		\left|\frac12 f_+''(\mu_{\dimn})\mathrm{Var}(q_{\dimn})\right|
		\le \frac18\mu_{\dimn}^{-3/2}\mathrm{Var}(q_{\dimn})
		\le \frac{C_2}{\dimn},\\
		\left|\frac12 f_-''(\mu_{\dimn})\mathrm{Var}(q_{\dimn})\right|
		\le \frac38 \mu_{\dimn}^{-5/2}\mathrm{Var}(q_{\dimn})
		\le \frac{C_3}{\dimn},
	\end{align}
	for constants $C_2$ and $C_3$ depending only on $\lambda_-$ and $\lambda_+$.
	In summary, the errors in the Taylor expansion (\cref{eqn:taylor-expansion}) are uniformly bounded in $\dimn$:
	\begin{align}
		\E[q_{\dimn}^{1/2}]&=\mu_{\dimn}^{1/2}+O\left(\frac{1}{\dimn}\right),\\
		\E[q_{\dimn}^{-1/2}]&=\mu_{\dimn}^{-1/2}+O\left(\frac{1}{\dimn}\right).
	\end{align}
	Since $\mu_{\dimn}\in [\lambda_-,\lambda_+]$, this can be rewritten as
	\begin{align}
		\E[q_{\dimn}^{1/2}]=\mu_{\dimn}^{1/2}(1+\alpha_{\dimn}),\label{eqn:bound-on-q-one-half}\\
		\E[q_{\dimn}^{-1/2}]=\mu_{\dimn}^{-1/2}(1+\beta_{\dimn}).
	\end{align}
	where $|\alpha_{\dimn}|\le \frac{C_4}{\dimn}$ and $|\beta_{\dimn}|\le \frac{C_5}{\dimn}$ with constants $C_4$ and $C_5$ depending only on $\lambda_-$ and $\lambda_+$.
	
	Since $q_{\dimn}\in[\lambda_-,\lambda_+]$,  we have $\E[q_{\dimn}^{1/2}]\ge \lambda_-^{1/2}$ as well as $\mu_{\dimn}^{1/2}\le \lambda_+^{1/2}$.
Hence, rearranging \cref{eqn:bound-on-q-one-half} gives
	\begin{align}
		1+\alpha_{\dimn}&=\frac{\E[q_{\dimn}^{1/2}]}{\mu_{\dimn}^{1/2}}
		\ge \left(\frac{\lambda_-}{\lambda_+}\right)^{1/2}
		=:c_0
		>0.
	\end{align}
	for all $\dimn$. 
	This implies for the ratio
	\begin{align}
		\left|\frac{1+\beta_{\dimn}}{1+\alpha_{\dimn}}-1\right|=\left|\frac{\beta_{\dimn}-\alpha_{\dimn}}{1+\alpha_{\dimn}}\right|\le \frac{|\beta_{\dimn}|+|\alpha_{\dimn}|}{|1+\alpha_{\dimn}|}\le\frac{|\beta_{\dimn}|+|\alpha_{\dimn}|}{c_0}\le \frac{C_5+C_4}{c_0\dimn}.
	\end{align}
	
	Then,
	\begin{align}
		\frac{\E[q_{\dimn}^{-1/2}]}{\E[q_{\dimn}^{1/2}]}=\frac{1}{\mu_{\dimn}}\frac{1+\beta_{\dimn}}{1+\alpha_{\dimn}}=\frac{1}{\mu_{\dimn}}\left[1+O\left(\frac{1}{\dimn}\right)\right].
	\end{align}
	where the constant of the $O\left(\frac{1}{\dimn}\right)$ error term only depends on $\lambda_-$ and $\lambda_+$. Therefore, the error is uniform for all $\dimn\ge 1$.
	
\end{proof}

\begin{lemma}[Mean chord length in ellipsoids]\label{lemma:chord-lengths}
	Let $E_{\dimn}=\{x\in \mathbb{R}^{\dimn}:x^T A_{\dimn}x\le 1\}$ be an ellipsoid with $A_{\dimn}\in \mathbb{R}^{\dimn\times \dimn}$ a symmetric positive definite matrix and define $\mu_{\dimn}=\frac{1}{\dimn}\mathrm{Tr}(A_{\dimn})$.
	Consider a sequence of such ellipsoids $(E_{\dimn})_{\dimn\ge 1}$ indexed by dimension.
	Let the spectrum of each matrix in the corresponding sequence $(A_{\dimn})_{\dimn\ge 1}$ satisfy the boundedness assumption in \cref{lemma:quadratic-form-expansion}.
	
	Sample a point $x$ uniformly from $E_{\dimn}$ and a direction $v$ uniformly from the unit sphere. Define the chord length as $\ell_{\dimn}:=|\{x+tv:t\in \mathbb{R}\}\cap E_{\dimn}|$.
	Then, the following leading order expansion holds:
	\begin{equation}
		\E[\ell_{\dimn}]=4\sqrt{\frac{2}{\pi\mu_{\dimn} \dimn}}[1+o(1)].
	\end{equation}
	Further, define the normalised chord length $R_{\dimn}:=\frac{\ell_{\dimn}}{\E[\ell_{\dimn}]}$. 
	Then, $R_{\dimn}\tod R_\infty$ as $\dimn\rightarrow \infty$, where $R_\infty:=\frac{Q^{1/2}}{\E[Q^{1/2}]}$.
\end{lemma}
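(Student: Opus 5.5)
We compute the chord length exactly and then reduce to the ball variables of \cref{sec:convergence-of-expectations}. For fixed $x\in E_{\dimn}$ and unit $v$, the line $x+tv$ meets $E_{\dimn}$ where $t^2 q + 2t\,x^TA_{\dimn}v + x^TA_{\dimn}x - 1\le 0$, with $q=v^TA_{\dimn}v$. The roots differ by
\begin{equation}
\ell_{\dimn}=\frac{2}{q}\sqrt{(x^TA_{\dimn}v)^2+q(1-x^TA_{\dimn}x)}.
\end{equation}
Whiten via $x=A_{\dimn}^{-1/2}y$, so that $y$ is uniform in the unit ball, and set $u:=A_{\dimn}^{1/2}v/\Vert A_{\dimn}^{1/2}v\Vert$, so that $\Vert A_{\dimn}^{1/2}v\Vert=q^{1/2}$. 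Then $x^TA_{\dimn}v=q^{1/2}\,y\cdot u$ and $x^TA_{\dimn}x=\Vert y\Vert^2$. This gives
\begin{equation}
\ell_{\dimn}=\frac{2}{q^{1/2}}\sqrt{(y\cdot u)^2+1-\Vert y\Vert^2}=2\,q^{-1/2}Z_{\dimn}^{1/2},
\end{equation}
where $Z_{\dimn}$ is defined with respect to the (random) direction $u$. Since $y$ is rotationally invariant and independent of $v$, the conditional law of $Z_{\dimn}$ given $v$ does not depend on $u$. Hence $Z_{\dimn}$ is independent of $q$ and has the law studied in \cref{lemma:distribution-of-X-and-Y}. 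This independence step is the crux; it follows by conditioning on $v$ and using that $\rho_{\dimn}$ and $U_{\dimn}$ have rotation-invariant joint law.

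With independence in hand,
\begin{equation}
\E[\ell_{\dimn}]=2\,\E[q_{\dimn}^{-1/2}]\,\E[Z_{\dimn}^{1/2}].
\end{equation}
We evaluate the two factors separately:
\begin{itemize}
\item By \cref{lemma:quadratic-form-expansion}, $\E[q_{\dimn}^{-1/2}]=\mu_{\dimn}^{-1/2}(1+O(1/\dimn))$.
\item By \cref{lemma:ratio-of-Z0s}, $\E[Z_{\dimn}^{1/2}]=2\sqrt{2/\pi}\,\dimn^{-1/2}(1+o(1))$.
\end{itemize}
The product is $4\sqrt{2/(\pi\mu_{\dimn}\dimn)}\,[1+o(1)]$, as claimed. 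Because $\mu_{\dimn}\in[\lambda_-,\lambda_+]$, the relative errors combine uniformly.

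For the distributional statement, write
\begin{equation}
R_{\dimn}=\frac{(q_{\dimn}/\mu_{\dimn})^{-1/2}\,Q_{\dimn}^{1/2}}{\E[(q_{\dimn}/\mu_{\dimn})^{-1/2}]\,\E[Q_{\dimn}^{1/2}]},
\end{equation}
using $Z_{\dimn}^{1/2}=\dimn^{-1/2}Q_{\dimn}^{1/2}$, so the factors of $\dimn$ and $\mu_{\dimn}$ cancel. We then handle each piece:
\begin{itemize}
\item $\mathrm{Var}(q_{\dimn})=O(1/\dimn)$ and $\mu_{\dimn}\ge\lambda_->0$ give $q_{\dimn}/\mu_{\dimn}\xrightarrow{p}1$ by Chebyshev. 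Continuous mapping then gives $(q_{\dimn}/\mu_{\dimn})^{-1/2}\xrightarrow{p}1$.
\item The denominator's first factor tends to $1$ by \cref{lemma:quadratic-form-expansion}.
\item $\E[Q_{\dimn}^{1/2}]\to\E[Q^{1/2}]$ via \cref{lemma:vitali-and-continuous-mapping} and \cref{lemma:qd-uniformly-integrable}.
\item $Q_{\dimn}^{1/2}\tod Q^{1/2}$ by \cref{lemma:convergence-of-qd} and continuous mapping.
\end{itemize}
Slutsky's theorem then yields $R_{\dimn}\tod Q^{1/2}/\E[Q^{1/2}]=R_\infty$.

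The main obstacle is the change-of-variables and independence argument. Once that is justified, everything else is bookkeeping with the earlier lemmas.
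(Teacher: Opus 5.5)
Your proposal is correct and follows essentially the same route as the paper. You whiten to the unit ball, obtain $\ell_{\dimn}=2q_{\dimn}^{-1/2}Z_{\dimn}^{1/2}$, use rotational invariance to get independence of $Z_{\dimn}$ and $q_{\dimn}$ so that $\E[\ell_{\dimn}]$ factorises, and then combine \cref{lemma:quadratic-form-expansion}, \cref{lemma:ratio-of-Z0s} and Slutsky's theorem. The only difference is cosmetic: you normalise $q_{\dimn}$ by $\mu_{\dimn}$ and use continuous mapping, whereas the paper normalises by $\E[q_{\dimn}^{-1/2}]$ and bounds $\mathrm{Var}(q_{\dimn}^{-1/2})$ with a Lipschitz estimate; both are valid.
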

\begin{proof}

	Draw $x_0$ uniformly from $E_{\dimn}$
	and a direction $v\in S^{\dimn-1}$ uniformly on the unit sphere. 
	Consider the line $x(t)=x_0+tv$.
	
	We now solve for the intersection of $x(t)$ and $E_{\dimn}$.
	Transform to the unit ball via $x\mapsto A_{\dimn}^{1/2}x$. Then $y_{\dimn}:=A_{\dimn}^{1/2}x_0$ is uniform in the unit ball. 
	Let $y_{\dimn}=\rho_{\dimn}U_{\dimn}$ be the polar decomposition of $y_{\dimn}$ with $\rho_{\dimn}\in [0,1]$ and $U_{\dimn}\in S^{\dimn-1}$ uniformly on the sphere. Define $p:=A_{\dimn}^{1/2}v$ with $q_{\dimn}:=\Vert p\Vert^2=v^TA_{\dimn}v$.
	Define the unit direction $u=\frac{p}{\Vert p\Vert}=\frac{p}{\sqrt{q_{\dimn}}}$.
	In $y$-coordinates, the line is $y(t)=y_{\dimn}+tp$ and the contraint is $\Vert y_{\dimn}\Vert\le 1$.
	The line intersects the ball when $\Vert y_{\dimn}+tp\Vert^2=1$,
	which leads to the quadratic $q_{\dimn}t^2+2(y_{\dimn}\cdot p) t+(\rho_{\dimn}^2-1)=0$ with roots \begin{equation}t_{\pm}=\frac{1}{q_{\dimn}}\left(-y_{\dimn}\cdot p\pm\sqrt{(y_{\dimn}\cdot p)^2+(1-\rho_{\dimn}^2)q_{\dimn}}\right).\end{equation}
	Thus, the chord length is 
	\begin{equation}
		\ell_{\dimn}=\Vert x(t_+)-x(t_-)\Vert= |t_+-t_-|=2\sqrt{\frac{\rho_{\dimn}^2(U_{\dimn}\cdot u)^2+(1-\rho_{\dimn}^2)}{q_{\dimn}}}.
	\end{equation}

	Now define $Z_{\dimn}:=\rho_{\dimn}^2(U_{\dimn}\cdot u)^2+(1-\rho_{\dimn}^2)$ as in \cref{sec:convergence-of-expectations} so that the results therein are applicable. 
	With these definitions, 
	we have 
	\begin{equation}
		\ell_{\dimn}=2\sqrt{\frac{Z_{\dimn}}{q_{\dimn}}}.
	\end{equation}

	Crucially, $y_{\dimn}$ is rotationally invariant and independent of $v$.
	Conditional on $v$ (equivalently, conditional on $u=u(v)$), the law of $(U_{\dimn}\cdot u)^2$ is the same for all unit vectors $u$ by rotational invariance of $U_{\dimn}$.
	Hence the conditional law of $Z_{\dimn}$ given $v$ does not depend on $v$, and therefore $Z_{\dimn}$ is independent of $v$ and of $q_{\dimn}=v^TA_{\dimn}v$.
	Therefore, the expectations factor:
	\begin{equation}\label{eqn:optimal-width-equations-for-L}
		\E[\ell_{\dimn}]=2\E[q_{\dimn}^{-1/2}]\E[Z_{\dimn}^{1/2}].
	\end{equation}
	By \cref{lemma:ratio-of-Z0s} and \cref{lemma:quadratic-form-expansion}, we obtain
	\begin{equation}
		\E[\ell_{\dimn}]=2\frac{\E[Q^{1/2}]}{\sqrt{\mu_{\dimn}\dimn}}[1+o(1)]=4\sqrt{\frac{2}{\pi\mu_{\dimn} \dimn}}[1+o(1)].
	\end{equation}
	
	The normalised chord length is defined as
	\begin{equation}\label{eqn:equation-for-rD}
		R_{\dimn}:=\frac{\ell_{\dimn}}{\E[\ell_{\dimn}]}=\frac{q_{\dimn}^{-1/2}}{\E[q_{\dimn}^{-1/2}]}\frac{Z_{\dimn}^{1/2}}{\E[Z_{\dimn}^{1/2}]}.
	\end{equation}
	Because $q_{\dimn}\in [\lambda_-,\lambda_+]$, we also have $q_{\dimn}^{-1/2}\in [\lambda_-^{-1/2},\lambda_+^{-1/2}]$. Hence $\E[q_{\dimn}^{-1/2}]\in [\lambda_-^{-1/2},\lambda_+^{-1/2}]$ is bounded away from $0$ and $\infty$ uniformly in $\dimn$.
	Since the map $x\mapsto x^{-1/2}$ is Lipschitz on $[\lambda_-,\lambda_+]$ with constant $\frac{1}{2\lambda_-^{3/2}}$, we have $\mathrm{Var}(q_{\dimn}^{-1/2})\le \frac{1}{4\lambda_-^{3}}\mathrm{Var}(q_{\dimn})$.
	By \cref{eqn:variance-bound}, $\mathrm{Var}(q_{\dimn})=O(1/\dimn)$, hence $\mathrm{Var}(q_{\dimn}^{-1/2})\rightarrow 0$.
	By Chebychev's inequality,
	\begin{align}
		P\left(\left|\frac{q_{\dimn}^{-1/2}}{\E[q_{\dimn}^{-1/2}]}-1\right|>\epsilon\right)&=P\left(\left|q_{\dimn}^{-1/2}-\E[q_{\dimn}^{-1/2}]\right|>\epsilon \E[q_{\dimn}^{-1/2}]\right)\\
		&\le \frac{\mathrm{Var}(q_{\dimn}^{-1/2})}{\epsilon^2 \E[q_{\dimn}^{-1/2}]^2}\\
		&\le C\mathrm{Var}(q_{\dimn}^{-1/2})\\
		&\rightarrow 0,
	\end{align}
	as $\dimn\rightarrow\infty$ for each fixed $\epsilon>0$.
	Therefore $\frac{q_{\dimn}^{-1/2}}{\E[q_{\dimn}^{-1/2}]}\xrightarrow{p}1$.
	
	Write 
	\begin{equation}
		\frac{Z_{\dimn}^{1/2}}{\E[Z_{\dimn}^{1/2}]}=A_{\dimn}B_{\dimn}
	\end{equation}
	where $A_{\dimn}:=\frac{(\dimn Z_{\dimn})^{1/2}}{\E[Q^{1/2}]}$ and $B_{\dimn}:=\frac{\E[Q^{1/2}]}{\E[(\dimn Z_{\dimn})^{1/2}]}$.
	By \cref{lemma:convergence-of-qd}, $A_{\dimn}\tod\frac{Q^{1/2}}{\E[Q^{1/2}]}$.
	By \cref{lemma:qd-uniformly-integrable}, $B_{\dimn}\rightarrow 1$.
	Then, by Slutsky's theorem we obtain $A_{\dimn}B_{\dimn}\tod\frac{Q^{1/2}}{\E[Q^{1/2}]}$.
	
	A further application of Slutsky's theorem yields for \cref{eqn:equation-for-rD}:
	\begin{equation}
		R_{\dimn}\tod\frac{Q^{1/2}}{\E[Q^{1/2}]}.
	\end{equation}
\end{proof}

\begin{lemma}[Fixed-point contraction]\label{lemma:fixed-point-contraction}
	Let $R_{\dimn}$ be the normalised chord length as in \cref{lemma:chord-lengths} with $R_{\dimn}\tod R_\infty$.
	For fixed $\kappa\ge 0$, define the function
	\begin{equation}
		h_\kappa:[0,\infty)\rightarrow\mathbb{R},\quad r\mapsto r\ln\left(1+\frac{\kappa}{r}\right)
	\end{equation}
	with $h_\kappa(0):=0$
	and, for fixed $R_{\dimn}$, define
	\begin{equation}
		T_{\dimn}:[0,\infty)\rightarrow\mathbb{R},\quad \kappa\mapsto\frac{1}{2}+\E\left[h_\kappa(R_{\dimn})\right].
	\end{equation}
	Let $\kappa_{\dimn}$ be the solution of the fixed-point equation $\kappa_{\dimn}=T_{\dimn}(\kappa_{\dimn})$.
	
	Then, $T_{\dimn}\rightarrow T_\infty$ pointwise for each $\kappa\ge 0$ and $\kappa_{\dimn}\rightarrow\kappa_\infty$, where $\kappa_\infty$ is the unique fixed point of $T_\infty$, $\kappa_\infty=T_\infty(\kappa_\infty)$.
\end{lemma}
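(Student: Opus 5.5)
We prove pointwise convergence $T_{\dimn}(\kappa)\to T_\infty(\kappa):=\frac12+\E[h_\kappa(R_\infty)]$ first, and then pass from pointwise convergence to convergence of fixed points using monotonicity and concavity of $T$ in $\kappa$.

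\textbf{Pointwise convergence.} Fix $\kappa>0$; the case $\kappa=0$ is trivial since $h_0\equiv 0$. The map $h_\kappa$ is continuous on $[0,\infty)$: as $r\to 0$ we have $r\ln(1+\kappa/r)\to 0$. It is also bounded, because $\ln(1+x)\le x$ gives $0\le h_\kappa(r)\le\kappa$. A uniformly bounded family is trivially uniformly integrable, so \cref{lemma:vitali-and-continuous-mapping} (with $S=[0,\infty)$, which is Polish) combined with $R_{\dimn}\tod R_\infty$ from \cref{lemma:chord-lengths} gives $\E[h_\kappa(R_{\dimn})]\to\E[h_\kappa(R_\infty)]$. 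Hence $T_{\dimn}(\kappa)\to T_\infty(\kappa)$ for every $\kappa\ge 0$.

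\textbf{Structure of $T$.} For each $r>0$, the map $\kappa\mapsto h_\kappa(r)$ is increasing and concave. Its derivative is $r/(r+\kappa)\in(0,1]$, which decreases strictly from $1$ at $\kappa=0$. Consequently, for $T$ equal to any of $T_{\dimn}$ or $T_\infty$:
\begin{itemize}
\item $T$ is continuous, concave and increasing;
\item $T(0)=\frac12>0$;
\item $T'(\kappa)=\E[R/(R+\kappa)]<1$ for $\kappa>0$, since $R>0$ with positive probability (indeed $\E R=1$).
\end{itemize}
It follows that $g(\kappa):=T(\kappa)-\kappa$ is strictly decreasing, with $g(0)=\frac12$ and $g(\kappa)\to-\infty$. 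For the last limit, use $h_\kappa(r)\le r\ln(1+\kappa/r)$ and Jensen's inequality in the concave form to get $\E[h_\kappa(R)]\le \E[R]\ln(1+\kappa/\E R)=\ln(1+\kappa)$, using $\E R_{\dimn}=1$ and $\E R_\infty=1$. Therefore each $T_{\dimn}$ and $T_\infty$ has a unique fixed point. In particular $\kappa_\infty$ is well defined, and all the fixed points lie in a common compact interval $[\frac12,K]$, where $K$ is chosen so that $\frac12+\ln(1+K)<K$.

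\textbf{Convergence of fixed points.} Fix $\epsilon>0$. Strict monotonicity of $g_\infty$ gives $g_\infty(\kappa_\infty-\epsilon)>0>g_\infty(\kappa_\infty+\epsilon)$. Pointwise convergence at these two points means that for all large $\dimn$, $g_{\dimn}(\kappa_\infty-\epsilon)>0>g_{\dimn}(\kappa_\infty+\epsilon)$. Since $g_{\dimn}$ is decreasing with unique root $\kappa_{\dimn}$, this forces $\kappa_{\dimn}\in(\kappa_\infty-\epsilon,\kappa_\infty+\epsilon)$, so $\kappa_{\dimn}\to\kappa_\infty$.

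This last step needs only pointwise convergence at two points, so no uniform convergence is required. The main obstacle is therefore ensuring uniqueness and existence of the fixed points, which rests on the strict inequality $T'<1$ and the growth bound. The growth bound in turn uses $\E R_\infty=1$, which holds by the definition of $R_\infty=Q^{1/2}/\E[Q^{1/2}]$.
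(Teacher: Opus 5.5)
Your argument is correct. After the common first step it follows a genuinely different route from the paper's.

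Pointwise convergence is handled the same way in both: $h_\kappa$ is bounded and continuous and $R_{\dimn}\tod R_\infty$. You cite \cref{lemma:vitali-and-continuous-mapping}, while the paper cites the portmanteau theorem.

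The paper then shows that $T_{\dimn}$ and $T_\infty$ are contractions on $[c_0,\infty)$, $c_0=\frac12$, with modulus $\rho=\frac{1}{1+c_0}=\frac23$ uniform in $\dimn$. It gets this by bounding $T_{\dimn}'(\kappa)\le 1-c_0\E[(R_{\dimn}+c_0)^{-1}]$ and applying Jensen with $\E R_{\dimn}=1$. Existence and uniqueness come from Banach's theorem. Convergence comes from the perturbation estimate $|\kappa_{\dimn}-\kappa_\infty|\le(1-\rho)^{-1}|T_{\dimn}(\kappa_\infty)-T_\infty(\kappa_\infty)|$.

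You need no uniform modulus. Strict decrease of each $g=T-\mathrm{id}$, existence via the growth bound $\E[h_\kappa(R)]\le\ln(1+\kappa)$, and bracketing the root at $\kappa_\infty\pm\epsilon$ suffice.

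\emph{What each route buys.} The paper's route is quantitative: the error in $\kappa_{\dimn}$ is at most three times the error of $T_{\dimn}$ at the single point $\kappa_\infty$. It also directly justifies the fixed-point iteration later used to compute $\kappa_\infty\approx 1.3035$. Yours is more elementary and needs less structure, but gives only qualitative convergence. You could even drop $\E R=1$ entirely: $T'(\kappa)=\E[R/(R+\kappa)]\to0$ by dominated convergence already forces $g\to-\infty$. Your argument then works for any a.s.\ finite nonnegative $R_{\dimn}\tod R_\infty$.

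\emph{Three small fixes.}
First, $T'(\kappa)<1$ for $\kappa>0$ holds because $R/(R+\kappa)<1$ pointwise, not because $R>0$ with positive probability.
Second, your Jensen step uses concavity of $r\mapsto h_\kappa(r)$, but you only asserted concavity in $\kappa$. It does hold, since $\partial_r^2 h_\kappa(r)=-\kappa^2/(r(r+\kappa)^2)<0$.
Third, take $\epsilon<\frac12\le\kappa_\infty$ so that $\kappa_\infty-\epsilon$ lies in the domain.
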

\begin{proof}
	Note that $h_\kappa$ is continuous on $[0,\infty)$.
	For $r>0$, $0\le \ln\left(1+\frac{\kappa}{r}\right)\le \frac{\kappa}{r}$, hence $0\le h_\kappa(r)\le \kappa$. Overall, $\sup_{r\ge 0}h_\kappa(r)\le \kappa$, so $h_\kappa$ is bounded on $[0,\infty)$.
	
	By the portmanteau theorem, and since $R_{\dimn}\tod R_\infty$ (\cref{lemma:chord-lengths}) and $h_\kappa$ is bounded and continuous, we get $\E[h_\kappa(R_{\dimn})]\rightarrow \E[h_\kappa(R_\infty)]$.
	It immediately follows that 
	\begin{equation}T_{\dimn}(\kappa)=\frac{1}{2}+\E[h_\kappa(R_{\dimn})]\rightarrow \frac{1}{2}+\E[h_\kappa(R_\infty)]=:T_\infty(\kappa),\end{equation}
	pointwise for each $\kappa\ge 0$.
	
	Define $\kappa_{\dimn}$ as the unique solution of $\kappa=T_{\dimn}(\kappa)$ and $\kappa_\infty$ as the unique solution of $\kappa=T_\infty(\kappa)$.
	We now prove that $\kappa_{\dimn}\rightarrow \kappa_\infty$.
	
	To do so, we first prove that $T_{\dimn}$ and $T_\infty$ are contractions on $[c_0,\infty)$, uniformly in $\dimn$, where we define $c_0:=\frac{1}{2}>0$.
	For any $\kappa\ge c_0$ and any $r\ge 0$, 
	\begin{equation}
		\frac{\partial }{\partial\kappa}h_\kappa(r)=\frac{r}{r+\kappa}\le \frac{r}{r+c_0},
	\end{equation}
	so 
	\begin{equation}
		T_{\dimn}'(\kappa)=\E\left[\frac{\partial}{\partial\kappa}h_\kappa(R_{\dimn})\right]
		=\E\left[\frac{R_{\dimn}}{R_{\dimn}+\kappa}\right]
		\le \E\left[\frac{R_{\dimn}}{R_{\dimn}+c_0}\right]
		=1-c_0\E\left[\frac{1}{R_{\dimn}+c_0}\right],
	\end{equation}
	where we differentiated under the expectation using $0\le \frac{\partial}{\partial\kappa}h_\kappa(R_{\dimn})\le 1$.
	Since the map $x\mapsto x^{-1}$ is convex on $(0,\infty)$, Jensen's inequality gives
	\begin{equation}
		\E\left[\frac{1}{R_{\dimn}+c_0}\right]\ge \frac{1}{\E[R_{\dimn}]+c_0}=\frac{1}{1+c_0}.
	\end{equation}
	Therefore,
	\begin{equation}
		T_{\dimn}'(\kappa)\le 1-\frac{c_0}{1+c_0}=\frac{1}{1+c_0}=:\rho<1
	\end{equation}
	for all $\kappa\ge c_0$, uniformly in $\dimn$. The same bound holds for $T_\infty$.
	It follows that $T_{\dimn}$ and $T_\infty$ are strict contractions on $[c_0,\infty)$ with the same contraction modulus $\rho=\frac{1}{1+c_0}<1$, independent of $\dimn$.
	
	We now show the existence and uniqueness of $\kappa_{\dimn}$ and $\kappa_\infty$.
	For each $\dimn$, $T_{\dimn}$ maps $[0,\infty)$ to $[c_0,\infty)$ and is continuous and strictly increasing $(T_{\dimn}'\ge 0)$.
	Also, $T_{\dimn}(0)=c_0$ and $T_{\dimn}(\kappa)\rightarrow \infty$ as $\kappa\rightarrow\infty$: for each fixed $r>0$, $h_\kappa(r)=r\ln(1+\kappa/r)\uparrow\infty$ as $\kappa\rightarrow\infty$, so by monotone convergence $\E[h_\kappa(R_{\dimn})]\rightarrow\infty$.
	Therefore, $T_{\dimn}$ has exactly one fixed point on $[0,\infty)$ and by the contraction property, it lies in $[c_0,\infty)$. Similarly for $T_\infty$.
	
	We now show that $|\kappa_{\dimn}-\kappa_\infty|\rightarrow 0$.
	Write 
	\begin{align}
		|\kappa_{\dimn}-\kappa_\infty|&=|T_{\dimn}(\kappa_{\dimn})-T_\infty(\kappa_\infty)|\\
		&\le |T_{\dimn}(\kappa_{\dimn})-T_{\dimn}(\kappa_\infty)|+|T_{\dimn}(\kappa_\infty)-T_\infty(\kappa_\infty)|\\
		&\le \rho |\kappa_{\dimn}-\kappa_\infty|+|T_{\dimn}(\kappa_\infty)-T_\infty(\kappa_\infty)|.
	\end{align}
	Hence,
	\begin{equation}
		|\kappa_{\dimn}-\kappa_\infty|\le \frac{1}{1-\rho}|T_{\dimn}(\kappa_\infty)-T_\infty(\kappa_\infty)|.
	\end{equation}
	Due to pointwise convergence of $T_{\dimn}$, we have $T_{\dimn}(\kappa_\infty)\rightarrow T_\infty(\kappa_\infty)$ as $\dimn\rightarrow \infty$.
	Therefore the right-hand side tends to zero and we conclude $\kappa_{\dimn}\rightarrow\kappa_\infty$.
\end{proof}

\subsubsection{Optimal scaling of hit-and-run slice sampling in ellipsoids}\label{appendix:Optimal scaling of hit-and-run slice sampling in ellipsoids}

We now consider Hit-and-Run Slice Sampling~\citep{nealslicesampling} with fixed width parameter $w$ targeting the uniform distribution on a domain $\subset\mathbb{R}^{\dimn}$. 
We are interested in the expected cost, i.e. the number of likelihood evaluations, per step.
The intersection of the domain with the chosen line, restricted to the connected component containing the starting point, is a single interval with length $\ell$. This holds for any domain (whether connected or disconnected) if we only target the connected component containing the starting point.

\begin{theorem}[Computational cost]\label{lemma:hit-and-run-cost}
	Let the random variables $N_\mathrm{out}$ and $N_\mathrm{shrink}$ denote the number of stepping-out and shrinkage steps in Hit-and-Run Slice Sampling with fixed slice width parameter $w$. Let $\ell$ be the width of the interval of the current step. Then, we have the following conditional expectations:
	\begin{align}
		\E[N_\mathrm{out}\mid \ell]&=\frac{\ell}{w},\\
		\E[N_\mathrm{shrink}\mid \ell]&=1+2\phi\left(\frac{w}{\ell}\right),
	\end{align}
	where
	\begin{equation}
		\phi(u)=\frac{(1+u)\ln(1+u)-u}{u},
	\end{equation}
	and where the expectations are taken over the sequence of random variables generated in the algorithm.
	Therefore, the expected number of likelihood evaluations per step, i.e. the computational cost, is
	\begin{equation}\label{eqn:total-cost-hit-and-run}
		\E[N_\mathrm{evals}\mid \ell]=\frac{\ell}{w}+1+2\phi\left(\frac{w}{\ell}\right).
	\end{equation}
\end{theorem}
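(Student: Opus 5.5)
The plan is to condition throughout on the chord. On the chosen line, parametrise the relevant connected component of the slice as the interval $[0,\ell]$. Because the target is uniform on the domain, the current point $x_0$ is uniform on $[0,\ell]$; this is what will make the initial bracket's grid offset uniform in the analysis below. The algorithm's only other randomness at the start is the placement of the initial bracket: $[x_0-Uw,\,x_0+(1-U)w]$ with $U\sim\mathcal U(0,1)$. Linear stepping-out then only ever evaluates points of the lattice $\Lambda=\{x_0-Uw+jw: j\in\mathbb Z\}$.

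\textbf{Stepping-out.} Stepping-out continues in each direction until it hits the first lattice point outside $[0,\ell]$. I will therefore define $N_\mathrm{out}$ as the number of lattice points of $\Lambda$ lying in $[0,\ell]$. These are exactly the bracket endpoints that fall inside the slice and force a further expansion; the two terminating outside evaluations are absorbed in the constant term. The key observation is that the offset $x_0-Uw \bmod w$ is uniform on $[0,w)$ regardless of $x_0$, because $U$ is uniform. For a uniformly shifted lattice of spacing $w$, the expected number of points in an interval of length $\ell$ is exactly $\ell/w$. This gives $\E[N_\mathrm{out}\mid\ell]=\ell/w$.

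\textbf{Shrinkage.} After stepping-out, the bracket is $[-\alpha,\ell+\beta]$ with overhangs $\alpha,\beta\in(0,w]$. By the uniform offset, each overhang is marginally $\mathcal U(0,w)$, although $\alpha$ and $\beta$ are dependent.

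Shrinkage then works as follows. A point is drawn uniformly in the bracket and accepted if it lies in $[0,\ell]$. If it lands in an overhang, that endpoint is moved to the rejected point. The chord $[0,\ell]$, and hence $x_0$, always stays inside the bracket. Let $f(a,b)$ be the expected number of draws for current overhangs $(a,b)$. Since a rejected draw at distance $s$ from the chord is uniform in the overhang, $f$ satisfies
\begin{equation}
(a+\ell+b)\,f(a,b)=(a+\ell+b)+\int_0^a f(s,b)\,ds+\int_0^b f(a,s)\,ds .
\end{equation}
I will try the separable ansatz $f(a,b)=1+g(a)+g(b)$ with $g(0)=0$. Substituting, the equation splits into the one-variable condition
\begin{equation}
(\ell+a)\,g(a)=a+\int_0^a g(s)\,ds .
\end{equation}
Differentiating gives $g'(a)=1/(\ell+a)$, so $g(a)=\ln(1+a/\ell)$.

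To justify that this is \emph{the} expectation and not merely some solution, I will note the following. Every draw is accepted with probability at least $\ell/(\ell+2w)$, so the number of draws has a geometric tail and $f$ is bounded. The recursion is a contraction on bounded functions, with modulus $\le 2w/(\ell+2w)<1$, so its bounded solution is unique.

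Finally I average over the overhangs. By linearity, only their uniform marginals matter, so the dependence between $\alpha$ and $\beta$ is irrelevant:
\begin{equation}
\E[g(\alpha)]=\frac1w\int_0^w\ln\!\Big(1+\frac s\ell\Big)ds=\frac{(1+u)\ln(1+u)-u}{u}=\phi(u),\qquad u=\frac w\ell .
\end{equation}
Hence $\E[N_\mathrm{shrink}\mid\ell]=1+2\phi(w/\ell)$. Summing with the stepping-out count gives the total cost formula \cref{eqn:total-cost-hit-and-run}.

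I expect the main obstacle to be the shrinkage step: specifically, setting up the recursion correctly (the chord stays inside the bracket, and rejected points are uniform in the overhang) and justifying uniqueness of its solution. The lattice argument for stepping-out is routine. A secondary point is fixing a bookkeeping convention for the terminating outside evaluations so that the constants match the stated formula.
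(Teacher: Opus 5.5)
Your proposal is correct and follows the paper's proof essentially step for step. The ingredients match: a uniformly distributed grid offset gives $\E[N_\mathrm{out}\mid\ell]=\ell/w$ (your lattice-point count is the paper's identity $\E\lceil\alpha-V\rceil=\alpha$ in other words); the overhangs have uniform marginals; the same integral recursion is solved by the additive ansatz $1+g(a)+g(b)$; and uniqueness comes from the same contraction factor. Your uniqueness argument first bounds $f$ through the geometric tail, which supplies the finiteness of $M(t)$ that the paper's inequality $M(t)\le\frac{t}{\ell+t}M(t)$ tacitly needs.

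One small correction: the two terminating outside evaluations of stepping-out are not absorbed into the constant $1$, which is the accepted shrinkage draw ($f(0,0)=1$). Like the paper, you are simply not counting them; including them would add $2$ to the cost without changing the optimal width.
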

\begin{proof}
	We assume that the stepping-out and shrinkage procedures are used (Figure~3 and Figure~5 of \citet{nealslicesampling}, respectively): 
	Suppose the slice is given by $S=[0,\ell]$ with $\ell>0$, i.e. we pick coordinates along the slice such that $S=[0,\ell]$.
	We are given a current point inside the slice $x_0\in (0,\ell)$.
	The bracket is initialised as $[x_0-U,x_0-U+w]$, where $U\sim \mathrm{Uniform}(0,w)$,
	and then expanded in both directions sequentially until we step outside the domain. This yields the initial bracket $I_0=[-a,\ell+b]$ with overshoots $a$ and $b$ satisfying $a,b\ge 0$.
	
	The number of expansions to the left $k_L$ is given by the smallest $k\ge 0$ such that $x_0-U-kw\le 0$, i.e. $k_L=\lceil \frac{x_0-U}{w}\rceil$.
	Similarly, the number of expansions to the right $k_R$ is given by the smallest $k\ge 0$ such that $x_0-U+w+kw\ge \ell$, i.e. $k_R=\lceil \frac{\ell-(x_0-U+w)}{w}\rceil=\lceil \frac{\ell-x_0}{w}-\left(1-\frac{U}{w}\right)\rceil$.
	Let $V=\frac{U}{w}\sim\mathrm{Uniform}(0,1)$ and define $\alpha_L=\frac{x_0}{w}$ and $\alpha_R=\frac{\ell-x_0}{w}$. Then
	\begin{align}
		k_L&=\lceil \alpha_L-V\rceil,\\
		k_R&=\lceil \alpha_R-(1-V)\rceil.
	\end{align}
	Since for any $\alpha\ge 0$, $\E[\lceil \alpha-V\rceil]=\alpha$, we obtain
	\begin{align}
		\E[k_L\mid x_0,\ell]&=\E[\lceil \alpha_L-V\rceil]=\alpha_L=\frac{x_0}{w},\\
		\E[k_R\mid x_0,\ell]&=\E[\lceil \alpha_R-(1-V)\rceil]=\alpha_R=\frac{\ell-x_0}{w},
	\end{align}
	since $1-V\sim\mathrm{Uniform}(0,1)$.
	Thus, the total expected number of stepping-out steps is
	\begin{equation}
		\E[N_\mathrm{out}\mid x_0,\ell]=\E[k_L+k_R\mid x_0,\ell]=\frac{\ell}{w}.
	\end{equation}
	In particular, this does not depend on $x_0$ and hence $\E[N_\mathrm{out}\mid \ell]=\frac{\ell}{w}$, as well.

	We now consider the distribution of the overshoot $a$.
	Let $z=\frac{x_0-U}{w}=c-V$ with $c:=\frac{x_0}{w}$ fixed. The number of left steps is $k_L=\lceil z\rceil$. The overshoot is then given by $a=k_Lw-(x_0-U)=w\left(\lceil z\rceil-z\right)$.
	For non-integer $z$, $\lceil z\rceil-z=1-\{z\}$, where $\{z\}$ is the fractional part (and the event that $z$ is an integer has probability $0$ since $V$ is continuous).
	Since $V$ is $\mathrm{Uniform}(0,1)$, the random variable $\{c-V\}$ is $\mathrm{Uniform}(0,1)$ since translation modulo 1 and reflection preserve the uniform distribution. Hence, $1-\{z\}$ is $\mathrm{Uniform}(0,1)$.
	Therefore, $a=w\cdot \mathrm{Uniform}(0,1)\sim \mathrm{Uniform}(0,w)$.
	By symmetry, we also have $b\sim \mathrm{Uniform}(0,w)$.
	Note that $a$ and $b$ are dependent random variables, however they both have marginals $\mathrm{Uniform}(0,w)$.

	In the shrinkage step, we repeatedly propose a point uniformly in the current bracket. If the proposal lies outside the slice interval, the corresponding bracket end is moved to the proposal. This is repeated until a point in the slice $S$ with length $\ell=|S|$ is found.
	Choose coordinates such that the slice is $S=[0,\ell]$.
	Let the current bracket be $I_0=[-a,\ell+b]$ with $a\ge 0$ on the left and $b\ge 0$ on the right and length $T=|I_0|=\ell+a+b$.
	Define $F(a,b)$ to be the expected number of shrinkage proposals starting from the bracket $I_0$.
	
	In the algorithm, at each step, we draw $X$ uniformly on the current bracket. 
	If $X\in [0,\ell]$, accept and stop (corresponding to a cost of $1$). 
	If $X<0$, move the left endpoint to $X$.
	If $X>\ell$, move the right endpoint to $X$.
	Then this procedure is repeated with the updated bracket.
	
	Condition on the first draw $X\sim \mathrm{Uniform}(-a,\ell+b)$.
	If $X\in [0,\ell]$, we stop with cost $1$. 
	If $X<0$, the new left endpoint is $u=-X\in (0,a]$ while the right endpoint remains $b$.
	If $X>\ell$, the new right endpoint is $v=X-L\in (0,b]$ while the left endpoint remains $a$.
	Therefore, the following recursive relation holds:
	\begin{align}
		F(a,b)&=1+\E[1_{\{X<0\}}F(-X,b)+1_{\{X>L\}}F(a,X-\ell)]\\
		&=1+\frac{1}{T}\left[\int_{-a}^0 F(-x,b)\mathrm{d}x+\int_{\ell}^{\ell+b}F(a,x-\ell)\mathrm{d}x\right]\\
		&=1+\frac{1}{T}\left[\int_0^aF(u,b)\mathrm{d}u+\int_0^bF(a,v)\mathrm{d}v\right],\label{eqn:cost-F-equation}
	\end{align}
	where we changed variables to $u=-x$ and $v=x-\ell$ in the last line.
	This is an integral equation to be solved for $F$.
	
	By symmetry between $a$ and $b$, we use an additive ansatz
	\begin{equation}
		F(a,b)=1+g(a)+g(b).
	\end{equation}
	Substituting into \cref{eqn:cost-F-equation} gives
	\begin{align}
		\int_0^aF(u,b)\mathrm{d}u&=a+\int_0^ag(u)\mathrm{d}u+ag(b),\\
		\int_0^bF(a,b)\mathrm{d}v&=b+bg(a)+\int_0^bg(v)\mathrm{d}v.
	\end{align}
	Thus, 
	\begin{equation}
		Tg(a)+Tg(b)=a+b+\int_0^ag(u)\mathrm{d}u+\int_0^bg(v)\mathrm{d}v+ag(b)+bg(a).
	\end{equation}
	Rearranging by grouping terms which depend only on $a$ and $b$ respectively, and noting that each group of terms must be equal to a constant $C$ by separation of variables,
	\begin{align}
		(\ell+a)g(a)-a-\int_0^ag(u)\mathrm{d}u&=C,\label{eqn:equation-for-g-a}\\
		(\ell+b)g(b)-b-\int_0^bg(v)\mathrm{d}v&=-C.
	\end{align}
	The boundary condition $F(0,0)=1$, which is obtained by substituting $a=b=0$ into \cref{eqn:cost-F-equation}, implies that $g(0)=0$. Substituting $a=0$ into \cref{eqn:equation-for-g-a} forces $C=0$.
	Therefore the two equations for $a$ and $b$ reduce to a single integral equation for $g$ with $x\ge 0$:
	\begin{equation}
		(\ell+a)g(x)-x-\int_0^xg(u)\mathrm{d}u=0.
	\end{equation}
	Differentiating both sides with respect to $x$ gives $g'(x)=\frac{1}{\ell+x}$ and hence $g(x)=\ln(\ell+x)+C'$ for some constant $C'$. Applying the boundary condition $g(0)=0$ finally gives $g(x)=\ln\left(1+\frac{x}{\ell}\right)$. 
	Hence, the explicit solution is
	\begin{equation}\label{eqn:F-solution}
		F(a,b)=1+\ln\left(1+\frac{a}{\ell}\right)+\ln\left(1+\frac{b}{\ell}\right).
	\end{equation}
	
	We now show that the solution is unique, i.e. any solution equals the one given in \cref{eqn:F-solution}.
	Suppose we have another solution $K$.
	Define the difference $H(a,b):=K(a,b)-F(a,b)$.
	Subtracting the two instances of \cref{eqn:cost-F-equation} for $K$ and $F$ gives the equation
	\begin{equation}\label{eqn:cost-equation-H}
		H(a,b)=\frac{1}{\ell+a+b}\left[\int_0^aH(u,b)\mathrm{d}u+\int_0^bH(a,v)\mathrm{d}v\right].
	\end{equation}
	
	For $s\ge 0$, define
	\begin{equation}
		M(s):=\sup\{|H(a,b)|:a,b\ge 0\text{ and }a+b\le s\}.
	\end{equation}
	Note that $M(s)$ is nondecreasing in $s$:
	Let $0\le s_1\le s_2$ and define the index sets $R(s):=\{(u,v)\in [0,\infty)^2:u+v\le t\}$.
	Then $R(s_1)\subset R(s_2)$ because if $u+v\le s_1$ and $s_1\le s_2$, then $u+v\le s_2$ as well.
	By definition, this implies 
	\begin{equation}
		M(s_1)=\sup\{|H(u,v)|:(u,v)\in R(s_1)\}\le \sup\{|H(u,v)|:(u,v)\in R(s_2)\}=M(s_2).
	\end{equation}
	
	Fix $s\ge 0$ and $a,b\ge 0$ with $a+b=s$.
	From \cref{eqn:cost-equation-H},
	\begin{equation}
		|H(a,b)|\le \frac{1}{\ell+s}\left[\int_0^a|H(u,b)|\mathrm{d}u+\int_0^b|H(a,v)|\mathrm{d}v\right].
	\end{equation}
	For $u\in [0,a]$, the pair $(u,b)$ has the sum $u+b\le a+b=s$.
	Similarly, for $v\in [0,b]$, the pair $(a,v)$ has the sum $a+v\le s$.
	By definition of $M(s)$, this implies $|H(u,b)|\le M(s)$ and $|H(a,v)|\le M(s)$.
	Therefore,
	\begin{equation}\label{eqn:cost-inequality-Ms}
		|H(a,b)|\le \frac{1}{\ell+s}\left[aM(s)+bM(s)\right]=\frac{s}{\ell+s}M(s).
	\end{equation}

	Now take \cref{eqn:cost-inequality-Ms} and allow $s$ to vary over $[0,t]$.
	Since $M(s)$ is nondecreasing in $s$ and the function $t\mapsto \frac{t}{\ell+t}$ is nondecreasing for $t\ge 0$, it follows that for any $(a,b)$ with $a+b\le t$ (i.e. with $s:=a+b\le t$),
	\begin{equation}
		|H(a,b)|\le \frac{s}{\ell+s}M(s)\le \frac{t}{\ell+t}M(t).
	\end{equation}
	Taking the supremum over all $(a,b)$ with $a+b\le t$ yields
	\begin{equation}
		M(t)\le \frac{t}{\ell+t}M(t).
	\end{equation}
	Since $\frac{t}{\ell+t}<1$ for every $t\ge 0$, this inequality forces $M(t)=0$.
	Hence $H(a,b)=0$ for all $a,b$ with $a+b\le t$.
	By arbitrariness of $t$, we conclude $H(a,b)=0$ for all $(a,b)\in [0,\infty)^2$.
	Thus, $K(a,b)=F(a,b)$ for all $a,b\ge 0$ and $F$ is the unique solution.

	We now compute the expected number of proposals in the shrinkage procedure, unconditional on $a$ and $b$. Recall that we previously showed that the expected number of proposals conditional on the initial bracket $[-a,\ell+b]$ is given by \cref{eqn:F-solution}.
	Therefore, conditioning on $\ell$ and using the previously computed marginal distributions of $a$ and $b$,
	\begin{align}
		\E[N_\mathrm{shrink}\mid \ell]&=\E[F(a,b)\mid \ell]\\
		&=1+\E\left[\ln\left(1+\frac{a}{\ell}\right)\mid L\right]+\E\left[\ln\left(1+\frac{b}{\ell}\right)\mid L\right]\\
		&=1+2\E\left[\ln\left(1+\frac{T}{\ell}\right)\right]&&\text{(where $T\sim \mathrm{Uniform(0,w)}$)}\\
		&=1+\frac{2}{w}\int_0^w\ln\left(1+\frac{t}{\ell}\right)\mathrm{d}t\\
		&=1+\frac{2}{w}\left[(\ell+w)\ln\left(1+\frac{w}{\ell}\right)-w\right]\\
		&=1+2\phi\left(\frac{w}{\ell}\right),
	\end{align}
	where $\phi(u):=\frac{(1+u)\ln(1+u)-u}{u}$.

	Putting the above together, the expected total cost (stepping-out + shrinkage) conditional on $\ell$ 
	is
	\begin{equation}
		\E[N_\mathrm{evals}\mid \ell]=\E[N_\mathrm{out}\mid \ell]+\E[N_\mathrm{shrink}\mid \ell]=\frac{\ell}{w}+1+2\phi\left(\frac{w}{\ell}\right).
	\end{equation}
\end{proof}
\begin{remark}
	This does not depend on the geometry of the domain beyond $\ell$. It is valid for any compact set as long as we sample from the connected component containing the current point and use the randomised stepping-out + shrinkage procedure.
	For nonconvex sets, the algorithm still samples from the component containing the start. $\ell$ should be interpreted as that component's length along the chosen line. If we wanted to include other components on the line, costs and correctness would change.
\end{remark}

\begin{theorem}[Optimal slice width in a fixed slice]\label{thm:optimal-width-fixed-slice}
	Consider Hit-and-Run Slice Sampling with fixed width parameter $w>0$ on a single connected slice interval of deterministic length $\ell>0$.
	Then, the expected number of likelihood evaluations per step conditional on $\ell$ is
	\begin{equation}\label{eqn:fixed-slice-cost}
		C_\ell(w):=\E[N_\mathrm{evals}\mid \ell]=\frac{\ell}{w}+1+2\phi\left(\frac{w}{\ell}\right),
	\end{equation}
	where $\phi(u)=\frac{(1+u)\ln(1+u)-u}{u}$.
	Moreover, $C_\ell(w)$ has a unique minimiser $w_*$ given by
	\begin{equation}\label{eqn:fixed-slice-optimal-width}
		w_*=u_*\,\ell,
	\end{equation}
	where $u_*>0$ is the unique solution to
	\begin{equation}\label{eqn:fixed-slice-optimal-u}
		u_*-\ln(1+u_*)=\frac12,
	\end{equation}
	equivalently,
	\begin{equation}\label{eqn:fixed-slice-optimal-u-lambertW}
		u_*=-1-W_{-1}\!\left(-e^{-3/2}\right)\approx 1.357676674.
	\end{equation}
\end{theorem}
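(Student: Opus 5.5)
The cost formula \cref{eqn:fixed-slice-cost} is exactly \cref{lemma:hit-and-run-cost} specialised to deterministic $\ell$, so I will only treat the minimisation. First I reduce to one variable by setting $u=w/\ell>0$. Since
\begin{equation}
	\phi(u)=\frac{(1+u)\ln(1+u)}{u}-1,
\end{equation}
the cost becomes
\begin{equation}
	C_\ell(u\ell)=f(u):=\frac{1}{u}-1+\frac{2(1+u)\ln(1+u)}{u}.
\end{equation}
This depends on $u$ alone, so any minimiser in $w$ is of the form $w_*=u_*\ell$ with $u_*$ independent of $\ell$. This gives \cref{eqn:fixed-slice-optimal-width} once I show that $f$ has a unique minimiser on $(0,\infty)$.

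Next I differentiate. Using $\frac{d}{du}\frac{(1+u)\ln(1+u)}{u}=\frac{u-\ln(1+u)}{u^2}$, I get
\begin{equation}
	f'(u)=\frac{-1+2u-2\ln(1+u)}{u^2}=\frac{2}{u^2}\left(g(u)-\tfrac12\right),\qquad g(u):=u-\ln(1+u).
\end{equation}
The sign of $f'$ is therefore the sign of $g(u)-\frac12$. On $(0,\infty)$ we have $g(0)=0$, $g'(u)=u/(1+u)>0$, and $g(u)\to\infty$. So $g$ is a strictly increasing bijection of $(0,\infty)$ onto itself, and $g(u)=\frac12$ has exactly one root $u_*$. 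Moreover $f'<0$ on $(0,u_*)$ and $f'>0$ on $(u_*,\infty)$. Hence $u_*$ is the unique global minimiser; this is \cref{eqn:fixed-slice-optimal-u}. As a consistency check, $f(u)\to\infty$ both as $u\to0^+$ (from the $1/u$ term) and as $u\to\infty$ (logarithmic growth).

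Finally I derive the closed form \cref{eqn:fixed-slice-optimal-u-lambertW}. Put $y=1+u$. The equation $g(u)=\frac12$ becomes $y-\ln y=\frac32$, so $ye^{-y}=e^{-3/2}$, i.e. $(-y)e^{-y}=-e^{-3/2}$. Thus $-y=W(-e^{-3/2})$ for some real branch of Lambert $W$. Since $-e^{-3/2}\in(-1/e,0)$, there are two real solutions, $W_0\in(-1,0)$ and $W_{-1}<-1$. We need $y>1$, that is $-y<-1$, which selects the branch $W_{-1}$ and gives $u_*=-1-W_{-1}(-e^{-3/2})$. The numerical value $\approx1.3577$ is confirmed by evaluating $g$ or by Newton iteration.

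The only point needing care is this branch selection. It is settled by the range condition $y>1$ together with the uniqueness already established. Every other step is elementary calculus.
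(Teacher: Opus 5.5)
Your proof is correct and follows the paper's argument essentially step for step. Both substitute $u=w/\ell$, reduce the sign of the derivative to that of $g(u)-\tfrac12$ with $g(u)=u-\ln(1+u)$ strictly increasing, and select the $W_{-1}$ branch from the requirement $y=1+u>1$.
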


\begin{proof}
	The expression \cref{eqn:fixed-slice-cost} is \cref{eqn:total-cost-hit-and-run} from \cref{lemma:hit-and-run-cost}.
	Set $u:=w/\ell$ and rewrite
	\begin{equation}
		C_\ell(w)=c(u):=\frac{1}{u}+1+2\phi(u),
	\end{equation}
	so that the minimisation over $w>0$ is equivalent to minimisation over $u>0$.

	We compute derivatives. Writing $\phi(u)=\frac{(1+u)\ln(1+u)-u}{u}$, a direct calculation gives
	\begin{equation}\label{eqn:phi-derivative}
		\phi'(u)=\frac{u-\ln(1+u)}{u^2}.
	\end{equation}
	Therefore,
	\begin{equation}\label{eqn:fixed-slice-derivative}
		c'(u)=-\frac{1}{u^2}+2\phi'(u)
		=\frac{-1+2(u-\ln(1+u))}{u^2}.
	\end{equation}
	Hence $c'(u)=0$ if and only if $u$ satisfies \cref{eqn:fixed-slice-optimal-u}.

	To show uniqueness, define $g(u):=u-\ln(1+u)$ for $u\ge 0$.
	Then $g(0)=0$ and
	\begin{equation}
		g'(u)=1-\frac{1}{1+u}=\frac{u}{1+u}>0 \quad \text{for } u>0,
	\end{equation}
	so $g$ is strictly increasing on $(0,\infty)$.
	Since $g(u)\to\infty$ as $u\to\infty$, the equation $g(u)=\frac12$ has a unique solution $u_*>0$.
	Moreover, from \cref{eqn:fixed-slice-derivative}, the sign of $c'(u)$ is the sign of $-1+2g(u)$.
	Thus $c'(u)<0$ for $u<u_*$ and $c'(u)>0$ for $u>u_*$, so $u_*$ is the unique global minimiser of $c$.
	Consequently the unique minimiser of $C_\ell(w)$ is $w_*(\ell)=u_*\ell$.

	For the Lambert-$W$ expression, \cref{eqn:fixed-slice-optimal-u} is equivalent to
	\begin{equation}
		\ln(1+u)=u-\frac12
		\quad\Longleftrightarrow\quad
		(1+u)e^{-u}=e^{-1/2}.
	\end{equation}
	Let $y:=1+u$. Then $ye^{-y}=e^{-3/2}$, i.e. $(-y)e^{-y}=-e^{-3/2}$.
	By definition of the Lambert-$W$ function, $-y=W(-e^{-3/2})$.
	The relevant solution has $y>1$ and hence uses the $W_{-1}$ branch, yielding \cref{eqn:fixed-slice-optimal-u-lambertW}.
\end{proof}

\begin{remark}
	The constant $u_*$ is the optimal ratio $w/\ell$ for a deterministic slice length $\ell$.
	In contrast, the constant $\kappa_\infty$ in \cref{lemma:optimal-slice-width} is defined by minimising the \emph{unconditional} expected cost $\E[N_\mathrm{evals}]$, which averages over random chord lengths $\ell$ through $R_{\dimn}=\ell/\E[\ell]$.
	The two constants differ because $\E[\phi(w/\ell)]\neq \phi(w/\E[\ell])$ in general.
\end{remark}

\begin{theorem}[Optimal slice width in an ellipsoid]\label{lemma:optimal-slice-width}
	Let $E_{\dimn}=\{x\in \mathbb{R}^{\dimn}:x^T A_{\dimn}x\le 1\}$ be an ellipsoid with $A_{\dimn}\in \mathbb{R}^{\dimn\times \dimn}$ a symmetric positive definite matrix and define $\mu_{\dimn}=\frac{1}{\dimn}\mathrm{Tr}(A_{\dimn})$.
	Consider a sequence of such ellipsoids $(E_{\dimn})_{\dimn\ge 1}$ indexed by dimension.
	Let the spectrum of each matrix in the corresponding sequence $(A_{\dimn})_{\dimn\ge 1}$ satisfy the boundedness assumption in \cref{lemma:quadratic-form-expansion}.
	
	Suppose we run Hit-and-Run Slice Sampling in $E_{\dimn}$.
	Then, the optimal choice of $w$, given by the minimiser of the expected cost $\E[N_\mathrm{evals}]$, is
	\begin{equation}\label{eqn:optimal-slice-width-formula}
		w_*=4\kappa_\infty \sqrt{\frac{2}{\pi\mu_{\dimn} \dimn}}[1+o(1)],
	\end{equation}
	where $\kappa_\infty\approx 1.3035$ and $o(1)$ is a function such that $\lim_{\dimn\rightarrow\infty}o(1)=0$.
\end{theorem}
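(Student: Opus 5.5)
The plan is to average the conditional cost of \cref{lemma:hit-and-run-cost} over the random chord length $\ell_{\dimn}$. I will then rescale $w$ by $\E[\ell_{\dimn}]$, so that the first-order condition becomes exactly the fixed-point equation of \cref{lemma:fixed-point-contraction}.

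\textbf{Averaging the cost.} By the tower property,
\begin{equation}
\E[N_\mathrm{evals}]=\E\left[\frac{\ell_{\dimn}}{w}\right]+1+2\E\left[\phi\left(\frac{w}{\ell_{\dimn}}\right)\right].
\end{equation}
Write $w=\kappa\,\E[\ell_{\dimn}]$ and $R_{\dimn}=\ell_{\dimn}/\E[\ell_{\dimn}]$, so that $\E[R_{\dimn}]=1$. The cost then becomes a function of $\kappa$ alone:
\begin{equation}
c_{\dimn}(\kappa)=\frac{1}{\kappa}+1+2\E\left[\phi\left(\frac{\kappa}{R_{\dimn}}\right)\right].
\end{equation}

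\textbf{Simplifying $\phi$.} Using $\phi(u)=\frac{(1+u)\ln(1+u)}{u}-1$ with $u=\kappa/r$, we get
\begin{equation}
\phi(\kappa/r)=\ln(1+\kappa/r)+\frac{r}{\kappa}\ln(1+\kappa/r)-1 .
\end{equation}

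\textbf{First-order condition.} Differentiate under the expectation, using \cref{eqn:phi-derivative}: $\partial_\kappa\phi(\kappa/r)=\frac{1}{r}\phi'(\kappa/r)=\frac{r}{\kappa^2}\bigl(\kappa/r-\ln(1+\kappa/r)\bigr)$. This derivative is dominated since $0\le u-\ln(1+u)\le u$. Setting $c_{\dimn}'(\kappa)=0$ and multiplying by $\kappa^2$ gives
\begin{equation}
-1+2\bigl(\kappa\,\E[1]-\E[h_\kappa(R_{\dimn})]\bigr)=0,
\end{equation}
that is, $\kappa=\tfrac12+\E[h_\kappa(R_{\dimn})]=T_{\dimn}(\kappa)$, with $h_\kappa(r)=r\ln(1+\kappa/r)$ exactly as in \cref{lemma:fixed-point-contraction}.

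\textbf{Uniqueness of the minimiser.} The sign of $c_{\dimn}'(\kappa)$ equals the sign of $\kappa-T_{\dimn}(\kappa)$. At $\kappa=0$ this quantity equals $-\tfrac12<0$. For $\kappa\ge\tfrac12$, its derivative is $1-T_{\dimn}'(\kappa)\ge1-\rho>0$ by the contraction bound in that lemma. On $[0,\tfrac12)$ we have $\kappa-T_{\dimn}(\kappa)\le\kappa-\tfrac12<0$. Hence $\kappa-T_{\dimn}(\kappa)$ changes sign exactly once, at the fixed point $\kappa_{\dimn}$, which is therefore the unique global minimiser. Consequently $w_*=\kappa_{\dimn}\E[\ell_{\dimn}]$ exactly, for every $\dimn$.

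\textbf{Asymptotics.} \cref{lemma:chord-lengths} gives $R_{\dimn}\tod R_\infty=Q^{1/2}/\E[Q^{1/2}]$ and $\E[\ell_{\dimn}]=4\sqrt{2/(\pi\mu_{\dimn}\dimn)}[1+o(1)]$. \cref{lemma:fixed-point-contraction} then gives $\kappa_{\dimn}\to\kappa_\infty$. Since $\kappa_\infty>0$, we can write $\kappa_{\dimn}=\kappa_\infty[1+o(1)]$, and multiplying the two expansions yields \cref{eqn:optimal-slice-width-formula}.

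\textbf{Numerical constant.} It remains to evaluate $\kappa_\infty\approx1.3035$. Here $Q\sim\mathrm{Gamma}(\tfrac32,2)$ and $\E[Q^{1/2}]=2\sqrt{2/\pi}$. Solve $\kappa=\tfrac12+\E[h_\kappa(R_\infty)]$ numerically, either by iterating the contraction or by one-dimensional quadrature against the Gamma density. As a consistency check, $\kappa_\infty$ should lie near $u_*\approx1.358$.

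\textbf{Main obstacle.} The delicate point is justifying that the minimiser of the finite-$\dimn$ expected cost is exactly characterised by the fixed point. This requires differentiating under the expectation when $R_{\dimn}$ can be near $0$; the domination bound $0\le\partial_\kappa h\le1$, or equivalently $|\phi'(u)|\le 1/u$ times a bounded factor, handles this. It also requires the sign argument above, so that the rescaling $w=\kappa\E[\ell_{\dimn}]$ together with the cited lemmas closes the proof without extra uniform-integrability work. The computation of the constant itself is purely numerical.
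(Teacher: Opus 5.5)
Your proposal is correct and follows the paper's proof essentially step for step. You average the conditional cost from \cref{lemma:hit-and-run-cost}, reduce the first-order condition to the fixed-point equation $\kappa_{\dimn}=T_{\dimn}(\kappa_{\dimn})$ with $w_*=\kappa_{\dimn}\E[\ell_{\dimn}]$, conclude via \cref{lemma:fixed-point-contraction} and \cref{lemma:chord-lengths}, and obtain $\kappa_\infty$ numerically. The differences are cosmetic: you rescale by $\E[\ell_{\dimn}]$ before differentiating and get uniqueness from the contraction bound on $[\tfrac12,\infty)$, whereas the paper notes directly that $G'(w)=2\E[w/(\ell+w)]>0$. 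In your variables this says $\kappa-T_{\dimn}(\kappa)$ is strictly increasing on all of $[0,\infty)$, which avoids the case split.
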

\begin{proof}
	The unconditional expected cost $C(w)$ for a single global $w$ is given by
	\begin{equation}
		C(w):=\E[N_\mathrm{evals}]=\E[\E[N_\mathrm{evals}\mid \ell]]
		=\frac{\E[\ell]}{w}+1+2\E\left[\phi\left(\frac{w}{\ell}\right)\right],
	\end{equation}
	using \cref{lemma:hit-and-run-cost}.
	Taking the derivative
	gives
	\begin{equation}
		C'(w)=-\frac{\E[\ell]}{w^2}+\frac{2}{w^2}\E\left[w-L\ln\left(1+\frac{w}{\ell}\right)\right]
	\end{equation}
	which is zero when
	\begin{equation}\label{eqn:equation-for-w-star}
		G(w):=-\E[\ell]+2\E[w-L\ln\left(1+\frac{w}{\ell}\right)]=0.
	\end{equation}
	Note that $G(0)=-\E[\ell]<0$ and $G(w)\rightarrow\infty$ as $w\rightarrow \infty$.
	Also, $G'(w)=2\E\left[\frac{w}{\ell+w}\right]\in (0,2)$, so $G$ is strictly increasing. Therefore, a unique root $w_*$ to \cref{eqn:equation-for-w-star} exists.
	
	\cref{eqn:equation-for-w-star} can equivalently be written as the fixed-point equation
	\begin{equation}
		w_*=\frac{1}{2}\E[\ell]+\E\left[L\ln\left(1+\frac{w_*}{\ell}\right)\right].
	\end{equation}
	Define the dimensionless optimality constant $\kappa_{\dimn}$ via $w_*=\kappa_{\dimn}\E[\ell_{\dimn}]$. Substituting into the fixed-point equation for $w_*$ yields a fixed-point equation for $\kappa_{\dimn}$,
	\begin{equation}
		\kappa_{\dimn}=\frac{1}{2}+\E[R_{\dimn}\ln\left(1+\frac{\kappa_{\dimn}}{R_{\dimn}}\right)],
	\end{equation}
	where we defined the normalised chord length $R_{\dimn}:=\frac{\ell}{\E[\ell]}$.
	
	By \cref{lemma:fixed-point-contraction}, $\kappa_{\dimn}\rightarrow\kappa_\infty$ where $\kappa_\infty$ is the unique solution to
	\begin{equation}\label{eqn:fixed-point-equation-kappa-infty}
		\kappa_\infty=\frac{1}{2}+\E[R_\infty\ln\left(1+\frac{\kappa_\infty}{R_\infty}\right)].
	\end{equation}
	Equivalently, $\kappa_{\dimn}=\kappa_\infty+o(1)$ as $\dimn\rightarrow\infty$.
	
	Combining this with the large $\dimn$ expansion of $\E[\ell_{\dimn}]$ in \cref{lemma:chord-lengths} gives
	\begin{align}
		w_*&=\kappa_{\dimn}\E[\ell_{\dimn}]\\
		&=4\kappa_\infty \sqrt{\frac{2}{\pi\mu_{\dimn} \dimn}}[1+o(1)].
	\end{align}
	
	To compute $\kappa_\infty$, we numerically solve \cref{eqn:fixed-point-equation-kappa-infty} by fixed-point iteration.
	For this, recall that $R_\infty=\frac{Q^{1/2}}{\E[Q^{1/2}]}$, where the distribution of $Q$ is given in \cref{lemma:convergence-of-qd} and $\E[Q^{1/2}]=2\sqrt{\frac{2}{\pi}}$ (\cref{lemma:ratio-of-Z0s}).
	The integrand is then written as
	\begin{equation}
		f(\kappa):=\E\left[R_\infty\ln\left(1+\frac{\kappa}{R_\infty}\right)\right]
		=\frac{1}{\E[Q^{1/2}]}\int_0^\infty \sqrt{s}\ln\left(1+\frac{\kappa \E[Q^{1/2}]}{\sqrt{s}}\right) f_S(s) \mathrm{d}s,
	\end{equation}
		where $f_S$ is the $\mathrm{Gamma}(\frac32,2)$ density, given by \cref{eqn:gamma-density},
		and iterating 
	\begin{equation}
		\kappa_{\infty,n+1}=\frac{1}{2}+f(\kappa_{\infty,n}),
	\end{equation}
	for any initial guess $\kappa_{\infty,0}\ge 0$.
	This yields $\kappa_\infty\approx 1.3035$.
\end{proof}
\begin{remark}
	Since the expectation is taken over the target distribution (in our case, the uniform distribution on the domain), the global geometry of the level set enters via the mean chord length $\E[\ell]$.
	The previous result (\cref{lemma:hit-and-run-cost}) only depends on a particular slice, i.e. is valid locally.
\end{remark}

\begin{remark}
	Evidently, the optimal width in an ellipsoid scales as $O(\dimn^{-1/2})$. 
	We only needed the constant leading order term from the geometry of the ellipsoid (\cref{lemma:quadratic-form-expansion}) and the square-root scaling is a result of the concentration of the probability mass in a high-dimensional ball (\cref{lemma:ratio-of-Z0s}). This can be seen from the proof of \cref{lemma:ratio-of-Z0s}, wherein only results for the ball were used (i.e. the proof is independent of $A$).
	
\end{remark}

For completeness, we specialise the above theorem to the case of the ball in the following
\begin{corollary}[Optimal slice width in a ball]\label{cor:optimal-slice-width-in-a-ball}
	For a $\dimn$-dimensional ball of radius $R$, the optimal slice width is given by
	\begin{equation}
		w_*=4\kappa_\infty R \sqrt{\frac{2}{\pi \dimn}}[1+o(1)],
	\end{equation}
	where $\kappa_\infty$ is given in \cref{lemma:optimal-slice-width}.
\end{corollary}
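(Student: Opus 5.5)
The plan is to obtain the corollary as a direct specialisation of \cref{lemma:optimal-slice-width}. A ball of radius $R$ is the ellipsoid $E_{\dimn}=\{x:x^TA_{\dimn}x\le 1\}$ with $A_{\dimn}=R^{-2}I_{\dimn}$.

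The first step is to verify the hypothesis of \cref{lemma:quadratic-form-expansion}. Every eigenvalue of $A_{\dimn}$ equals $R^{-2}$, so we may take $\lambda_-=\lambda_+=R^{-2}$. This gives a spectrum uniformly bounded away from $0$ and $\infty$ along the whole sequence, since $R$ is fixed independently of $\dimn$.

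The second step is to compute the mean eigenvalue:
\begin{equation}
\mu_{\dimn}=\frac{1}{\dimn}\mathrm{Tr}(A_{\dimn})=R^{-2}.
\end{equation}
Substituting into \cref{eqn:optimal-slice-width-formula} gives $\sqrt{2/(\pi\mu_{\dimn}\dimn)}=R\sqrt{2/(\pi\dimn)}$. Hence
\begin{equation}
w_*=4\kappa_\infty R\sqrt{\frac{2}{\pi\dimn}}[1+o(1)],
\end{equation}
with the same constant $\kappa_\infty$.

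There is no real obstacle; the only point worth a remark is consistency. In the isotropic case $q_{\dimn}=v^TA_{\dimn}v=R^{-2}$ is deterministic. The expansions of \cref{lemma:quadratic-form-expansion} therefore hold exactly, with no $O(1/\dimn)$ correction. The chord length reduces to $\ell_{\dimn}=2R\,Z_{\dimn}^{1/2}$, so the $o(1)$ term comes solely from \cref{lemma:ratio-of-Z0s} and \cref{lemma:fixed-point-contraction}.

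As a sanity check, one can alternatively rescale $x\mapsto x/R$. The cost formula \cref{eqn:total-cost-hit-and-run} depends only on $\ell/w$, so the optimal width scales linearly with $R$, which agrees with the result above.
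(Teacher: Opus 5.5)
Your proposal is correct and matches the paper's proof: apply \cref{lemma:optimal-slice-width} with $A_{\dimn}=R^{-2}I$, note that the constant spectrum trivially satisfies the boundedness assumption, and substitute $\mu_{\dimn}=R^{-2}$. Your extra remarks, that $q_{\dimn}=R^{-2}$ is deterministic so $\ell_{\dimn}=2R\,Z_{\dimn}^{1/2}$, and the scaling check $x\mapsto x/R$, are both correct but not needed for the corollary.
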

\begin{proof}
	This follows from \cref{lemma:optimal-slice-width} applied to $A_{\dimn}=\frac{1}{R^2}I$ with $I$ the identity matrix and $\mathrm{Tr}(A_{\dimn})=\frac{\dimn}{R^2}$. Note that the spectrum of $A_{\dimn}$ satisfies the boundedness assumption. 
\end{proof}

\subsection{Numerical tests}\label{app:hrss-numerical-tests}

\subsubsection{Optimal scaling}

In this section, we test the theoretically derived expressions for the computational cost of slice sampling (\cref{lemma:hit-and-run-cost}) and for the optimal slice width $w_*$ (\cref{lemma:optimal-slice-width}).

To test the former, we consider the simplest example of slice sampling, namely sampling the interval $[0,\ell]$ uniformly for $\ell=10$. Numerically, we first pick a random starting point $x_0\sim\mathrm{Uniform}(0,\ell)$. We then run the stepping out and shrinkage procedures according to \citet{nealslicesampling} and record the total number of steps. This allows us to compare directly with the theoretical prediction in \cref{eqn:total-cost-hit-and-run}.

To test the latter, we consider three different ellipsoids. Firstly, the unit ball, secondly the unit ball but with half of the axes shrunk by a factor of 10 (denoted by the semi-axis $s=0.1$) and thirdly the unit ball but with half of the axes shrunk by a factor of 100 (denoted by the semi-axis $s=0.01$).
Note that these ellipsoids, when extended to a sequence indexed by dimension $\dimn$ (i.e. for each $\dimn$, shrink half of the axes of the unit ball by the appropriate factor), satisfy the boundedness condition required for \cref{lemma:optimal-slice-width} to be applicable, i.e. the eigenvalues are always bounded uniformly in dimension and do not diverge to infinity or zero.
For each of these ellipsoids we use the theoretically obtained \cref{eqn:optimal-slice-width-formula} to calculate $w_*$.
For the numerical test, we scan a range of $w_*$ and calculate the computational cost by simulating slice sampling, and thus numericlly find the optimum $w_*$.

For both tests, the numerics agree with the theory, within the statistical error.

\subsubsection{Variance of the expected cost}\label{app:hrss-variance-of-expected-cost}

In this section, we numerically analyse the standard deviation of the number of steps of Hit-and-Run Slice Sampling.
The standard deviation is taken across i.i.d. samples in the domain under considertation, as well asl i.i.d. samples of the direction and any other source of randomness in the algorithm.

\cref{fig:std-number-of-steps} shows the standard deviation for both the ball and cube and anisotropic versions thereof. 
Importantly, we observe that the standard deviation remains of order 1 for a wide range of dimensions (here measured up to $\dimn=1000$).
For the cube, the standard deviation increases with dimension, albeit slowly.

\begin{figure}[ht]
	\begin{center}
		\includegraphics{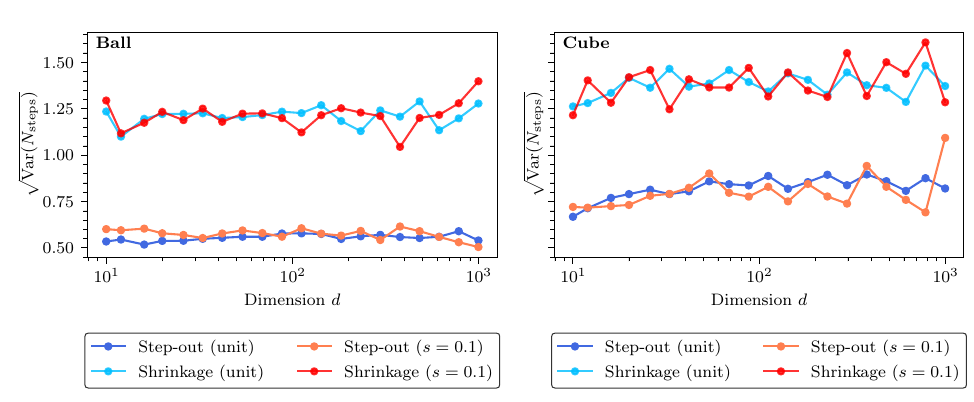}
		\caption{The standard deviation in the expected number of steps is of order $1$, for a wide range of dimensions and anisotropies. Here, we compute the total number of steps of the stepping out and shrinkage procedures in slice sampling and plot the standard deviation from all source of randomness of the algorithm. \textbf{Left:} Results for ellipsoids (unit ball and with anisotropy introduced by shrinking half of the axes by a factor of 10). \textbf{Right:} Results for the cube (unit cube and with anisotropy introduced by shrinking half the axes by a factor of 10).}
		\label{fig:std-number-of-steps}
	\end{center}
\end{figure}

\section{Comparison of constrained path samplers}\label{app:constrained-path-samplers}

In this section we review and numerically compare constrained path samplers.

\subsection{Review of constrained samplers}

Constrained path samplers implemented so far in the literature fall into two broad categories. Firstly, samplers which do not use gradients of the likelihood and secondly those which use gradients at the boundary to reflect the current position.
Reflection-based algorithms are variants of the method described in \citet{nealslicesampling}.

As there is no empirical comparison of constrained samplers to date between these categories and since it is therefore unknown how individual constrained samplers scale with dimension, the following were compared:
\begin{itemize}
	\item \gls{gmc} (Algorithm~\ref{alg:gmc})~\cite{skilling2012bayesian}, which uses gradients outside the boundary to reflect a particle and reverses its velocity if the reflected particle is still outside the boundary,
	\item A variant of Galilean Monte Carlo published in 2019 (\gls{gmc-2019}) (Algorithm~\ref{alg:gmc-2019})~\cite{skilling2019galilean}, which uses gradients inside the boundary and also reverses its velocity if the final point is rejected, however performs additional likelihood evaluations to remain in detailed balance,
	\item \gls{rss} (Algorithm~\ref{alg:rss})~\cite{nealslicesampling}, which is identical to Galilean Monte Carlo, except that the entire trajectory is rejected and the velocity re-randomised,
	\item \gls{ss}~\cite{smith1984efficient}, which does not use gradient information and instead constructs a straight line in a randomly chosen direction using the ``stepping out'' and ``shrinkage'' procedures in \citet{nealslicesampling}, from which a point is then chosen uniformly.
\end{itemize}

\gls{gmc} and \gls{ss} are implemented in the software packages \textsc{pymatnext} and \textsc{PolyChord}, respectively. \citet{lemos2023improvinggradientguidednestedsampling} implement Langevin dynamics with a drift term and a reflection update of the velocity as in \gls{rss}.

Each sampler has hyperparameters which must be tuned for optimal performance, i.e. to decorrelate as fast as possible from the initial point. To tune the step size $\epsilon$, \textsc{pymatnext} targets the acceptance rate along a trajectory, defined as the number of times a proposal is made inside the boundary divided by the trajectory length. The step size is increased or decreased by a factor of $1.25$ until the acceptance rate lies in the range $[0.25, 0.5]$. The velocity is re-randomised after $L_\mathrm{traj}=8$ steps. The total trajectory length, $NL_\mathrm{traj}$, is usually tuned manually with heuristics given in~\citet{partay2021nested}. In comparison, SS is self-tuning with respect to the width of the interval. \textsc{PolyChord} sets the number of velocity re-randomisations by default to $25n_\mathrm{dims}$. In the following, the tuning of the step size in \gls{gmc} and the \textsc{PolyChord} default setting for $N$ are adopted.

\begin{algorithm}[t]
	\caption{Galilean Monte Carlo~\citep{skilling2012bayesian}}\label{alg:gmc}
	\begin{algorithmic}[1]
		\Require Initial position $\mathbf{x}_0$, step size $\epsilon$, trajectory length $L_\mathrm{traj}$, number of trajectories $N$
		\State{$\mathbf{x}\gets\mathbf{x}_0$}
		\For{$n\gets 1,N$}
		\State Sample velocity $\mathbf{v}\sim\mathcal{N}(0,I)$
		\For{$i \gets 1,L_\mathrm{traj}$}\label{alg:gmc:line:flow-map-start}
		\State{$\mathbf{x}_1\gets \mathbf{x}+\epsilon\mathbf{v}$}
		\If{$L(\mathbf{x}_1) < L_\star$}\Comment{If proposed point is outside, try to reflect}
		\State $\hat{\mathbf{n}}\gets \frac{\nabla L(\mathbf{x}_1)}{|\nabla L(\mathbf{x}_1)|}$
		\State $\mathbf{v'}\gets \mathbf{v}-2(\mathbf{v}\cdot\hat{\mathbf{n}})\hat{\mathbf{n}}$
		\State $\mathbf{x}_2\gets\mathbf{x}_1+\epsilon\mathbf{v'}$
		\If{$L(\mathbf{x}_2) < L_\star$}\Comment{If reflected point is also outside, go back}
		\State{$(\mathbf{x},\mathbf{v})\gets (\mathbf{x},-\mathbf{v})$}
		\Else
		\State{$(\mathbf{x},\mathbf{v})\gets (\mathbf{x}_2,\mathbf{v'})$}
		\EndIf
		\Else
		\State{$(\mathbf{x},\mathbf{v})\gets (\mathbf{x}_1,\mathbf{v})$}\Comment{Keep moving forward}
		\EndIf
		\EndFor\label{alg:gmc:line:flow-map-end}
		\EndFor
	\end{algorithmic}
\end{algorithm}

\begin{algorithm}[t]
	\caption{Galilean Monte Carlo, 2019 variant~\citep{skilling2019galilean}}\label{alg:gmc-2019}
	\begin{algorithmic}[1]
		\Require Initial position $\mathbf{x}_0$, step size $\epsilon$, trajectory length $L_\mathrm{traj}$, number of trajectories $N$
		\State{$\mathbf{x}\gets\mathbf{x}_0$}
		\For{$n\gets 1,N$}
		\State Sample velocity $\mathbf{v}\sim\mathcal{N}(0,I)$
		\For{$i \gets 1,L_\mathrm{traj}$}
		\State{$N\gets L(\mathbf{x}+\epsilon\mathbf{v})\ge L_\star$}
		\If{$N$}
		\State{$(\mathbf{x},\mathbf{v})\gets ( \mathbf{x}+\epsilon\mathbf{v},\mathbf{v})$}\Comment{Go North}
		\Else
		\State $\hat{\mathbf{n}}\gets \frac{\nabla L(\mathbf{x})}{|\nabla L(\mathbf{x})|}$
		\State $\mathbf{v'}\gets \mathbf{v}-2(\mathbf{v}\cdot\hat{\mathbf{n}})\hat{\mathbf{n}}$
		\State{$E\gets L(\mathbf{x}+\epsilon\mathbf{v'})\ge L_\star$}
		\State{$W\gets L(\mathbf{x}-\epsilon\mathbf{v'})\ge L_\star$}
		\State{$S\gets L(\mathbf{x}-\epsilon\mathbf{v})\ge L_\star$}
		\If{$S$\textbf{ and }$(E\textbf{ and not }W)$}
		\State{$(\mathbf{x},\mathbf{v})\gets ( \mathbf{x},\mathbf{v'})$}\Comment{Aim East}
		\ElsIf{$S$\textbf{ and }$(W\textbf{ and not }E)$}
		\State{$(\mathbf{x},\mathbf{v})\gets ( \mathbf{x},-\mathbf{v'})$}\Comment{Aim West}
		\Else
		\State{$(\mathbf{x},\mathbf{v})\gets ( \mathbf{x},-\mathbf{v})$}\Comment{Aim South}
		\EndIf
		\EndIf
		\EndFor
		\EndFor
	\end{algorithmic}
\end{algorithm}

\begin{algorithm}[t]
	\caption{Reflective Slice Sampling~\citep{nealslicesampling}}\label{alg:rss}
	\begin{algorithmic}[1]
		\Require Initial position $\mathbf{x}_0$, step size $\epsilon$, maximum trajectory length $L_\mathrm{traj}$, number of trajectories $N$
		\State{$\mathbf{x}\gets\mathbf{x}_0$}
		\For{$n\gets 1,N$}
		\State Sample velocity $\mathbf{v}\sim\mathcal{N}(0,I)$
		\State{$A\gets \mathrm{True}$}\Comment{Accept the trajectory unless a proposed point steps outside twice}
		\State{$\mathbf{x}_1\gets\mathbf{x}$}\Comment{$\mathbf{x}_1$ is the current position of the proposed trajectory}
		\State{$i\gets 1$}
		\While{$i\le L_\mathrm{traj}$\textbf{ and }$A$}
		\State{$\mathbf{x}_1\gets \mathbf{x}+\epsilon\mathbf{v}$}
		\If{$L(\mathbf{x}_1)< L_\star$}
		\State{$\hat{\mathbf{n}}\gets \frac{\nabla L(\mathbf{x}_1)}{|\nabla L(\mathbf{x}_1)|}$}
		\State{$\mathbf{v'}\gets \mathbf{v}-2(\mathbf{v}\cdot\hat{\mathbf{n}})\hat{\mathbf{n}}$}
		\State{$\mathbf{x}_2\gets\mathbf{x}_1+\epsilon\mathbf{v'}$}
		\If{$L(\mathbf{x}_2)\ge L_\star$}
		\State{$(\mathbf{x}_1,\mathbf{v})\gets (\mathbf{x}_2,\mathbf{v'})$}
		\Else
		\State{$A\gets\mathrm{False}$}\Comment{Reject trajectory and re-randomise velocity}
		\EndIf
		\EndIf
		\If{$A$}
		\State{$\mathbf{x}\gets\mathbf{x}_1$}\Comment{Accept trajectory}
		\EndIf
		\EndWhile
		\EndFor
	\end{algorithmic}
\end{algorithm}

\subsection{Numerical comparison}\label{sec:current-work:evidence-calculation}

Using each of the above samplers, the evidence $Z$ was calculated in $d$ dimensions with a uniform prior on the cube $[-r,r]^d$, where $r=5.14$ is chosen as for the Rastrigin function~\cite{rastrigin1974systems}, and the test likelihood
\begin{equation}
	L_\alpha(\bm{\theta})=\alpha f(\bm{\theta})+(1-\alpha)g(\bm{\theta}),
\end{equation}
where 
\begin{align}
	f(\bm{\theta})&=\exp\left(-\frac{1}{2}|\bm{\theta}|^2\right),\\
	g(\bm{\theta})&=\exp\left(-A d+A\sum_{i=1}^d\cos(2\pi \bm{\theta}_i)\right),\ \ A=10.
\end{align}
The hyperparameter $\alpha\in[0,1]$ increases the multimodality and thus the difficulty of the sampling problem.
The evidence can be computed semi-analytically as
\begin{equation}
	Z_\alpha=\frac{1}{(2r)^d}\left\{\alpha \left[\int_{-r}^{r}\mathrm{e}^{-\frac12 x^2}\mathrm{d}x\right]^{d}+(1-\alpha)\mathrm{e}^{-Ad}\left[\int_{-r}^{r}\mathrm{e}^{A\cos(2\pi x)}\mathrm{d}x\right]^d\right\},
\end{equation}
and the remaining one-dimensional integrals can be performed numerically with quadrature.

\Cref{fig:evidence-different-samplers} shows the comparison in $d\in\{2,4,10,30,100,300\}$ dimensions for $1000$ live points. 
It is seen that \gls{ss} remains correct for all dimensions, whereas all reflection-based samplers (\gls{gmc}, \gls{gmc-2019}, \gls{rss}), deviate as the dimensionality increases. In particular, \gls{gmc} starts to systematically deviate in $d>30$ dimensions, \gls{gmc-2019} as well but not as strongly and \gls{rss} already produces the wrong evidence in four dimensions. Moreover, the evidence becomes systematically negative for all~$\alpha$. This indicates that at each \gls{ns} iteration, the density within the contour is not uniform but has an overdensity close to the boundary, leading to lower likelihood values on average, as explained in \citet{kroupa2025resonances}. This is consistent with the fact that the deviation for \gls{gmc-2019} is smaller because the trajectories never leave the boundary and thus move the density further to the centre of the level set.

\begin{figure}
	\begin{center}
	\centerline{\includegraphics[width=\textwidth]{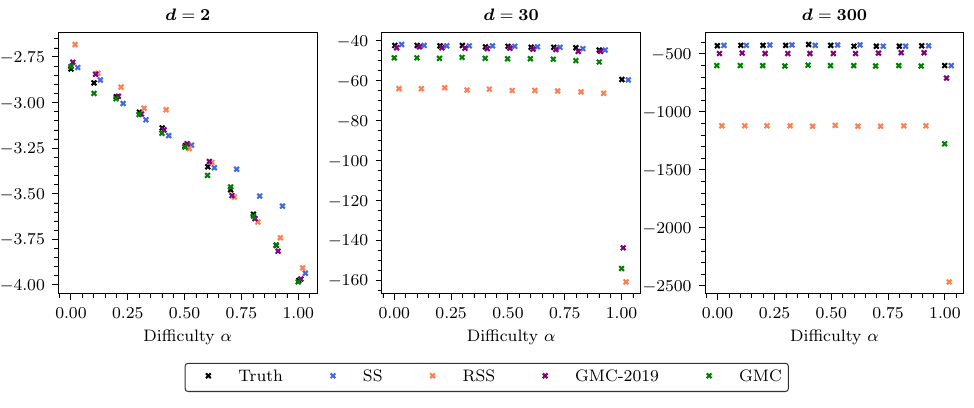}}
	\caption{Evidence $\ln Z$ against difficulty hyperparameter $\alpha\in\{0,0.1,\dots,0.9,1\}$ for different constrained samplers. Each subplot shows a different parameter space dimensionality. The black crosses are the true $\log Z$ values which all samplers must agree with. Small horizontal shifts of the $\log Z$ values are purely for visualisation. The values $\alpha=0$ and $\alpha=1$ correspond to a unimodal Gaussian and a highly multimodal likelihood, respectively. Notably, Slice Sampling (SS) remains correct in high dimensions while all other algorithms start to deviate in dimensions $d$ larger than $\approx 30$. Slight vertical offsets of the plotted points are purely for visualisation purposes.}
	\label{fig:evidence-different-samplers}
	\end{center}
\end{figure}

\end{document}